\documentclass[11pt, a4paper]{article}
\pdfoutput=1 
\usepackage[english]{babel}
\usepackage{appendix}
\usepackage{graphicx}
\usepackage[margin=1.2in]{geometry}
\graphicspath{{./img}}
\usepackage{amssymb}
\usepackage{amsthm}
\usepackage{amsmath}
\usepackage{thmtools, thm-restate}
\usepackage{mathtools}
\usepackage{mathrsfs}
\usepackage[notrig]{physics}
\usepackage{xcolor}
\usepackage{tikz}
\usetikzlibrary{quantikz}
\usepackage{fancybox}
\usepackage{floatrow}
\usepackage{mathtools}
\usepackage[format=plain,labelfont={bf,it},textfont=it]{caption}
\usepackage{subcaption}
\usepackage{algorithmicx}
\usepackage{algorithm,algpseudocode}
\usepackage{dsfont}
\usepackage{diagbox,multirow}
\usepackage{enumerate}
\usepackage{academicons}

\usepackage[%
backend=bibtex,
style=numeric-comp,
firstinits=true,
sorting=none,
]{biblatex}
\addbibresource{ms.bib}

\usepackage{hyperref} 
\usepackage[nameinlink,noabbrev]{cleveref}

\setlength{\parskip}{.7em}
\setlength\parindent{0pt}
\renewcommand*{\arraystretch}{1.2}

\DeclareMathAlphabet\euscr{U}{eus}{m}{n}

\newcommand{\sboqc}{\euscr{S}}
\newcommand{\sboqcp}{\euscr{S'}}

\newcommand{\vera}{V_{\mathcal A}}
\newcommand{\vero}{V_{\mathcal O}}
\newcommand{\res}{\euscr{R}}

\newcommand{\neig}[1]{N_{\mathcal G}(#1)}
\newcommand{\neigc}[1]{N_{\mathcal G}[#1]}
\newcommand{\invf}[1]{\mathop{}\!{invf}(#1)}
\newcommand{\conv}[1]{\pi_{\mathcal{#1}}}
\newcommand{\simb}{\sigma_{\mathcal B}}
\newcommand*{\scaleeq}[2][4]{\scalebox{#1}{\ensuremath{#2}}}%

\usetikzlibrary{decorations.markings,positioning,arrows,decorations.pathmorphing,shapes,patterns}
\usetikzlibrary{decorations.markings}
\tikzstyle{abo}=[rectangle,rounded corners,minimum width=0,minimum height=0,text centered, draw=black]
\tikzstyle{bb}=[dashed,rectangle,text centered,minimum width=2.7cm,minimum height=1.95cm,draw=black]
\tikzstyle{bbk}=[dashed,rectangle,text centered,minimum width=2.15cm,minimum height=1.65cm,draw=black]
\tikzstyle{bbkd}=[dotted,rectangle,text centered,minimum width=2.15cm,minimum height=1.65cm,draw=black]

\tikzset{split fill/.style args={#1 and #2}{path picture={
    \fill [#1] (path picture bounding box.south west)
      rectangle (path picture bounding box.north);
    \fill [#2] (path picture bounding box.south)
      rectangle (path picture bounding box.north east);
}}}

\colorlet{gay}{black!70!white}


\theoremstyle{plain}
\floatstyle{ruled}

\floatstyle{ruled}
\newfloat{protocol}{htbp}{lok}
\floatname{protocol}{Protocol}
\newfloat{subroutine}{htbp}{lok}
\floatname{subroutine}{Subroutine}
\crefname{algorithm}{Algorithm}{Algorithms}
\crefname{subroutine}{Subroutine}{Subroutines}
\crefname{protocol}{Protocol}{Protocols}
\crefname{definition}{Definition}{Definitions}
\crefname{theorem}{Theorem}{Theorems}
\crefname{corollary}{Corollary}{Corollary}
\newtheorem{conjecture}{Conjecture}
\crefname{conjecture}{Conjecture}{Conjecture}
\newtheorem{definition}{Definition}
\newtheorem{theorem}{Theorem}
\newtheorem{proposition}{Proposition}

\crefname{lemma}{Lemma}{Lemmas}

\newtheorem{example}{Example}
\crefname{example}{Example}{Examples}

\algnewcommand\algorithmicinput{\textbf{Input:}}
\algnewcommand\Input{\item[\algorithmicinput]}
\algnewcommand\algorithmicoutput{\textbf{Output:}}
\algnewcommand\Output{\item[\algorithmicoutput]}

\DeclareRobustCommand{\squared}[1]{\tikz[baseline=(char.base)]{
        \node[shape=rectangle,draw=gay,inner sep=0.5pt,fill=gay,text=white] (char)
{\normalfont #1};}}

\DeclareRobustCommand{\circled}[1]{\tikz[baseline=(char.base)]{
        \node[shape=circle,draw=gay,inner sep=0.5pt,fill=gay,text=white] (char)
{\normalfont #1};}}

\DeclareRobustCommand{\circledl}[1]{\tikz[baseline=(char.base)]{
        \node[shape=circle,draw=gay,inner sep=0pt,fill=gay,text=white, minimum size=10pt] (char)
{\ttfamily #1};}}

\addto\extrasenglish{    }

\makeatletter
\algrenewcommand\ALG@beginalgorithmic{\footnotesize}
\makeatother


\usepackage{tabularx,ragged2e,booktabs}
\newcolumntype{L}[1]{>{\raggedright\let\newline\\\arraybackslash\hspace{0pt}}m{#1}}
\newcolumntype{C}[1]{>{\centering\let\newline\\\arraybackslash\hspace{0pt}}m{#1}}
\newcolumntype{R}[1]{>{\raggedleft\let\newline\\\arraybackslash\hspace{0pt}}m{#1}}


\newcommand{\etal}{\textit{et al.}~}
\newcommand{\ie}{\textit{i}.\textit{e}.,~}
\newcommand{\eg}{\textit{e}.\textit{g}.,~}
\newcommand{\viz}{\textit{viz.}~}

\tikzset{
  vertex/.style={fill,draw,inner sep=0pt,label distance=1pt,minimum
  size=4pt,circle},
   clusterm/.style={
     column sep=.7cm, row sep=.7cm,
     matrix of nodes,
     nodes=vertex}
}
\newsavebox{\circbox}
\usetikzlibrary{fit,calc,positioning,decorations.pathreplacing,matrix}
\newcommand*{\tightdisplaymath}{\abovedisplayskip\z@\belowdisplayskip\z@}


\newcommand{\boxsize}{.3cm}
\newcommand{\boxabc}[1]%
  {\ovalbox{\text{\begin{minipage}{\boxsize}\centering #1\end{minipage}}}}

\DeclareMathOperator*{\OP}{%
\mathchoice%
{\ooalign{\raisebox{.05\height}{$\,\displaystyle\prod$}\cr 
{$\displaystyle\longrightarrow$}\cr\hidewidth{$\displaystyle\longrightarrow$}\cr\hidewidth
\raisebox{\height}{\scalebox{.7}{$\textstyle >$\!\!\!}}
{$\displaystyle\phantom{\prod}$}}}
{\ooalign{{$\,\textstyle\prod$}\cr 
    \scalebox{.9}{$\textstyle\longrightarrow$}\cr\hidewidth\scalebox{.9}{$\textstyle\longrightarrow$}\cr\hidewidth
\raisebox{\height}{\scalebox{.6}{\!\!\!$\textstyle >$}}\hidewidth
{$\textstyle\phantom{\prod}$}}}
  {\ooalign{\raisebox{.1\height}{\scalebox{.75}{$\scriptstyle\prod$}}\cr$\scriptstyle\rightarrow$\cr}}
  {\ooalign{\raisebox{.1\height}{\scalebox{.75}{$\scriptstyle\prod$}}\cr$\scriptstyle\rightarrow$\cr}}
}

\DeclareMathOperator*{\POP}{%
\mathchoice%
{\ooalign{\raisebox{.05\height}{$\,\displaystyle\prod$}\cr 
{$\displaystyle\longrightarrow$}\cr\hidewidth{$\displaystyle\longrightarrow$}\cr\hidewidth
\raisebox{\height}{\scalebox{.7}{$\textstyle \succ$\!\!\!}}
{$\displaystyle\phantom{\prod}$}}}
{\ooalign{{$\,\textstyle\prod$}\cr 
    \scalebox{.9}{$\textstyle\longrightarrow$}\cr\hidewidth\scalebox{.9}{$\textstyle\longrightarrow$}\cr\hidewidth
\raisebox{\height}{\scalebox{.6}{\!\!\!$\textstyle \succ$}}\hidewidth
{$\textstyle\phantom{\prod}$}}}
  {\ooalign{\raisebox{.1\height}{\scalebox{.75}{$\scriptstyle\prod$}}\cr$\scriptstyle\rightarrow$\cr}}
  {\ooalign{\raisebox{.1\height}{\scalebox{.75}{$\scriptstyle\prod$}}\cr$\scriptstyle\rightarrow$\cr}}
}

\hypersetup{
    colorlinks,
    linkcolor={blue!60!black},
    citecolor={green!50!black},
    urlcolor={red!50!black}
}

\newcommand{\orcid}[1]{\href{https://orcid.org/#1}{\includegraphics[width=8pt]{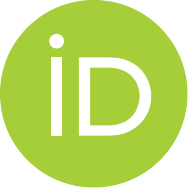}}}

\author{\orcid{0000-0003-0558-4685}\,Cica Gustiani\thanks{\texttt{cica.gustiani@materials.ox.ac.uk}}~${}^{1,3}$
    and \orcid{0000-0003-4332-645X}\,David P. DiVincenzo${}^{1,2}$
}  
\date{%
    {\small\em ${}^{1}$Institute for Quantum Information, RWTH Aachen University, D-52056 Aachen, Germany\\
        ${}^{2}$Peter Grünberg Institute, Theoretical Nanoelectronics, Forschungszentrum Jülich, D-52425 Jülich, Germany\\
     Jülich-Aachen Research Alliance (JARA), Fundamentals of Future Information Technologies, D-52425 Jülich, Germany\\
${}^{3}$Department of Materials, University of Oxford, Parks Road, Oxford OX1 3PH, United Kingdom}\\[2ex]
    \today
}

\title{Blind Oracular Quantum Computation}


\begin{document}
\maketitle
\begin{abstract}
In the standard oracle model, an oracle efficiently evaluates an unknown
classical function independent of the quantum algorithm itself. 
Quantum
algorithms have a complex interrelationship to their oracles; for
example the possibility of quantum speedup is affected by the manner by which oracles are implemented.
Therefore,
it is physically meaningful to separate oracles from their quantum algorithms, and we introduce one such separation here.
We define the {\em Blind Oracular Quantum Computation} (BOQC) scheme, in which the oracle is a distinct node in a quantum network.
Our work augments the client-server setting of quantum computing, in which a powerful quantum computer server is available on the network for discreet use by clients on the network with low quantum power. In BOQC, an oracle is another client that cooperates with the main client so that an oracular quantum algorithm is run on the server. The cooperation between 
the main client and the oracle takes place (almost)
without communication.  
We prove BOQC to be {\em blind}:
the server cannot learn anything about the
clients' computation. This proof is performed within the composable security definitions provided by the formalism of Abstract Cryptography.
We enhance the BOQC scheme to be runnable with
minimal physical qubits when run on a solid-state quantum network; we prove that this
scheme, which we refer to as BOQCo (BOQC-optimized), possesses the same security as
BOQC. 


\end{abstract}

\section{Introduction}
Oracle constructions in quantum algorithms provide an essential conceptual
framework for understanding quantum speedups.  The detailed interrelationship
between oracle properties and algorithmic efficiency is complex: an interesting
example arises in the Grover algorithm, where the quantum speedup becomes
impossible if the oracle has a small probability of failing on every
call~\cite{regev2008impossibility}.  Moreover, for some quantum algorithms,
adding internal dice to an oracle introduces a strong separation. For example, Simon's
algorithm, when the oracle has internal dice, is unsolvable on classical computers
while it is solvable in a linear time on quantum
    computers~\cite{harrow2011uselessness}.  In the standard model, a quantum
    oracle is specified as a unitary map $\ket{x,y}\mapsto\ket{x,y\oplus
    f(x)}$, where $f:\{0,1\}^n\rightarrow\{0,1\}$ indicates a subroutine whose
    code we cannot usefully examine or a ``black box'' whose properties we
    would like to estimate.\footnote{An equivalent oracle model, reducible to
   the standard oracle model, is the \emph{phase oracle}, represented as a map
    $\ket{x,y}\mapsto (-1)^{y.f(x)}\ket{x,y}$~\cite{zhandry2018how}.  In the
    standard oracle model, $f$ is a deterministic classical function; however,
    some generalizations are introduced
    in~\cite{meyer2009single,harrow2011uselessness}, where $f$ can be a
    probabilistic classical function.} While oracle constructions have been
    considered artificial, we aim to introduce and analyze a multiparty setting
    for which the oracle paradigm is physically meaningful. 

\begin{figure}[!t]
\includegraphics[scale=0.15]{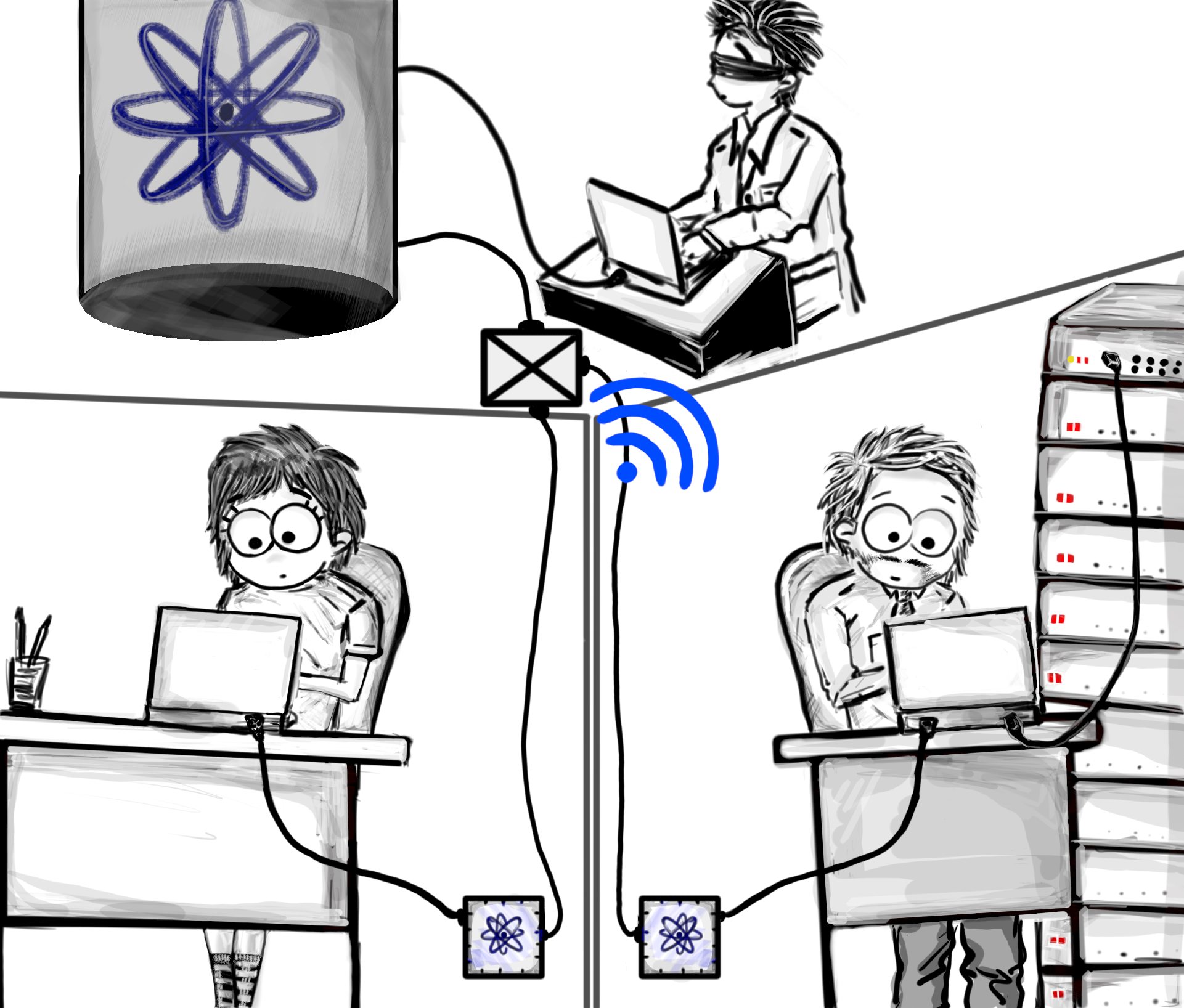}
\caption{\label{fig:cartoon}\cite{gustiani}
Alice (left) wants to run a private oracular quantum algorithm, but she
has neither a powerful quantum computer nor the knowledge needed to construct her
oracle function. Oscar (right) is equipped with a device that evaluates the oracle
functions, \eg it computes stocks' prediction, gives access to a database, or performs
massive classical computations. Bob (top), who owns a powerful quantum computer,
offers a quantum computing service. However, Alice and Oscar do not trust Bob's
quantum computer, and 
they do not risk revealing information about their computation while running it. 
Therefore, they execute the computation within the BOQC scheme in which Bob is
blinded to the computation.
All players are connected to an insecure classical channel (the wireless). 
Alice and Oscar still need a small quantum power --- the small boxes in the bottom, making their computation power
``almost classical.'' Such power is necessary because blindness
is impossible for a purely classical client~\cite{morimae2014impossibility}.
}
\end{figure}

Our view of the near-term situation in the development of quantum processing is
that there will be quantum computers of some moderate power (the {\em servers}
of the discussion below) owned by particular organizations which offer their
service on a quantum internet to clients with small quantum power.  We will
further assume that there will be nodes on the quantum internet, also with
modest quantum processing power, but in possession of some special information
or data --- the {\em oracles}. Consider \Cref{fig:cartoon} for an illustration.
We introduce a protocol where a client (Alice),
who is aware of these oracle resources on the network, needs to delegate her
oracular quantum algorithm to an untrustworthy server.  We refer to our scheme as
Blind Oracular Quantum Computation (BOQC). The BOQC protocol views the oracle
as a trustworthy third party.\footnote{We also consider a party with oracle
access as a party with more classical power.} We will consider the server quantum computer (Bob) to be an
untrustworthy party, therefore we adopt the concept of
\emph{blindness}~\cite{childs2001secure,broadbent2009universal,fitzsimons2017private,morimae2012blind},
where the server can learn nothing about the algorithm that is running and nothing
about the measurement outcomes.  Our protocol is set to run any of the family
of quantum oracular algorithms, an extensive catalog of which can be found
in~\cite{zoo}.

BOQC is an extension of \emph{Universal Blind Quantum Computation}
(UBQC)~\cite{broadbent2009universal}, enabling computation controlled by two
cooperating clients (the main client and the oracle) that are not interacting
during the execution of the computation. We will show that BOQC is
\emph{composably blind} using the \emph{Abstract Cryptography} (AC) framework
\cite{maurer2011abstract}; references closely related to our work are~\cite{dunjko2014composable,christopher}. UBQC allows the desired delegation
of computation by a client to an untrustworthy server, where the algorithms are
implemented within a measurement-based computation model, that is the
\emph{One-way Quantum Computer}
(1WQC)~\cite{raussendorf2000quantum,raussendorf2003measurement}.  The essential
resource of 1WQC is a highly entangled qubit state, \ie a graph state or a
cluster state.

The idea of our choice to use the 1WQC model for our protocol stems from an
important property of quantum maps over a graph state: combining two quantum
operators (graphs) is performed by simply connecting the graphs. Thus,
outsourcing a quantum operation to another party means outsourcing the
corresponding graph. This outsourcing is not so straightforward in the
conventional gate-based quantum computation model.

Initially, the 1WQC paradigm was
tailored to perform quantum computations on ultracold atoms in optical
lattice systems on which the resource state (cluster states) can be generated
efficiently~\cite{jaksch1999entanglement}. In the 1WQC paradigm, the computation
is done by systematically ``consuming'' the resource via measurements --- thus, it
is called a ``one-way computer.'' It turns out that the 1WQC paradigm can also
be efficiently performed on linear optics systems, whose qubits are memoryless.
Several small-scale 1WQC computations have been experimentally conducted on
linear optical
systems~\cite{walther2005experimental,prevedel2007high,chen2007experimental,tame2007experimental,barz2012demonstration,greganti2016demonstration}.


However, performing adaptive measurement using memoryless qubits remains
challenging. In our companion paper~\cite{gustiani2019three}, we have sought
possibilities beyond memoryless qubits by considering solid-state qubits,
since here performing adaptive measurements can be more feasible. However,
producing graph states is very demanding, \eg running an exact quantum search
algorithm within a BOQC scheme needs more than 90 qubits for searching one item
within five~\cite{gustiani2019three}.  Those motivate us to push forward the
implementations of 1WQC using solid-state qubits and to diminish the massive
requirement for physical qubits: we propose to prepare a graph state by parts,
in a ``just-in-time'' fashion in which qubits are prepared only when
needed.\footnote{The just-in-time fashion has the same principle as the
\emph{lazy computation} scheme that is common programming practice.} We call our scheme
\emph{lazy 1WQC}, for which we require qubits that possess permanence and can be
rapidly re-initialized. 

Note that, the reusing-qubits scheme has been analyzed in
in~\cite{houshmand2018minimal}, which obtains bounds on the number of physical qubits needed.
However, here, we provide lazy 1WQC as a clear-cut and explicit scheme that is
provably correct. Then, we integrate lazy 1WQC into the BOQC scheme,
producing BOQCo (BOQC-optimized). BOQCo is BOQC in which the number of physical
qubits is minimal, given that Bob's quantum computer is a solid-state system
whose qubits possess permanence and can be rapidly re-initialized. In the BOQCo
scheme, the exact search of one item within five exactly requires 4 physical
qubits~\cite{gustiani2019three}. 

\section{Preliminaries}
\subsection{Measurement-based computation and formalism}
\label{sec:1wqc}

Here we review \emph{one-way quantum computing}
(1WQC)~\cite{raussendorf2000quantum,raussendorf2003measurement}, the
measurement calculus~\cite{danos2007measurement}, and deterministic
computations made possible by the existence of
\emph{flow}~\cite{danos2006determinism}.  A computation within 1WQC is executed
by consecutively measuring qubits in a \emph{cluster state}: a highly entangled
quantum state, which can be efficiently parameterized by mathematical
graphs~\cite{briegel2009measurement}.  A cluster state corresponds to the
space-time layout of the quantum computer, and consecutive measurements define
quantum operations. In this study, we use measurement calculus to describe
processes within a 1WQC computation, and we use flow to specify
measurement-dependency structure.

A graph is used to represent the resource (cluster state) of a 1WQC
computation.  The graph's vertices represent qubits whose states are initially
in the $xy$-plane of Bloch sphere, and its edges represent CPHASE gates applied
to the corresponding nodes.  We will interchangeably use graph and graph state.
In particular, \emph{open graph} is used as a 1WQC resource, that is a triplet
$(\mathcal G, I, O)$, with a set of quantum input nodes $I$ and quantum output
nodes $O$ that may intersect, where $\mathcal G=(V, E)$ is a simple
graph\footnote{Simple graphs are a class of graphs without direction, without
self loops, and without multiple edges.} with a set of vertices $V$ and edges
$E$, $I\subset V$ and $O\subset V$, where $I\neq\varnothing$, and $O\neq\varnothing$. 
For a subset $K\subset V$, $\mathcal G[K]$
denotes the induced subgraph whose vertex set is $K$ and whose edge set is
those from $E$ whose endpoints are both in $K$ --- denoted as $E(K)$. 
Given a node $k$, we define the following
notations:
\begin{equation}\label{eq:neig}
    \begin{array}{ll}
        N_{\mathcal G}(k) & \mbox{as nodes adjacent to $k$ --- called \emph{open neighborhood},} \\
        N_{\mathcal G}[k] & \mbox{as nodes adjacent to $k$, including $k$ --- called \emph{closed neighborhood}.}
    \end{array}
\end{equation}

Another formalism that we use here is measurement
calculus on \emph{measurement patterns} (or patterns)~\cite{danos2007measurement} in order to describe processes
within the 1WQC scheme. A pattern comprises commands:
(i)~$N_j^{\theta}\coloneqq$ prepare qubit $j$ in state $\ket{+_{\theta}}$,
\begin{equation}
\ket{+_{\theta}}\coloneqq\frac{1}{\sqrt{2}}(\ket*{0}+e^{i{\theta}}\ket*{1}),
\end{equation}
(ii)~$E_{ij}\coloneqq$ apply CPHASE between qubits $i,j$.
(iii)~$M_j^\theta\coloneqq$ measure $j$ in the basis 
$\ket{\pm_\theta}$. Finally
(iv)~
$X^s_i,Z^s_i$ 
are Pauli corrections on qubit $i$ for signal $s$; that is, if $s=1$, the
corrections $X^1_i=X_i,Z^1_i=Z_i$ are done, and if $s=0$ no corrections are
done since $X^0_i=Z^0_i=\mathds{1}_i$. In addition, we denote $N_{Q}^\theta\coloneqq
\bigotimes_{i\in Q}N_{i}^{\theta_i}$ as preparing a set of qubits $Q$ accordingly.

Apart from those commands, we
introduce the following extra notations. First, $Z_i(\theta)$ signifies
rotation about $z$-axis on qubit $i$ with angle $\theta$; in particular 
\begin{equation}
    Z(\theta)\coloneqq\begin{pmatrix}1&0\\0&e^{i\theta}\end{pmatrix}.
\end{equation}
Second, given a graph
$\mathcal G=(V,E)$ with an ordering $>$ on the nodes, for any subgraph $\mathcal G[K]$ that $K\subset V$, 
\begin{equation}\label{eq:e_notation}
  E_K\coloneqq\prod_{(i,j)\in E(K)}E_{ij},\quad
  E_{iK} \coloneqq \prod_{k\in K} E_{ik}~\text{where}~i\not\in K,\quad
  E^{>}_{iK} \coloneqq\prod_{k\in K, k>i} E_{ik},
\end{equation}
where $E_{ij}$ is the entangling command (ii) in measurement calculus; $E_{ij}$ are all mutually commuting.

Note that Pauli corrections can be absorbed into measurement
angles~\cite{danos2007measurement}:
\begin{equation}\label{eq:correction}
    M_j^\theta X_j=M_j^{(-\theta)}\quad\quad M_j^\theta Z_j = M_j^{\theta+\pi},  
\end{equation}
where $-\theta$ and $\theta+\pi$ are understood to be evaluated modulo $2\pi$.

A set of angles $\theta=\{\theta_j\}_{j\in\mathbb N}$ specifies quantum
operations; angle $\theta_j$ can denote the parameter of a projective
measurement performed on node $j$ with projectors
\begin{equation}
    \{   ( \ketbra*{+_{\theta_j}}, \texttt{0} ),(\ketbra*{+_{\theta_j+\pi}}, \texttt{1})\},\label{basis}
\end{equation} 
where 
$\theta_j\in [0,2\pi)$.  The two projectors in (\Cref{basis}) will be reported as
outcomes \texttt{0} and \texttt{1} respectively. We also refer to such a 
measurement as ``measure in basis $\ket*{\pm_{\theta_j}}$''. An angle $\theta_j$ may be dependent on
signals $s_{<j}$, that is, measurement outcomes obtained previous to measuring
qubit $j$. It is inevitable that measurements introduce indeterminacies. Thus, adaptive measurements --- measurements which are depending on some
prior signals --- are performed.  The adaptive measurement in $\theta_j$ may be
$X_i$- or $Z_i$-dependent on a signal $i$.  If it is $X_i$-dependent,
$\theta_j$ is replaced by $(-1)^{s_i}\theta_j$; if it is $Z_i$-dependent,
$\theta_j$ is replaced by $\theta_j+s_i\pi$; these replacements are indicated 
in \Cref{eq:correction}. This dependency structure
is captured within the notion of {\em flow}.  It is worth mentioning that 
it is possible in 1WQC to perform measurements other than those in the
$xy$-plane, such as those in the $xz$- or
$yz$-plane~\cite{browne2007generalized}.  However, we consider only $xy$-plane
measurements here.

A \emph{flow} $f$ is a map from
the measured qubit to the prepared qubit. $f:O^c\rightarrow I^c$, where 
$A^c$ denotes the
complement of set $A$. More specifically for cluster states, $f(j)$ indicates the $X$
correction and $N_{\mathcal G}(f(j))$ indicate $Z$ corrections for the measured node
$j$.  By its definition, we will see that a flow induces a partial order
that covers $V$.  The state of an open graph $(\mathcal G,I,O)$ has flow if there exists
a map $f:O^c\rightarrow I^c$ together with a partial order $\succ$ over
nodes $V$ such that for all $j\in O^c$\cite{danos2006determinism,browne2007generalized}: 
\begin{equation} 
  \text{\bfseries (F0)}~(j,f(j))\in E,\quad
  \text{\bfseries (F1)}~f(j)\succ j,\quad
  \text{\bfseries (F2)}~\forall k\in N_{\mathcal G}(f(j))\setminus\left\{ j \right\}, k\succ j. 
\label{eqn:flow} 
\end{equation}
Hence, a 1WQC computation can be described with a set 
\begin{equation}
    \{(\mathcal G,I,O),f,\phi,\rho^{in}\},   
    \label{eq:description_1wqc_computation}
\end{equation}
where $(\mathcal G,I,O)$ denotes an open graph with flow $f$, $\phi$ signifies a set of measurement 
angles, and $\rho^{in}$ is a quantum state assigned to nodes $I$. 


Reference
\cite{danos2006determinism} also characterizes an interesting family of
patterns as follows. 

\begin{theorem}\label{thm:pattern}\cite{danos2006determinism}
Suppose the open graph state $(\mathcal G,I,O)$ has flow $(f,\succ )$,
then the pattern: 
\begin{equation} 
  \mathcal P_{f,\mathcal G,\succ,\vec{\theta}}\coloneqq \POP_{i\in
  O^c}\left (X^{s_i}_{f(i)}\left(\prod_{k\in N_{\mathcal G}(f(i)) \setminus \{i\}} Z_k^{s_i}\right)M_i^{\theta_i}\right)E_G N_{I^c}^0 \label{eq:pattern} \end{equation}
  is \emph{runnable}
and deterministic for all $\vec\theta$ and  $\vec s$. It realizes the 
isometry $\bigotimes_{i\in
O^c}\bra{+_{\theta_i}}_iE_{\mathcal G}N^0_{I^c}$, where $E_{\mathcal G}\coloneqq
\prod_{(i,j)\in \mathcal G} E_{ij}$.
\end{theorem}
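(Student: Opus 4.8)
The plan is to prove the statement in two stages: first \emph{runnability} (the command sequence is a well-defined pattern), and then \emph{determinism together with correctness} (every measurement branch implements the same isometry). For runnability I would check the standard definiteness conditions of the measurement calculus against the flow axioms \Cref{eqn:flow}. Reading \Cref{eq:pattern} from right to left, the preparations $N^0_{I^c}$ and the entangling layer $E_{\mathcal G}$ precede every measurement, so the only nontrivial point is that each correction acts on a qubit not yet measured and depends only on an already-produced signal. The correction attached to $M_i^{\theta_i}$ consists of $X_{f(i)}^{s_i}$ and of $Z_k^{s_i}$ for $k\in N_{\mathcal G}(f(i))\setminus\{i\}$; by (F1) we have $f(i)\succ i$, and by (F2) we have $k\succ i$, so every target is measured strictly later than $i$ (or lies in $O$), and every correction depends only on the earlier outcome $s_i$. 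Hence processing the qubits along any linear extension of $\succ$ yields a valid schedule, and the pattern is runnable.

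For determinism I would push the corrections through the entangling operator using the Pauli--CPHASE relations $Z_a E_{ab}=E_{ab}Z_a$ and $X_a E_{ab}=E_{ab}X_a Z_b$. Fixing a branch with outcomes $\vec s$, the measurement $M_i^{\theta_i}$ landing in outcome $s_i$ acts as $\bra{+_{\theta_i}}_i Z_i^{s_i}$, i.e.\ it produces a byproduct $Z_i^{s_i}$. The central computation is the single identity
\begin{equation}
X^{s_i}_{f(i)}\Big(\prod_{k\in N_{\mathcal G}(f(i))\setminus\{i\}}Z_k^{s_i}\Big)E_{\mathcal G}=E_{\mathcal G}\,X^{s_i}_{f(i)}\,Z_i^{s_i},
\end{equation}
which uses (F0), namely $(i,f(i))\in E$: conjugating $X_{f(i)}$ past $E_{\mathcal G}$ deposits a $Z_b$ on every neighbour $b$ of $f(i)$, hence a $Z_i$ on the measured qubit plus a $Z_k$ on each remaining neighbour, and the latter are cancelled by the explicit $Z_k^{s_i}$ corrections since $Z_k^2=\mathds 1$. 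Thus the correction attached to $i$ generates exactly the byproduct needed to undo the one created by measuring $i$. Carrying this out inductively along $\succ$ — peeling off the $\succ$-minimal measured qubit and absorbing any leftover $X$- and $Z$-byproducts that reach later measurements into their angles via \Cref{eq:correction} — collapses every branch to the $\vec s=\vec 0$ branch, which is manifestly $\bigotimes_{i\in O^c}\bra{+_{\theta_i}}_i E_{\mathcal G}N^0_{I^c}$; branch-independence of the output is precisely the assertion of determinism.

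The main obstacle I anticipate is not the local commutation identity but the bookkeeping of the partial order. Because each correction acts on qubits measured \emph{later}, commuting it rightward forces it through intervening measurement projectors, where it must be reabsorbed as an angle flip via \Cref{eq:correction} rather than simply cancelled against a byproduct. Making this rigorous requires the induction to respect $\succ$ and, crucially, that the dependency structure it induces be acyclic — which is exactly what the existence of the flow order $(f,\succ)$ guarantees. The delicate step on which the whole argument rests is verifying that (F1) and (F2) together rule out any circular dependency, so that the inductive peeling always has a well-defined minimal element whose corrections touch only strictly later qubits.
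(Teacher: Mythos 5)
This theorem is imported from Ref.~\cite{danos2006determinism}; the paper states it without proof, so your proposal can only be checked against the standard argument, which it essentially reconstructs. Your runnability argument is correct, though note that the acyclicity you worry about at the end is not something to be \emph{verified}: $\succ$ is given as a partial order as part of the flow, so \textbf{(F1)} and \textbf{(F2)} immediately give that every correction target is strictly later than $i$ and every signal is already available. Your central commutation identity is also correct and sign-free: conjugating $X_{f(i)}$ through $E_{\mathcal G}$ deposits a $Z$ on each neighbour of $f(i)$, the explicit $Z_k^{s_i}$ corrections cancel all of them except $Z_i^{s_i}$, and that surviving $Z_i^{s_i}$ cancels the byproduct $Z_i^{s_i}$ from the measurement branch $\bra{+_{\theta_i}}_iZ_i^{s_i}$.

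The one step you leave implicit is the one that actually finishes each branch: your identity leaves a residual $X_{f(i)}^{s_i}$ sitting immediately to the right of $E_{\mathcal G}$, i.e.\ acting on the preparation, and it is annihilated there because $f(i)\in I^c$ (so the qubit is prepared by $N^0_{f(i)}$) and $X\ket{+}=\ket{+}$. This must be said explicitly, because the alternative you mention --- ``absorbing leftover $X$-byproducts that reach later measurements into their angles via \Cref{eq:correction}'' --- is the wrong move here: pushing $X_{f(i)}^{s_i}$ into the angle of $M_{f(i)}$ would replace $\theta_{f(i)}$ by $(-1)^{s_i}\theta_{f(i)}$, reintroducing dependence on $s_i$ and contradicting the claim that every branch collapses to the $\vec s=\vec 0$ branch realizing the fixed isometry $\bigotimes_{i\in O^c}\bra{+_{\theta_i}}_iE_{\mathcal G}N^0_{I^c}$. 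Equivalently, the clean way to organize the induction is to observe that the correction for $i$ times the measurement byproduct equals the graph-state stabilizer $K_{f(i)}^{s_i}=\bigl(X_{f(i)}\prod_{k\in N_{\mathcal G}(f(i))}Z_k\bigr)^{s_i}$, which fixes $E_{\mathcal G}N^0_{I^c}$; injectivity of $f$ (\Cref{pro:acausal}) guarantees these stabilizers are attached to distinct nodes, and any anticommutation signs between corrections for different $i$ are global phases on a branch and do not affect the realized map. With that one line added, your proof is the standard one.
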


A pattern is said to be runnable~\cite{danos2006determinism} if it satisfies
the following conditions: (R0) no command depends on an outcome not yet
measured, (R1) no command acts on a qubit already measured or not yet prepared
(except preparation commands), and (R2) a qubit $i$ is measured (prepared) if
and only if $i$ is not an output (input).

~\cref{thm:pattern} provides a sufficient condition in which the existence of
flow guarantees a deterministic computation. Notice that in \Cref{eq:pattern}
we introduce the notation $\POP$, signifying an ordered product with respect to
any given ordering $\succ$.  


\subsection{Abstract cryptography}
\label{sec:ac}

\emph{Abstract Cryptography} (AC)~\cite{maurer2011abstract} is a mathematical
framework that we use to model the security of our protocols. The AC framework
ensures \emph{composability}: a secure protocol in AC is
guaranteed to be secure when composing a larger cryptographic system. AC is a
constructive cryptography~\cite{maurer2012constructive}, \ie we are
constructing a resource from other weaker resources. To assess the security
in such a construction, AC implements an ``ideal-world real-world'' paradigm
in which the distance of resources is used as the security metric.  Simply put,
we want to construct an ideal resource (resources with desired security in the
ideal world) from a real resource (resources found in the real world), while
minimizing the distance between the two worlds (or systems). Perfect (or
unconditional) security is achieved when both systems are completely
indistinguishable.

AC uses a top-down approach: it starts from the highest-level possible of
abstraction, then proceeds downwards, introducing in each level the minimum
necessary specializations. The framework defines a \emph{system} as an abstract
object with interfaces. There are two types of systems: \emph{resources} and
\emph{converters}.  A \emph{resource} is a system with a set of interfaces
$\mathcal I=\{A, B, C,\dots\}$; each element $i\in\mathcal I$ is associated
with a player, which models how party $i$ can access the resource.  A
\emph{converter} is a system with two interfaces: an \emph{inside} interface
that is connected to a resource and the \emph{outside} interface that receives
inputs and gives outputs.

A \emph{(cryptographic) protocol} is a set of converters possessed by honest parties
$\{\pi_i\}$. Converters $\pi_i,\pi_j$ can be appended to a resource $\euscr R$
via their inside interfaces,\footnote{ Note that interfaces of $\euscr R$
contain $i$ and $j$.} forming a new resource $\pi_i\pi_j\euscr R$ with the
same set of interfaces.\footnote{The writing order is arbitrary, \ie
$\pi_i\pi_j\euscr R =\pi_j\pi_i\euscr R= \euscr R\pi_i\pi_j=\euscr R\pi_j\pi_i
= \pi_i\euscr R\pi_j =\pi_j\euscr R\pi_i$.} Resources and converters can be
instantiated with any mathematical object that follows the AC composition
defined in~\cite{maurer2011abstract}; in our case, we can instantiate them with
\emph{quantum combs}~\cite{chiribella2008quantum}. The quantum comb generalizes
quantum channels, mapping quantum circuits to quantum circuits rather than
quantum states to quantum states. Thus, here the compositions may be defined as
the operations of quantum combs as well.

A practical view of a protocol is seeing it as an
effort to obtain the ideal-world functionalities from the real-world
functionalities. 
Note that this framework does not capture the kind of
failure, the severity of failure, nor the cheating strategy.  Instead, we ask
if the defined ideal resource captures all the security features that we need;
otherwise, we define it differently.

Consider a three-party protocol $\pi$ with players Alice($\mathcal A$),
Oscar($\mathcal O$), and Bob($\mathcal B$), where Bob can be dishonest. There
are two security requirements that we want to achieve with the protocol $\pi$:
\emph{correctness} (or \emph{completeness}) and \emph{security} (or
\emph{soundness}). Correctness is a property that can be present only when all parties are honest (no
adversary present), defined in \Cref{def:correctness}.  Security is a property involving
the presence of an adversary (or adversaries), defined in
\Cref{def:security} for a cheating Bob.

\begin{definition}\cite{maurer2011abstract}
    Let $\pi=(\conv{A},\conv{O},\conv{B})$ be a protocol with no adversary and
    $\euscr {R,S}$ be resources. Protocol $\pi$ is $\varepsilon$-correct if it constructs an
    ideal resource within $\varepsilon$, viz., $\euscr R\xrightarrow{\pi,
    \varepsilon}\euscr S$, such that  \[d(\conv{ A}\conv{ O}\euscr{R}\conv{B},\euscr
    S)\leq\varepsilon.\]

    \label{def:correctness}
\end{definition}

\begin{definition}\cite{maurer2011abstract}
    Let $\pi=(\conv{ A},\conv{ O})$ be a protocol with an adversary $\mathcal B$
    and $\euscr {R,S'}$ be resources. Protocol $\pi$ is $\varepsilon$-secure, that is, it constructs an ideal 
    resource within $\varepsilon$, viz., $\euscr R\xrightarrow{\pi, \varepsilon}\euscr S'$, if there
    exists a converter $\simb$ (called \emph{simulator}), such that
        \[d(\conv{ A}\conv{ O}\euscr{R},\euscr{S'}\simb)\leq\varepsilon.\]
    \label{def:security}
\end{definition}

Here, $d$ signifies a pseudo-metric such that: $d(\euscr R,\euscr R)=0$,
$d(\euscr R,\euscr S)=d(\euscr S,\euscr R)$, and $d(\euscr R,\euscr S)\leq
d(\euscr R,\euscr T)+d(\euscr T,\euscr S)$.  If both definitions above are
fulfilled, we say protocol $\pi$ is $\varepsilon$-secure in producing tasks
defined in $\euscr S$, using resources $\euscr R$, and $\varepsilon$ is the
probability of failing.  For $\varepsilon=0$, we call the protocol $\pi$ 
unconditionally secure.

In \Cref{def:security}, with Bob being dishonest, an arbitrary system
(simulator $\simb$) is appended to $\euscr S'$ at the $\mathcal B$-interface in
order to make both systems ($\euscr S'$ and $\conv{A}\conv{O}\euscr R$)
comparable.  The system $\euscr S'\simb$ is also called a \emph{relaxation} of
$\euscr S'$~\cite{goldreich2006}, where the definition of simulator $\simb$ is
independent of Bob's cheating strategies.  A simulator assures that none of
these relaxations can be more useful to Bob than the ideal resource. As all
relaxations are defined in the ideal world, a real-world system is secure when
it is indistinguishable from at least one relaxation of the ideal system.

It now remains for us to practically specify ``distinguishing two resources''; here, we
use the notion of \emph{advantage}.\footnote{Advantage here defined as the
probability of guessing correctly minus guessing erroneously.} Two resources
$\euscr R$ and $\euscr S$ are indistinguishable if the rest of the world cannot tell whether it is interacting with $\euscr R$ or $\euscr S$; a
\emph{distinguisher} captures the rest of the world. One can think
of a distinguisher as a referee who has some access to and can freely interact
with an unknown system ($\euscr R$ or $\euscr S$). A distinguisher can read
outputs, give inputs, take the role of an adversary, generate an arbitrary
joint system, measure a joint system, and measure a
purification.\footnote{Involvement of purification systems is related to hiding
information explained in \cite{ambainis2009nonmalleable}.} The distinguisher
is then asked whether it is interacting with $\euscr R$ ($B=0$) or $\euscr S$
($B=1$). In this setting, the distance metric is called \emph{distinguishing
    advantage}.  Given $D$ as a random variable that signifies the
    distinguisher's guess, the distinguishing advantage is defined as 
    \begin{equation}
        \label{eq:distinguishing_advantage}
        \abs{\text{Pr}[D=0\mid B=0] - \text{Pr}[D=0\mid B=1]},
    \end{equation}
        which is the
        difference between guessing correctly and erroneously.  Perfect
        security is accomplished when the distinguishing advantage is zero.

\section{The BOQC}
\subsection{Pre-protocol}

This paper aims to provide a mechanism for a client to privately
delegate her oracular computation with the cooperation of a separate oracle
client. For this we propose our scheme, which we call Blind Oracular Quantum
Computation (BOQC). There are several BOQC protocols provided here, which
vary based on the used resources, \eg solid-state qubits, photonic qubits,
classical or quantum inputs, and based on whether the outputs are classical or quantum.

Consider the following illustration to give a picture of a situation in which a
BOQC will be used. Alice wants to run a private oracular quantum
algorithm,\footnote{Some examples of oracular algorithms can be found in
Ref~\cite{zoo}.} but she has neither a powerful quantum computer nor the
resources needed to construct her oracle function.  On the other hand, Oscar
has the information necessary to generate an oracle quantum circuit. Also, Bob
has a quantum computer, offered to clients as a service.  While building a
reliable quantum computer is very hard, in our idealized setting Alice and
Oscar will be motivated to use Bob's quantum computer.  However, they do not
wish to risk revealing information about their computation while running it on
Bob's quantum computer. Hence, they run the computation using the BOQC scheme.
Consider \Cref{fig:cartoon} that pictures this situation.

We assume that Alice and Oscar (the clients) possess the same level of quantum
power, and are always honest. The client-cooperation scheme is implemented with
minimal shared information, where Alice and Oscar are not communicating during
the protocol run; this is possible by the nature of an oracle in an algorithm,
which is a computation independent of the algorithm itself.  Since the
algorithm is run within the 1WQC scheme, Alice and Oscar represent their
algorithms as graphs and measurement angles. In particular, their preparations
are captured in \cref{def:prestep_boqc}, the pre-protocol steps, which are
performed before the protocol runs.

\begin{definition}\cite{gustiani}\label{def:prestep_boqc} Pre-protocol steps. Given that Oscar
agreed to provide oracle information for a quantum computation that Alice
wishes to run, the following steps are done via an authentic channel before
starting a BOQC protocol:

\begin{enumerate}[{\bfseries(B1)}]

    \item Alice and Oscar determine the size of bit string $b$; the string must
    be long enough to indicate all possible measurement angles $\phi_i$, $\psi_i\in\Omega$. The allowed set of angles is  $\Omega=\{ \pi k/2^{b-1}\}_{0\leq k < 2^b}$.

    \item Alice and Oscar join their graphs in the following way. Given that
    Alice's oracular quantum algorithm needs $k$ oracle queries, she marks each
    query as a black box on her graph $\mathcal A$. Given that Oscar's graph
    $\mathcal O$ is a graph with $k$ components, he sends Alice $\{\mathcal
    O,f_{\mathcal O}\}$. Alice joins their graphs with a connection $C$ by
    replacing each black box in $\mathcal A$ with a component of $\mathcal O$
    according to $C$; she obtains $\mathcal G=\mathcal A \cup_C \mathcal O$,
    and computes the total flow $(f,\succ)$ for the entire graph $\mathcal G$.
    The connection $C$ is valid when the resulting graph $\mathcal G$ has
    flow.\footnote{ See \Cref{exa:grover2} that shows the process of this step
    explicitly.}

    \item Alice determines a total ordering $>$ that is consistent with the
    partial ordering $\succ$.  This step is optional for the BOQC protocols
    (\Cref{pro:boqc,pro:boqcc}), but it is necessary for optimized protocols
    such as BOQCo protocols (\Cref{pro:boqco,pro:boqcocc}).

    \item Alice publicly informs Bob of $\{(\mathcal G, \tilde I, \tilde
    O),\vera,\vero, \succ, >, b \}$, where $\tilde I\subseteq I$ is a set of
    nodes assigned with quantum input $\rho^{in}$, $\tilde O\subseteq O$ is a
    set unmeasured nodes that will be sent to Alice, $\vera\equiv V(\mathcal
    A)$, and  $\vero\equiv V(\mathcal O)$.  If $\tilde I\neq \varnothing$ and
    $\tilde O\neq\varnothing$, then $\tilde I\cap\vero=\varnothing$ and $\tilde
    O\cap\vero=\varnothing$, respectively; simply put, Oscar does not give
    quantum input and does not receive quantum output.  The total order $>$ is
    given if step \textbf{(B3)} is executed.

\end{enumerate}
\end{definition}

Notice that we introduce $\tilde I$ and $\tilde O$ in addition to $I$ and $O$.
While variables $I$ and $O$ are required to define a 1WQC computation
(\Cref{eq:description_1wqc_computation}) such that one can describe the resulting computation
per \Cref{thm:pattern}, variables
$\tilde I$ and $\tilde O$ signify nodes whose quantum information is associated with
Alice. Set $\tilde I$ comprises nodes assigned with Alice's quantum input and
$\tilde O$ is a set of nodes whose state is received by Alice as the final output.
For a classical input $c=c_n\dots c_2c_1$, where $c_i\in\{0,1\}$, we can
consider it as quantum input with state $\prod_{i=1}^n\ketbra{c_i}$ in the
formalism used in \Cref{thm:pattern}.  Also, a classical output can be
considered as quantum output --- in the formalism used in \Cref{thm:pattern} ---
with a density of states whose matrix is diagonal.

\begin{example}\label{exa:grover2} 
    Alice wants to run a 2-qubit Grover algorithm and Oscar has the 4-element
    database. The algorithm comprises one oracle call and is implementable
    with circuit
    \[\scalebox{0.8}{\begin{quantikz}[row sep={2mm}]
        \lstick{$\ket 0_1$} &
        \gate[wires=2]{\mathscr A} &
        \gate[wires=2]{\mathscr O_f} &
        \gate[wires=2]{\mathscr D} & \meter{} \\
        \lstick{$\ket 0_2$} & & & & \meter{}
\end{quantikz}},\]
    where $\mathscr A$, $\mathscr O_f(\pi)$ and $\mathscr D(\pi)$ indicate
    preparation, oracle, and diffusion operators respectively. 
    The following steps show the steps in \Cref{def:prestep_boqc} to run a 2-qubit Grover algorithm;
    the algorithm (the graph state) is taken from~\cite{gustiani}.

First Alice and Oscar agree upon the bit string size, \eg $b=4$.  As Alice
does not need any quantum input and output, so she sets $\tilde I=\tilde
O=\varnothing$. Alice's graph state $\mathcal A$ contains a black box
indicating the oracle. Then, Oscar tells Alice his graph and flow $\{\mathcal
O,f_{\mathcal O}\}$, where $\mathcal O$ is a graph with one component.  The
following shows graphs of Alice and Oscar whose flows are indicated with
arrows.
\[\scalebox{0.9}{\begin{tikzpicture}[baseline=(current bounding box.center),font=\normalfont,scale=.7,auto,every node/.style={circle,inner
  sep=0pt,minimum size=12pt,draw=black,align=center}]
\coordinate (n3) at (1,1);
\coordinate (n4) at (1,0);
\coordinate (n5) at (2,1);
\coordinate (n6) at (2,0);
\coordinate (n) at (1,-.5);
\fill[black!70!white] (n) rectangle (2,1.4);
\node[] (n7) at (3,1){2};
\node[] (n8) at (3,0){1};
\node[] (n9) at (4,1){3};
\node[] (n10) at (4,0){4};
\draw[](n6) -- (n8);
\draw[](n9) -- (n10);
\draw[](n5) -- (n7);
\begin{scope}[decoration={markings,mark=at position 0.6 with {\arrow{>}}}]
\draw[postaction={decorate}](n7) -- (n9);
\draw[postaction={decorate}](n8) -- (n10);
\end{scope}
\begin{scope}[every node/.style={draw=none,align=center,inner sep=1pt}]
    \node[label=below:$(\mathscr{A})\mathscr{O}_f$] (o1) at (1.5,-1) {};
    \node[label=below:$\mathscr D$] (d1) at (3.5,-1) {};
  \draw[|-|] (1,-.7) -- (2,-.7);
  \draw[|-|] (3,-.7) -- (4,-.7);
\end{scope}
\end{tikzpicture}}\quad\eqqcolon\mathcal A
\hspace{2cm}
    \scalebox{0.9}{
    \begin{tikzpicture}[baseline=(current bounding box.center), font=\normalfont,scale=.7,auto,every node/.style={circle,inner
sep=0pt,minimum size=12pt,draw=black,align=center}]
\node[] (n2) at (0,1){6};
\node[] (n1) at (0,0){5};
\node[] (n3) at (1,1){7};
\node[] (n4) at (1,0){8};
\draw[](n2) -- (n1);
\begin{scope}[decoration={markings,mark=at position 0.6 with {\arrow{>}}}]
  \draw[postaction={decorate}](n2) -- (n3);
  \draw[postaction={decorate}](n1) -- (n4);
\end{scope}
\end{tikzpicture}}\quad\eqqcolon\mathcal O
\]
Here, $\mathcal A=(\{1,2,3,4\},\{(2,3),(1,4),(3,4)\})$ and $\mathcal
O=(\{5,6,7,8\},\{(5,6),(6,7),(5,8)\})$. Alice obtains the total graph $\mathcal
G=\mathcal A\cup_C\mathcal O$ with $C=\{(7,2),(8,1)\}$ and the total flow
$(f,\succ)$ as shown here:  
\[\scalebox{0.9}{\begin{tikzpicture}[baseline=(n10.center), font=\normalfont,scale=.7,auto,every node/.style={circle,inner
sep=0pt,minimum size=12pt,draw=black,align=center}]
\node[white, fill=black!70!white, draw=black!70!white ] (n3) at (1,1){6};
\node[white, fill=black!70!white, draw=black!70!white ] (n4) at (1,0){5};
\node[white, fill=black!70!white, draw=black!70!white ] (n5) at (2,1){7};
\node[white, fill=black!70!white, draw=black!70!white ] (n6) at (2,0){8};
\node[] (n7) at (3,1){2};
\node[] (n8) at (3,0){1};
\node[] (n9) at (4,1){3};
\node[] (n10) at (4,0){4};
\draw[](n3) -- (n4);
\draw[](n9) -- (n10);
\begin{scope}[decoration={markings,mark=at position 0.6 with {\arrow{>}}}]
\draw[postaction={decorate}](n3) -- (n5);
\draw[postaction={decorate}](n4) -- (n6);
\draw[postaction={decorate}](n5) -- (n7);
\draw[postaction={decorate}](n6) -- (n8);
\draw[postaction={decorate}](n7) -- (n9);
\draw[postaction={decorate}](n8) -- (n10);
\end{scope}
\node[draw=none,font=\normalfont] at (2.5,-.6){$\mathcal G$};
\end{tikzpicture}}
\quad
\scalebox{0.9}{
\begin{tikzpicture}[baseline=(n10.center), font=\small\normalfont,scale=.7,auto,every node/.style={circle,inner
sep=0pt,minimum size=12pt,draw=black,align=center}]
\node[white, fill=black!70!white, draw=black!70!white,label={[gray,anchor=east]1}] (n3) at (1,1){6};
\node[white, fill=black!70!white, draw=black!70!white,label={[gray,anchor=east]1}] (n4) at (1,0){5};
\node[white, fill=black!70!white, draw=black!70!white,label={[gray,anchor=east]2}] (n5) at (2,1){7};
\node[white, fill=black!70!white, draw=black!70!white,label={[gray,anchor=east]2}] (n6) at (2,0){8};
\node[label={[gray,anchor=east]3}] (n7) at (3,1){2};
\node[label={[gray,anchor=east]3}] (n8) at (3,0){1};
\node[label={[gray,anchor=east]4}] (n9) at (4,1){3};
\node[label={[gray,anchor=east]5}] (n10) at (4,0){4};
\draw[](n3) -- (n4);
\draw[](n9) -- (n10);
\begin{scope}[decoration={markings,mark=at position 0.6 with {\arrow{>}}}]
\draw[postaction={decorate}](n3) -- (n5);
\draw[postaction={decorate}](n4) -- (n6);
\draw[postaction={decorate}](n5) -- (n7);
\draw[postaction={decorate}](n6) -- (n8);
\draw[postaction={decorate}](n7) -- (n9);
\draw[postaction={decorate}](n8) -- (n10);
\end{scope}
\node[draw=none,font=\normalfont] at (2.5,-.6){$\mathcal G,\succ$};
\end{tikzpicture}},\]
with $I=\{5,6\}$,
$\tilde I=\varnothing$, $O=\{3,4\}$, $\tilde O=\varnothing$,
$\vera=\{1,2,3,4\}$, $\vero=\{5,6,7,8\}$, partial ordering
$\{5,6\}\succ\{7,8\}\succ\{1,2\}\succ\{3\}\succ\{4\}$, and a total ordering, \eg
$5<6<7<8<1<2<3<4$.  Finally, Alice publicly tells Bob $\{(\mathcal
G,\varnothing,\varnothing),\vera,\vero,>,\succ,4\}$.

In this example, neither Alice nor Oscar provide classical or quantum input, and
$I\subset \vero$; this is consistent with \Cref{def:prestep_boqc} since $\tilde
I=\varnothing$. Instead, here the input is implicit, \ie two zeros $\ketbra{00}$.
%
%
\end{example}
\subsection{The protocol}
\label{sec:boqc_protocol}

The BOQC protocol involves three players: Alice, Bob, and Oscar, indicated with
$\mathcal A$, $\mathcal B$, and $\mathcal O$, respectively.  In the language of
AC, we denote the protocol as $\pi_{boqc}=(\conv{A},\pi_{\mathcal
B},\conv{O})$ --- with an honest Bob --- which uses a real-world resource $\euscr
R$. The resource $\euscr R$ comprises three channels: a secure key channel
between Alice and Oscar, a two-way insecure classical channel between each
client and Bob, and one-way quantum channels between each client and Bob. If
Alice expects quantum outputs, a two-way quantum channel between Alice and Bob is
needed.

\begin{protocol}[!h]
    \caption{BOQC ($\pi_{boqc}=\{\conv{A},\conv{B},\conv{O}\}$)}
  \label{pro:boqc}
  \begin{algorithmic}[1]

      \Statex{\hspace{-2em}\bfseries Alice's input: $\{(\mathcal G, I, O),f,\phi,\rho^{in}_{\mathcal A}\}$}\Comment{$\tilde I=I$ and $\tilde O=O$}
  \Statex{\hspace{-2em}\bfseries Oscar's input: $\{\psi\}$}
  \Statex{\hspace{-2em}\bfseries Alice's output for an honest Bob:
  $\rho^{out}_{\mathcal A}=\mathcal E(\rho^{in}_{\mathcal A})$}

  \Statex{\itshape Assumptions and conventions: 
          \begin{enumerate}[(I)]
              \item Alice ($\mathcal A$) and Oscar ($\mathcal O$) have
                  performed pre-protocol steps in \Cref{def:prestep_boqc}; 
                  Bob knows $\{(\mathcal G,\tilde I,\tilde O),\vera,\vero,\succ,>,b\}$. 
                  Here, we set $\tilde I=I$ and $\tilde O=O$. Recall $\tilde O \cap \vero=\varnothing$
                  (quantum outputs are held by Alice) and $\Omega=\{\frac{\pi k}{2^{b-1}}\}_{0\leq k<2^b}$.

      \item $s_{\invf{i}}=0,\forall i\in I$ and $t_i=0,\forall i\in I^c$. 
          
      \item $\invf{i}\equiv f^{-1}(i)$,
          $z(i)\coloneqq\bigoplus_{k\prec i, i\in \neig{f(k)}}s_k$, and  $t(i)\coloneqq\bigoplus_{k\in I, i\in N_{\mathcal G}(k)}t_k$.
          \end{enumerate}}

  \Statex{\hspace{-2em}\bfseries \circled{0} Pre-preparation}

  \State{Alice and Oscar receive keys $r,t$ via a secure key channel,
  where $r_i\in\{0,1\},~i\in O^c$ and $t_j\in\{0,1\},~j\in I$.}

    \Statex{\hspace{-2em}\bfseries\centering \circled{1} State
    preparation}

    \For{\label{ln:boqc_loop_prepare}$i \in V \setminus O$ following partial ordering $\succ$}\Comment{it may follow ordering $>$}
    \If{$i\in \vera $} 
    \State{Alice chooses $\alpha_i\in\Omega$ at random.}
    \If{$i \in I$}

    \State{\label{ln:boqc_start_I}Alice applies $Z_i(\alpha_i)X^{t_i}_i$ to $\tr_{I\setminus i}[\rho^{in}_{\mathcal A}]$  and sends it to Bob.}
    \State {Alice updates angles: \vspace{-1em}
                    \begin{align*}
                    \phi_i =& (-1)^{t_i}\phi_i \\
                    \phi_j=&\phi_j+t_i \pi,~\forall j\in N_{\mathcal G}(i)\cap\vera.
        \end{align*}
  }
    \State{Oscar updates angles 
    $\psi_j=\psi_j+t_i \pi,~\forall j\in N_{\mathcal G}(i)\cap\vero.$
}\label{ln:boqc_end_I}
  \Else
      \State\parbox[t]{0.89\linewidth}{Alice prepares
        $\ket{+_{\alpha_i}}_i$ and sends it to Bob.}
      \EndIf
    \ElsIf { $i \in \vero$}
      \State{Oscar prepares $\ket{+_{\beta_i}}_i$ and sends it to Bob, where $\beta_i\in\Omega$ is chosen at random.}
    \EndIf
    \EndFor

    \State{For all $i\in O$, Bob prepares $\ket{+}_i$.}\label{ln:boqc_bob_prepare_out}

    \Statex{\hspace{-2em}\bfseries\centering \circled{2} Graph state formation}
    \State{Bob applies entangling operator $E_{\mathcal G}$ defined in \Cref{eq:e_notation}.}

    \Statex{\hspace{-2em}\bfseries \circled{3} Classical interaction and measurement}


    \For{$i \in V\setminus O$ which follows partial ordering $\succ$}\label{ln:boqc_loop3}

    \If{$i\in \vera$} 

    \State{\label{ln:a_phi}Alice computes $\phi'_i=(-1)^{s_{\invf{i}}}\phi_i+z(i)\pi$.}
    \State{\label{ln:a_delta}Alice computes $\delta_i\coloneqq \phi'_i + \pi r_i + \alpha_i$, and sends Bob $\delta_i$.} 

    \ElsIf { $i \in \vero$}
    \State{\label{ln:o_psi}Oscar computes $\psi'_i=(-1)^{s_{\invf{i}}}\psi_i+ z(i)\pi$.}
    \State{\label{ln:o_delta}Oscar computes $\delta_i\coloneqq \psi'_i + \pi r_i + \beta_i$, and sends Bob $\delta_i$.} 
    \EndIf

    \State{Bob measures qubit $i$ in $\ket{\pm_{\delta_i}}$ basis, then sends Alice and Oscar the outcome $\tilde s_i$.}

    \State{Alice and Oscar set $s_i=\tilde s_i\oplus r_i$.}\label{ln:flip}
    \EndFor

    \Statex{\hspace{-2em}\bfseries\centering \circled{4} Output
    transmission and correction}

    \State{Bob sends Alice output qubits $\rho^{out}_{\mathcal B}$ (all qubits $i\in O$).} 

    \State{Alice corrects the final output 
         $P(\rho^{out}_{\mathcal B})\eqqcolon\rho^{out}_{\mathcal A}$, where
         $P\equiv\bigotimes_{i\in O}X^{s_{\invf{i}}+t_i}_i
     Z_i^{z(i)+t(i)}$.} 

 \end{algorithmic}
\end{protocol}

\footnotetext[1]{It means the loop goes through all nodes in $V\setminus O$
following a total ordering $>$ that is consistent with $\succ$; the total ordering $>$ is 
implicit there.} 

\footnotetext[2]{Operator $\tr_j$ is a partial trace, where subsystem $j$ is traced-out; thus 
$\tr_{I\setminus i}[\rho^{in}_{\mathcal A}]$ indicates that subsystem $i$ remains, where $i\in I$.}

The BOQC protocol is provided in \Cref{pro:boqc}, in a clear-cut style with
explicit adaptive measurements. For generality, \Cref{pro:boqc} admits the case
in which Alice requires quantum input and quantum output; integrating classical
inputs and outputs to the protocol is straightforward, and it is also
discussed.  Note that symbol $\oplus$ is defined as modulo 2 addition.

Every usage of a resource in $\euscr R$ --- interactions with communication
channels --- is depicted in \Cref{fig:boqc_communication}, where a circled
number corresponds to the indicated part in \Cref{pro:boqc}.
\Cref{fig:boqc_communication} shows that Alice and Oscar alternately take over
the computation (transmitting qubits and giving commands with measurement
angles) without communicating with each other, apart from sharing a key in the beginning.

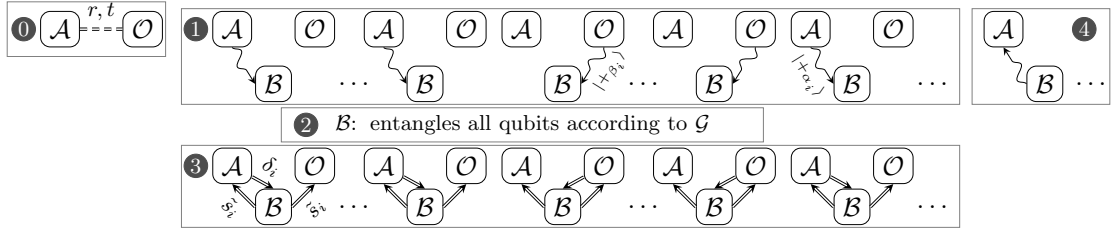
\begin{figure}[htbp]
  \centering
  \begin{minipage}{\textwidth}
    \resizebox{\columnwidth}{!}{
\begin{tikzpicture}

  \begin{scope}[>=stealth, every edge/.append
    style={decorate,decoration={snake, post length=.6mm}} ,every node/.append style={sloped,anchor=center}]

  \node (a11) at (-.5,  2.6) [abo] {$\mathcal A$};
  \node (o11) at (.7,2.6) [abo] {$\mathcal O$};
  \draw [-, double, densely dashed](a11.east) -- node[midway,above]{\footnotesize$r,t$}(o11.west);

  \node (a21) at (2,  2.6)[abo] {$\mathcal A$};
  \node (o21) at (3.2,2.6)[abo] {$\mathcal O$};
  \node (b21) at (2.6,1.8)[abo] {$\mathcal B$};
  \path [->] (a21.south) edge node [below, text width=.6cm]{}(b21.west);

  \node (d1) at (3.8,1.8) {\dots};

  \node (a31) at (4.2,  2.6)[abo] {$\mathcal A$};
  \node (o31) at (5.4,2.6)[abo] {$\mathcal O$};
  \node (b31) at (4.8,1.8)[abo] {$\mathcal B$};
  \path [->] (a31.south) edge node {}(b31.west);

  \node (a41) at (6.2, 2.6)[abo] {$\mathcal A$};
  \node (o41) at (7.4, 2.6)[abo] {$\mathcal O$};
  \node (b41) at (6.8, 1.8)[abo] {$\mathcal B$};
  \path [->] (o41.south) edge node [below, text
  width=.6cm]{\tiny$\ket*{+_{\beta_i}}$}(b41.east);; 

  \node (d2) at  (8,1.8) {\dots};

  \node (a51) at (8.4, 2.6)[abo] {$\mathcal A$};
  \node (o51) at (9.6, 2.6)[abo] {$\mathcal O$};
  \node (b51) at (9,   1.8)[abo] {$\mathcal B$};
  \path [->] (o51.south) edge node {}(b51.east);

  \node (a61) at (10.4, 2.6)[abo] {$\mathcal A$};
  \node (o61) at (11.6, 2.6)[abo] {$\mathcal O$};
  \node (b61) at (11  , 1.8)[abo] {$\mathcal B$};
  \path [->] (a61.south) edge node[midway,below] {\tiny$\ket*{+_{\alpha_i}}$} (b61.west);

  \node (d3) at (12.2,1.8) {\dots};

  \node (a71) at (13.2, 2.6)[abo] {$\mathcal A$};
  \node (b71) at (13.8, 1.8)[abo] {$\mathcal B$};

  \node (d3) at (14.5,1.8) {\dots};

  \path [->] (b71.west) edge node {}(a71.south);
\end{scope}

\node [text width=7cm](e1) at (7,1.2) {\footnotesize $\mathcal B$: entangles all qubits according to $\mathcal G$};

  \begin{scope}[font=\footnotesize,auto,every node/.style={circle,inner
    sep=0pt,minimum  size=6pt, draw=black,text width=6pt,align=center}]
    \node [draw=none](n0) at (-1.1, 2.6) {\circled{0}};
    \node [draw=none](n1) at (1.4, 2.6){\circled{1}};
    \node [draw=none](n2) at (3,1.2) {\circled{2}};
    \node [draw=none](n3) at (1.4,.6) {\circled{3}};
    \node [draw=none](n4) at (14.3,2.6) {\circled{4}};
  \end{scope}
  \begin{scope}[every node/.append style={draw, gray, shape=rectangle, minimum width=2.3cm, minimum height=.5cm, anchor=center}]
      \node(box0)[minimum height=.8cm] at ([xshift=2em]n0.east){};
      \node[minimum width=11.3cm, minimum height=1.4cm](box1) at ([xshift=5.3cm,yshift=-.4cm]n1.east){};
      \node(box2)[minimum width=7cm] at ([xshift=3cm]n2.east){};
      \node[minimum width=11.3cm, minimum height=1.2cm](box3) at ([xshift=5.3cm,yshift=-.3cm]n3.east){};
      \node(box4)[minimum height=1.4cm, minimum width=2cm] at ([xshift=-2em,yshift=-.4cm]n4.east){};
  \end{scope}

\begin{scope}[>=stealth,every edge/.append style={double}, every node/.append style={sloped,anchor=center}]
  \node (a1) at (2,   .6)[abo] {$\mathcal A$};
  \node (o1) at (3.2, .6)[abo] {$\mathcal O$};
  \node (b1) at (2.6, 0)[abo] {$\mathcal B$};
  \path [->](a1) edge node [above]{\footnotesize$\delta_i$} (b1.north);
 \path [->] (b1.west) edge node [below]{\footnotesize$\tilde s_i$} (a1.south);
 \path [->] (b1.east) edge node [below]{\footnotesize$\tilde s_i$} (o1.south);

 \node (d1) at (3.8,0) {\dots};

 \node (a2) at (4.2,.6)[abo] {$\mathcal A$};
 \node (o2) at (5.4,.6)[abo] {$\mathcal O$};
 \node (b2) at (4.8,0)[abo] {$\mathcal B$};
 \path [->] (a2) edge node {} (b2.north);
 \path [->] (b2.west) edge node {} (a2.south);
 \path [->] (b2.east) edge node {} (o2.south);

 \node (a3) at (6.2,.6)[abo] {$\mathcal A$};
 \node (o3) at (7.4,.6)[abo] {$\mathcal O$};
 \node (b3) at (6.8,0)[abo] {$\mathcal B$};
 \path [->] (o3) edge node [above]{} (b3.north);
 \path [->] (b3.west) edge node [below]{} (a3.south);
 \path [->] (b3.east) edge node [below]{} (o3.south);

 \node (d2) at (8,0) {\dots};

 \node (a4) at (8.4,.6)[abo] {$\mathcal A$};
 \node (o4) at (9.6,.6)[abo] {$\mathcal O$};
 \node (b4) at (9, 0)[abo] {$\mathcal B$};
 \path [->] (o4) edge node [above]{} (b4.north);
 \path [->] (b4.west) edge node [below]{} (a4.south);
 \path [->] (b4.east) edge node [below]{} (o4.south);

 \node (a5) at (10.4,.6)[abo] {$\mathcal A$};
 \node (o5) at (11.6, .6)[abo] {$\mathcal O$};
 \node (b5) at (11,  0)[abo] {$\mathcal B$};
 \path [->] (a5) edge node [above]{} (b5.north);
 \path [->] (b5.west) edge node [below]{} (a5.south);
 \path [->] (b5.east) edge node [below]{} (o5.south);

 \node (d3) at (12.2,0) {\dots};


\end{scope}

\end{tikzpicture}

  } \end{minipage} 
  
  \caption{ Interactions within BOQC scheme; these also describe the
    access to the inside interface of resource $\euscr R$. Initials $\mathcal A$,
    $\mathcal B$, and $\mathcal O$ indicate the interface of Alice, Bob, and Oscar
    respectively. The double dashed line
    indicates a secure key channel; the double lines indicate 
    classical channels, with arrow indicating direction of transmission; the wavy lines indicate
    quantum channels.  The circled numbers \circled{0}--\circled{4} correspond to the steps shown in
    \Cref{pro:boqc}. 
      } \label{fig:boqc_communication}
    \end{figure}%

In \Cref{pro:boqc} Bob receives input from Alice as $\{(\mathcal
G,I,O),f,\phi,\rho^{in}_{\mathcal A}\}$ and input from Oscar as $\{\psi\}$, where
$\mathcal G$ is the total graph with flow $f$, $I$ is a set of input nodes that will be
assigned with input state $\rho^{in}_{\mathcal A}$, $O$ is a set of output
nodes, $\phi$ is a set of measurement angles of Alice's nodes, and $\psi$ is a
set of measurement angles of Oscar's nodes. Prior to the protocol, we assume
pre-protocol steps in \Cref{def:prestep_boqc} have been successfully done.

\Cref{pro:boqc} comprises five steps: \circled{0} pre-preparation, \circled{1}
state preparation, \circled{2} graph state formation, \circled{3} classical
interaction and measurements, and \circled{4} output transmission and
correction. The interactions of players via a channel at each step are
depicted in \Cref{fig:boqc_communication}.  The protocol is initiated by
establishing a symmetric key between Alice and Oscar via a secure key channel
--- step \circled{0}. This step allows them to privately delegate their joint
computation: in particular, it allows them to know the actual measurement
outcomes so that their adaptive measurements can be computed independently.
Step \circled{1}--\circled{4} comprises the delegated computation process:
\circled{1} the clients send Bob the encrypted qubits, \circled{2} Bob
entangles the received qubits, \circled{3} the clients communicate to Bob the
measurement angles, and \circled{4} Bob sends the output qubits if necessary,
\ie quantum outputs.


The following lemma (\Cref{lem:correction}) shows that, for any
computation, the obtained pattern in \Cref{pro:boqc} without randomness is
identical to the obtained pattern in the 1WQC scheme.  The proof of
\Cref{lem:correction} is available in \Cref{app:proofs}.

\begin{restatable}{lemma}{lemcorrection}\cite{gustiani}\label{lem:correction}
    Suppose the open graph state $(\mathcal G,I,O)$ has flow $(f,\succ)$,
    then the following patterns $\mathfrak P_1,\mathfrak P_2$ are identical $\forall\phi_i$: 
  \begin{align}
    \mathfrak P_1\coloneqq& \POP_{i\in O^c} ( 
    X_{f(i)}^{s_i}
    \prod_{k\in\neig{f(i)}\setminus \{i\}} Z_k^{s_i}  
    M_i^{\phi_i}
)E_{\mathcal G}N_{I^c}^0 \label{eq:p1}
    \\ 
    \mathfrak P_2\coloneqq&
    \bigotimes_{j\in O}
    X_j^{s_{\invf{j}}}
    Z_j^{z(j)}
    \,\POP_{i\in O^c} ( 
    M_i^{\phi_i}
    X_i^{s_{\invf{i}}}
    Z_i^{z(i)}
)E_{\mathcal G}N_{I^c}^0,\label{eq:p2}
  \end{align}
  where $z(i)\coloneqq\bigoplus_{k\prec i,i\in \neig{f(k)}}s_k$, $\invf{i}\equiv f^{-1}(i)$, and 
  $\invf{i}=0$ for all $i\in I$. 
\end{restatable}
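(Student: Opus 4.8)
The plan is to prove the identity by tracking, qubit by qubit, the net Pauli correction that each pattern applies, and then showing that the two bookkeeping conventions coincide. Both $\mathfrak P_1$ and $\mathfrak P_2$ share the common tail $E_{\mathcal G}N^0_{I^c}$ and the same measurement commands $M_i^{\phi_i}$; they differ only in where the $X$- and $Z$-corrections sit and how they are grouped. In $\mathfrak P_1$ the corrections are \emph{pushed forward}: measuring $i$ deposits $X^{s_i}_{f(i)}$ on $f(i)$ and $Z^{s_i}_k$ on every $k\in\neig{f(i)}\setminus\{i\}$. In $\mathfrak P_2$ the corrections are \emph{collected} at each qubit $j$ as $X_j^{s_{f^{-1}(j)}}Z_j^{z(j)}$, placed immediately before $M_j^{\phi_j}$ (and, for $j\in O$, as the leftmost output factor $\bigotimes_{j\in O}$). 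So the task reduces to showing that, for each $j$, the collected correction equals the product of all corrections deposited on $j$.

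First I would fix a qubit $j$ and enumerate its incoming corrections in $\mathfrak P_1$. The $X$-contribution comes precisely from the unique $i$ with $f(i)=j$, i.e.\ $i=f^{-1}(j)$ (unique since $f$ is injective), giving exponent $s_{f^{-1}(j)}$, and none when $j$ is not in the image of $f$, consistent with the stated convention $f^{-1}(i)=0$ for $i\in I$. The $Z$-contributions come from every measured $i$ with $j\in\neig{f(i)}\setminus\{i\}$, giving total exponent $\bigoplus_{i:\,j\in\neig{f(i)}\setminus\{i\}}s_i$.

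The crux is the combinatorial identity
\[
  \{\,i\in O^c: j\in\neig{f(i)}\setminus\{i\}\,\}=\{\,k\in O^c: k\prec j,\ j\in\neig{f(k)}\,\},
\]
from which the $Z$-exponent above equals $z(j)=\bigoplus_{k\prec j,\,j\in\neig{f(k)}}s_k$. For the inclusion $\subseteq$: if $j\in\neig{f(i)}$ and $j\neq i$, then flow axiom \textbf{(F2)} of \Cref{eqn:flow} forces $j\succ i$, i.e.\ $i\prec j$. For $\supseteq$: if $k\prec j$ then $k\neq j$, and since $f(k)\in\neig{k}$ by \textbf{(F0)} one always has $k\in\neig{f(k)}$, so $j\in\neig{f(k)}$ with $j\neq k$ yields $j\in\neig{f(k)}\setminus\{k\}$. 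This matches the net correction on every qubit, and is the only place the flow hypothesis is used.

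Finally I would move the corrections into their $\mathfrak P_2$ positions. Corrections on distinct qubits have disjoint support, so they commute with one another and with every measurement on another qubit, and each can be slid freely along the ordered product until it reaches the position just before $M_j$ without ever crossing a measurement of its own qubit (preserving runnability); the mutually commuting $Z_j$-corrections then collapse to $Z_j^{z(j)}$ and the single $X_j^{s_{f^{-1}(j)}}$ joins them. The one delicate point is that regrouping the $X_j$ past intervening $Z_j$'s uses $X_jZ_j=-Z_jX_j$, so the grouped operator is only $\pm X_j^{s_{f^{-1}(j)}}Z_j^{z(j)}$; I would dispose of this branch-wise sign by invoking the absorption rules \eqref{eq:correction}, under which any Pauli product on $j$ obeys $M_j^{\phi_j}(\pm X_j^aZ_j^b)=\pm M_j^{(-1)^a\phi_j+b\pi}$, so both patterns induce the identical measurement behaviour on $j$ and hence realize the same isometry of \Cref{thm:pattern}, the residual global phase being immaterial to the pattern as a quantum operation. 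The main obstacle is exactly this last bookkeeping: certifying that the free commutations are simultaneously legitimate for all qubits and that the leftover signs never affect the realized operation, with the combinatorial identity above guaranteeing that the groupings land in precisely the $\mathfrak P_2$ form.
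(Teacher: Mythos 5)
Your proof is correct and follows essentially the same route as the paper's: both arguments fix a qubit $j$ and match the corrections deposited on $j$ by $\mathfrak P_1$ against those collected in $\mathfrak P_2$, using the flow axioms \textbf{(F0)}/\textbf{(F2)} to identify the accumulated $Z$-exponent with $z(j)$ and injectivity of $f$ for the $X$-exponent. You are in fact more explicit than the paper on two points --- the two-sided set identity for the $Z$-corrections and the anticommutation signs disposed of via \Cref{eq:correction} --- though you assert injectivity of $f$ where the paper derives it from the flow axioms as \Cref{pro:acausal}.
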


Patterns $\mathcal P_1$ and $\mathcal P_2$ give different points of view in
writing the corrections: Pattern $\mathcal P_1$ shows the obtained correction
from measuring $i$, while pattern $\mathcal P_2$ shows the corrections that
can be done before measuring $i$.\footnote{The significance of this point of
view is in our BOQCo protocol, to be shown later in \Cref{sec:boqco}. In this
point of view, the causality of BOQCo becomes apparent.} Those points of
view were first introduced in \cite{fitzsimons2017unconditionally}. See
\Cref{tab:corrections} for an illustration.		

\begin{table}[bth]
        \centering
        \scalebox{0.9}{
            \renewcommand*{\arraystretch}{0.9}
        \begin{tabular}{|c|c|c|c|c|c|}
\hline
node $i$ & $f(i)$ & $\neig{f(i)}$ &$\neig{f(i)}\setminus \{i\}$&$\invf{i}$ &  $z(i)$ \\  
\hline
1  &  4&   $ \{1,3\}$  &$\{3\}$&8 & $s_5$\\
2  &  3&   $ \{2,4\}$  &$\{4\}$&7 & $s_6$\\
3  &  $-$& $    -   $  &$  -  $&2 & $s_1\oplus s_7$ \\
4  &  $-$& $    -   $  &$  -  $&1 & $s_2\oplus s_8$\\
5  &  8&   $ \{1,5\}$  &$\{1\}$&$-$& $-$\\
6  &  7&   $ \{2,6\}$  &$\{2\}$&$-$& $-$\\
7  &  2&   $ \{3,7\}$  &$\{3\}$&6 & $-$\\
8  &  1&   $ \{4,8\}$  &$\{4\}$&5 & $-$\\
\hline
\end{tabular}}
\hspace{1em}
\scalebox{0.9}{
\begin{tikzpicture}[baseline=(n10.center), font=\small\normalfont,scale=.7,auto,every node/.style={circle,inner
sep=0pt,minimum size=12pt,draw=black,align=center}]
\node[white, fill=black!70!white, draw=black!70!white,label={[gray,anchor=east]1}] (n3) at (1,1){6};
\node[white, fill=black!70!white, draw=black!70!white,label={[gray,anchor=east]1}] (n4) at (1,0){5};
\node[white, fill=black!70!white, draw=black!70!white,label={[gray,anchor=east]2}] (n5) at (2,1){7};
\node[white, fill=black!70!white, draw=black!70!white,label={[gray,anchor=east]2}] (n6) at (2,0){8};
\node[label={[gray,anchor=east]3}] (n7) at (3,1){2};
\node[label={[gray,anchor=east]3}] (n8) at (3,0){1};
\node[label={[gray,anchor=east]4}] (n9) at (4,1){3};
\node[label={[gray,anchor=east]5}] (n10) at (4,0){4};
\draw[](n3) -- (n4);
\draw[](n9) -- (n10);
\begin{scope}[decoration={markings,mark=at position 0.6 with {\arrow{>}}}]
\draw[postaction={decorate}](n3) -- (n5);
\draw[postaction={decorate}](n4) -- (n6);
\draw[postaction={decorate}](n5) -- (n7);
\draw[postaction={decorate}](n6) -- (n8);
\draw[postaction={decorate}](n7) -- (n9);
\draw[postaction={decorate}](n8) -- (n10);
\end{scope}
\node[draw=none,font=\normalfont] at (2.5,-.6){$\mathcal G,\succ$};
\end{tikzpicture}}

\caption{\label{tab:corrections}Correction terms in \Cref{lem:correction} with
    graph $(\mathcal G,I,O)$ from \Cref{exa:grover2}.  In pattern $\mathcal P_1$, upon
    measuring $i$, $X$-correction goes to node $f(i)$ and $Z$-corrections go to
    nodes $\neig{f(i)}\setminus\{i\}$. In pattern $\mathcal P_2$, before
    measuring $i$, $X$-correction is performed based on measurement $\invf{i}$ and
    $Z$-correction based on measurements $z(i)$.  }
\end{table} 

Using \Cref{lem:correction}, we obtain \Cref{thm:pattern}, which states that an
algorithm run within the BOQC implements the same map as when the algorithm is
run directly within the 1WQC, without requiring Alice and Oscar to share any of
their secrets.  The proof of \Cref{thm:correctness_boqc} is available in
\Cref{app:proofs}.

\begin{restatable}{theorem}{thmcorrectnessboqc}\cite{gustiani}
    The BOQC protocol $\pi_{boqc}$ defined in \Cref{pro:boqc} delegates a
    computation with the isometry defined in \Cref{thm:pattern}
    for the same computation, without requiring Alice
    and Oscar to communicate their computations to each other.
  \label{thm:correctness_boqc}
\end{restatable}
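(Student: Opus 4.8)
The plan is to show that, with every party honest, each layer of randomness that \Cref{pro:boqc} injects ($\alpha_i,\beta_i$, the outcome pad $r_i$, and the input pad $t_i$) cancels exactly, so that the operations Bob actually performs realize the deterministic pattern $\mathfrak P_2$ of \Cref{lem:correction}. Since that lemma gives $\mathfrak P_2=\mathfrak P_1$ and $\mathfrak P_1$ is precisely the pattern of \Cref{thm:pattern}, the protocol realizes the isometry $\mathcal E$ of \Cref{thm:pattern} on the genuine input $\rho^{in}_{\mathcal A}$. I would carry this out by peeling the random layers off one qubit at a time, following the order $\succ$, and reading the absorption rules \Cref{eq:correction} backwards where convenient.

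First I would remove the preparation angles. A qubit prepared as $\ket{+_{\alpha_i}}_i=Z_i(\alpha_i)\ket{+}_i$ by Alice (or $\ket{+_{\beta_i}}_i$ by Oscar) is entangled by $E_{\mathcal G}$, a product of CPHASE gates and hence diagonal; as $Z_i(\alpha_i)$ is diagonal it commutes through $E_{\mathcal G}$. Using $\bra{+_{\delta_i}}Z_i(\alpha_i)=\bra{+_{\delta_i-\alpha_i}}$, Bob's measurement $M_i^{\delta_i}$ on the rotated preparation is identical to $M_i^{\delta_i-\alpha_i}$ on the bare $\ket{+}$; with $\delta_i$ computed in \Cref{ln:a_delta} (and its Oscar analogue \Cref{ln:o_delta}) one has $\delta_i-\alpha_i=\phi'_i+\pi r_i$, so the preparation angles drop out and leave $M_i^{\phi'_i+\pi r_i}E_{\mathcal G}N^0_{I^c}$. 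Next I would remove the outcome pad: by \Cref{eq:correction}, $M_i^{\phi'_i+\pi r_i}=M_i^{\phi'_i}Z_i^{r_i}$, and the extra $Z_i^{r_i}$ flips the recorded value, so Bob reports $\tilde s_i=s_i\oplus r_i$ where $s_i$ is the honest outcome of $M_i^{\phi'_i}$. The reassignment $s_i=\tilde s_i\oplus r_i$ in \Cref{ln:flip} inverts this flip, making $r_i$ logically inert and every downstream signal equal to its honest value. Finally, the adaptive angle $\phi'_i=(-1)^{s_{\invf{i}}}\phi_i+z(i)\pi$ of \Cref{ln:a_phi} is, again by \Cref{eq:correction}, exactly the command $M_i^{\phi_i}X_i^{s_{\invf{i}}}Z_i^{z(i)}$ — the per-qubit factor of $\mathfrak P_2$ in \Cref{eq:p2}. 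Collecting these identities over $i\in O^c$ in the order $\succ$ assembles $\mathfrak P_2$.

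It remains to treat the quantum input and its encryption. Alice sends $Z_i(\alpha_i)X_i^{t_i}\tr_{I\setminus i}[\rho^{in}_{\mathcal A}]$; the $\alpha_i$ part is handled as above, and I would propagate the Pauli pad $X_i^{t_i}$ through the entangler using $E_{ij}X_i=X_iZ_jE_{ij}$, which deposits a $Z_j^{t_i}$ on every neighbour $j\in\neig{i}$. The factor $X_i^{t_i}$ on $i$ itself is absorbed by the update $\phi_i\mapsto(-1)^{t_i}\phi_i$, and each deposited $Z_j^{t_i}$ by the update $\phi_j\mapsto\phi_j+t_i\pi$ (resp.\ $\psi_j\mapsto\psi_j+t_i\pi$) of \Cref{ln:boqc_start_I,ln:boqc_end_I}, once more via \Cref{eq:correction}. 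The portion of the pad that survives onto the output nodes is then cancelled by Alice's final correction $P=\bigotimes_{i\in O}X_i^{s_{\invf{i}}+t_i}Z_i^{z(i)+t(i)}$, whose exponents collect both the flow Pauli frame ($s_{\invf{i}},z(i)$) and the propagated encryption ($t_i$ on the qubit and $t(i)=\bigoplus_{k\in I,\,i\in N_{\mathcal G}(k)}t_k$ from its input neighbours). This yields $\mathfrak P_2$ acting on the true input $\rho^{in}_{\mathcal A}$, and by \Cref{lem:correction} $\mathfrak P_2=\mathfrak P_1$, which is the pattern of \Cref{thm:pattern}; hence the realized map is $\mathcal E$.

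I expect the main obstacle to be the Pauli-frame bookkeeping for the input encryption: verifying that, after commuting every $X_i^{t_i}$ through $E_{\mathcal G}$ and through all flow corrections, the residue on $O$ is exactly $\bigotimes_{i\in O}X_i^{t_i}Z_i^{t(i)}$, so that $P$ cancels it precisely — this requires care that the definitions of $z(i)$ and $t(i)$ match the accumulated corrections without double counting. One must also check the runnability conditions (R0)--(R2) under $\succ$, i.e.\ that each signal $s_{\invf{i}}$ and $z(i)$ is available before qubit $i$ is measured, which follows from the flow axioms \textbf{(F1)} and \textbf{(F2)} but should be stated explicitly. The no-communication claim is then immediate: each $\phi'_i$ (resp.\ $\psi'_i$) depends only on signals $s_k$ that both clients reconstruct from Bob's broadcast outcome $\tilde s_i$ and the shared key $r$, so neither client ever needs the other's secret angles $\phi$ or $\psi$.
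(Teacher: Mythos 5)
Your proposal is correct and follows essentially the same route as the paper: both cancel the randomness layers $\alpha,\beta,r,t$ using the absorption rules of \Cref{eq:correction} and the commutation of diagonal rotations through $E_{\mathcal G}$, reduce the honest execution to the pattern of \Cref{lem:correction}, and invoke \Cref{thm:pattern}, with the no-communication claim following from the shared key $r$ and broadcast outcomes. Your tracking of the $t_i$-pad (the deposited $Z_j^{t_i}$ on neighbours, the angle updates, and the residue cancelled by $P$ via $t(i)$) is somewhat more explicit than the paper's, which compresses this into the single identity $M_i^\theta X_i=M_i^{-\theta}$, but the substance is the same.
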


\subsection{The quantum power}
\label{sec:quantum_power}

Like UBQC, BOQC is a protocol that involves a clear separation of quantum power
between client and server, where a client is only capable of producing and transmitting
quantum states of the form $\ket{+_{\theta}}$, while the server is presumed to
posses unlimited quantum power.  While this formulation is exactly true for classical input
and output ($\tilde I=\tilde O=\varnothing$), a client needs higher quantum
power for quantum input and output, \ie $\tilde I\neq \varnothing$ or $\tilde
O\neq \varnothing$.  \Cref{tab:qpower} summarizes the minimal quantum power
requirement of running BOQC protocols with various input-output types. 

\begin{table}[th]
\centering
\scalebox{0.85}{
    \begin{tabular}{ |c |c | p{6cm}| p{6cm}|c|}
        \hline
        $I$ & $O$ & \multicolumn{1}{c|}{\textbf{Client} (Alice)} & \multicolumn{1}{c|}{\textbf{Server} (Bob)}
        & \multicolumn{1}{c|}{\textbf{Resource}} \\
    \hline
    $c$ & $c$ &  \textbf{(C1)} creates $\ket{+_\theta}$, then transmits it to Bob & 
    \textbf{(S1)} receives $\ket{+_\theta}$, performs CPHASE gates, 
    and measures qubits in the $xy$-plane bases  &  $\euscr{K,C,Q}$  \\
    $c$&$q$& \textbf{(C1)}, \textbf{(C2)} receives $\rho^{out}_{\mathcal B}$ from Bob, then performs Pauli correction on it&
    \textbf{(S1)}, \textbf{(S2)} create $\ket + \forall i\in O$ and sends Alice the final outputs $\rho^{out}_{\mathcal B}$
       & $\euscr{K,C,Q,Q}2$ 
    \\
     $q$&$c$& \textbf{(C1)}, \textbf{(C3)} creates quantum input $\rho^{in}_{\mathcal A}$, 
     performs a quantum one-time pad (\eg line~\ref{ln:boqc_start_I} in \Cref{pro:boqc}) then transmits it to Bob &
    \textbf{(S1)}, \textbf{(S3)} receives input states (arbitrary) from Alice 
    & $\euscr{K,C,Q}$ 
    \\
    $q$&$q$&  \textbf{(C1)}, \textbf{(C2)}, \textbf{(C3)}  & 
    \textbf{(S1)}, \textbf{(S2)}, \textbf{(S3)} & $\euscr{K,C,Q,Q}2$ 
    \\
   \hline
\end{tabular}}
\caption[Minimal quantum power requirements in the BOQC]{\cite{gustiani}Minimal quantum power 
    requirements of Alice and Bob to run a BOQC protocol. 
    Oscar's requirement remains \textbf{(C1)} for all cases.
    Initial ``$c$'' indicates ``entirely classical'' and initial ``$q$''
indicates ``entirely quantum''. Resources $\euscr{K,C,Q}$, and $\euscr{Q}2$ are
respectively signifying a secure key channel between Alice and Oscar, 
an insecure classical channel between each client and server,
a one-way quantum channel (each client to server), and a two-way quantum channel between Alice and Bob.
Other notations follow the notations in \Cref{pro:boqc}.}
\label{tab:qpower}
\end{table}

\Cref{pro:boqc} admits the general case, \ie $\tilde I=I$ and $\tilde O=O$. As
shown in \Cref{tab:qpower}, it has the highest requirement among all the cases. A
few adjustments from \Cref{pro:boqc} are needed if $\tilde I\subset I$ or
$\tilde O \subset O$.

Given an entirely classical input case $\tilde I=\varnothing$, \eg
Alice's input is a binary string $c=c_1c_2\dots c_n$, where $c_i\in\{0,1\}$,
lines \ref{ln:boqc_start_I}--\ref{ln:boqc_end_I} in \Cref{pro:boqc} turns into a
single line:
\[\text{``\emph{Alice prepares} $\ket{+_{\alpha_i + \pi c_i}}$'';}\]
recall that $Z\ket{+}=\ket{+_{\pi}}$.
Since the quantum one-time pad is unnecessary, the random 
string $t$ is omitted (or setting $t_i=0,\forall i$). Thus, requirements 
\textbf{(C3)} and \textbf{(S3)} vanish.

Now, when $\tilde O=\varnothing$ (entirely classical output), measurements will be performed on all
qubits $i\in V$ and Bob does not need to prepare state $\ket +$ himself
nor to send Alice the final outcome $\rho_{\mathcal B}^{out}$. 
The first removes requirement \textbf{(C2)}, which replaces the loop
in line~\ref{ln:boqc_loop3} with 
\[\text{``\textbf{for}  $i\in V$ which follows partial ordering $\succ$.''}\]
The latter eliminates requirement $\textbf{(S2)}$, which removes 
line~\ref{ln:boqc_bob_prepare_out} and removes step \circled{4} entirely.

Therefore, the lowest quantum power demand occurs for the entirely classical
input and output case, \ie $\tilde I=\tilde O=\varnothing$, requiring only
$\textbf{(C1)}$ and $\textbf{(S1)}$. We provide an explicit BOQC protocol for
entirely classical input and output in \Cref{pro:boqcc}, \Cref{app:protocols}.

\section{BOQC optimized: BOQC on solid-state qubits}\label{sec:boqco}
The 1WQC --- as well as BOQC --- can efficiently perform computations on
memoryless quantum computers, such as photonic qubits, which is shown by
successful experimental demonstrations on linear optics quantum computers: 1-
and 2-qubit gates~\cite{walther2005experimental}, 2-qubit Grover's
algorithm~\cite{chen2007experimental}, Deutsch's
algorithm~\cite{tame2007experimental}, blind quantum
computing~\cite{barz2012demonstration}, and verification of quantum
computations~\cite{barz2013experimental}. However, extending the experiments to 
perform more complex computations is very hard since the individual qubit control is
tricky in memoryless qubit systems.
Individual-qubit control on solid-state
systems are more promising, but scalability remains challenging.  This problem
motivates us to come up with an ``optimized'' version of the BOQC, which we
will call BOQCo (\emph{Blind Oracular Quantum Computation-Optimized}).

BOQCo allows us to perform BOQC algorithms with a minimal number of physical
(solid-state) qubits.  BOQCo is runnable on an appropriate
platform\footnote{These are a quantum-network platform in which Bob owns a
solid-state quantum system, \eg NV-center and trapped-ions.} whose qubits
possess permanence and can be rapidly re-initialized.

We prepare the graph in parts to minimize the number of physical qubits;
the qubits are initialized only when needed. We call such a strategy
\emph{lazy 1WQC} in which the server only needs to prepare the closed
neighborhood of the qubit about to be measured.\footnote{The ``lazy'' name is
inspired from a computational programming paradigm called \emph{lazy evaluation}.
Lazy evaluation means that the evaluation of an expression is delayed until the
value is needed~\cite{watt2004programming}} Note that such a graph preparation
has been introduced in~\cite{housmand2018} but restricted to graphs with
the same number of inputs and outputs ($\abs{I}=\abs{O}$); here, we extend it
to arbitrary graphs with flow.

\subsection{Lazy 1WQC computation}\label{sec:lazy}

The lazy 1WQC is a 1WQC-computation scheme that allows one to prepare parts of
the graph state as needed such that the number of physical qubits is minimal.
The lazy 1WQC scheme is shown in \Cref{alg:lazy} with input
\begin{equation}\{(\mathcal G^>,I, O),f,\phi,\rho^{in}\},\end{equation}
where $(\mathcal G^>,I,O)$ is 
an open graph with total
ordering $>$ that follows the flow $f$, $\phi$ is the set of measurement angles, and $\rho^{in}$
is the state assigned to $I$. Note that to describe a lazy 1WQC, one needs an additional parameter, total ordering $>$, compared with the description of a 1WQC computation in \Cref{eq:description_1wqc_computation}. That is, the user must settle on a total ordering $>$ beforehand.\footnote{Compare to BOQC or 1WQC, where one can
define the total ordering $>$ during computation} Any two valid total orderings
will result in the same computation and have same requirement on the number of physical qubits,
but they might require different coherence time for the qubits, as we have illustrated on our work on the Grover algorithm~\cite{gustiani2019three}. 

\begin{algorithm}[h]
  \caption{Lazy 1WQC computation}
  \label{alg:lazy}
  \begin{algorithmic}[1]
      \Input{$\{(\mathcal G^>,I,O),f,\phi,\rho^{in}\}$}
  \Output{$\mathcal E(\rho^{in})$}\Comment{see \Cref{thm:lazy_correctness}}

  \Statex{\hspace{-2em}\itshape Conventions: 
      \begin{enumerate}[(I)]
\item Partial order $\succ$ is induced by flow $f$.
\item $z(i)\coloneqq\bigoplus_{k<i, i\in \neig{f(k)}} s_k$, $\invf{i}\equiv f^{-1}(i)$.
    \item $s_{\invf{i}}=0$ for all $i\in I$. 
    \end{enumerate}}
    \State{Assign $\rho^{in}$ to the input nodes $I$.}\label{ln:lazy_assign_i}
  \For{$i \in V$ with ordering $>$} \label{ln:lazy_loop}
  \For{$k \in A(i)$}\label{ln:lazy_assign_i} \Comment{see \Cref{eq:ai}}
      \State{assign state $\ket +$ to node $k$}
  \EndFor\label{ln:lazy_assign_e}
  \State{Apply entangling operations $E_{i\neig{i}}^>$.}
  \If{$i\in O^c$}
  \State{$\phi_i'\coloneqq (-1)^{s_{\invf{i}}}\phi_i+ z(i)\pi$} 
  \State{Measure $i$ in basis $\ket{\pm_{{\phi'}}}$ 
  and obtain measurement outcome $s_i$.}
  \Else
  \State{Correct output $i$ applying $X_i^{s_{\invf{i}}}Z_i^{z(i)}$.}
  \EndIf
  \EndFor\label{ln:lazy_endloop}
  \end{algorithmic}
\end{algorithm}

\Cref{alg:lazy} shows the general case, where the input and output are
quantum.\footnote{This is comparable to $\tilde I=I$ and $\tilde O=O$ in the
BOQC.}  If the input is classical, one can trivially encode as a quantum state $\rho^{in}$
implemented as the following. Given the input state as a bit string $c$,
one can implement by setting the input nodes as $\ket{+_{c_i\pi}}$ for all $i\in I$.
If the output is classical, all nodes in $O$ will be measured, \ie the
loop in line \ref{ln:lazy_loop} is replaced with 
\[\text{\textbf{for} $i\in V$ with ordering $>$ \textbf{do}},\]
and the algorithm is terminated after line
\ref{ln:lazy_endloop}.  Consider \Cref{exa:lazy} to illustrate running a lazy
1WQC computation.

\begin{example} \label{exa:lazy}
    Given a computation $\{(\mathcal G^>,I,O),f,\phi,\rho^{in}\}$, where
    $\mathcal G=(V,E)$ for $V=\{1,2,3,4,5,6,7\}$ and
    $E=\{(1,3),(2,3),(2,4),(4,6),(4,5),(3,5),(3,7)\}$; state $\rho^{in}$ is
    assigned to input nodes $I=\{1,2\}$, output nodes $O=\{5,6,7\}$, and
    quantum inputs-outputs are expected. Running this computation in the lazy
    1WQC scheme is illustrated in \Cref{fig:exa_lazy}.
    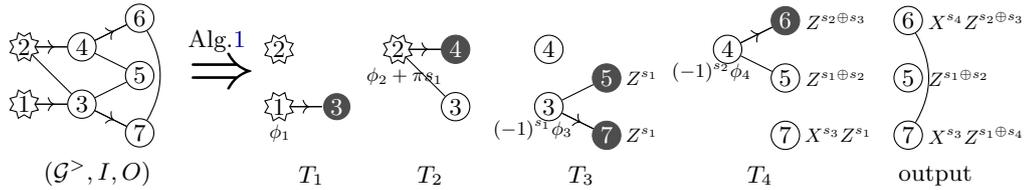
\begin{figure}[!h]
\resizebox{0.85\columnwidth}{!}{
\begin{tikzpicture}
 [font=\footnotesize,scale=.75,auto,every node/.style={circle,inner
  sep=.5pt,minimum size=5pt,draw=black,text width=7.0pt,align=center}]
\node[star,star points=8,star point ratio=0.7] (n1) at (0,0){1};
\node[star,star points=8,star point ratio=0.7] (n2) at (0,1){2};
\node[] (n3) at (1,1){4};
\node[] (n5) at (1,0){3};
\node[] (n4) at (2,1.5){6};
\node[] (n6) at (2,0.5){5};
\node[] (n7) at (2,-0.5){7};
\node[draw=none, label={above:Alg.\ref{alg:lazy}}] (na) at (3,0.5){\Huge$\Rightarrow$};
\draw[](n1) -- (n5);
\draw[](n2) -- (n3);
\draw[](n2) -- (n5);
\draw[](n3) -- (n4);
\draw[](n3) -- (n6);
\draw[](n4) to[out=-60,in=60] (n7);
\draw[](n5) -- (n6);
\draw[](n5) -- (n7);
\begin{scope}[decoration={markings,mark=at position 0.6 with {\arrow{>}}}]
\draw[postaction={decorate}](n1) -- (n5);
\draw[postaction={decorate}](n2) -- (n3);
\draw[postaction={decorate}](n3) -- (n4);
\draw[postaction={decorate}](n5) -- (n7);
\end{scope}
\node[draw=none] (ng) at (0.5,-1.2){$(\mathcal G^>,I,O)$};
\end{tikzpicture}
\hspace{1em}
\begin{tikzpicture}
 [font=\footnotesize,scale=.75,auto,every node/.style={circle,inner
  sep=.5pt,minimum size=5pt,draw=black,text width=7.0pt,align=center}]
  \node[label=below:\scalebox{0.8}{$\phi_1$}, star,star points=8,star point ratio=0.7] (n1) at (0,0){1};
\node[star,star points=8,star point ratio=0.7] (n2) at (0,1){2};
\node[draw=none] (n3) at (1,1){ };
\node[white, fill=black!70!white ] (n5) at (1,0){3};
\draw[](n1) -- (n5);
\begin{scope}[decoration={markings,mark=at position 0.6 with {\arrow{>}}}]
\draw[postaction={decorate}](n1) -- (n5);
\end{scope}
\node[draw=none] (ng) at (.5,-1.2){$T_1$};
\end{tikzpicture}
\begin{tikzpicture}
 [font=\footnotesize,scale=.75,auto,every node/.style={circle,inner
  sep=.5pt,minimum size=5pt,draw=black,text width=7.0pt,align=center}]
\node[draw=none] (n1) at (0,0){ };
\node[label={[yshift=-20pt,xshift=-8pt]\scalebox{0.8}{$\phi_2+\pi s_1$}},star,star points=8,star point ratio=0.7] (n2) at (0,1){2};
\node[white,fill=black!70!white, draw=black!70!white] (n3) at (1,1){4};
\node[] (n5) at (1,0){3};
\draw[](n2) -- (n3);
\draw[](n2) -- (n5);
\begin{scope}[decoration={markings,mark=at position 0.6 with {\arrow{>}}}]
\draw[postaction={decorate}](n2) -- (n3);
\end{scope}
\node[draw=none] (ng) at (.5,-1.2){$T_2$};
\end{tikzpicture}
\begin{tikzpicture}
 [font=\footnotesize,scale=.75,auto,every node/.style={circle,inner
  sep=.5pt,minimum size=5pt,draw=black,text width=7.0pt,align=center}]
\node[draw=none] (n1) at (0,0){ };
\node[draw=none] (n2) at (0,1){ };
\node[draw=none] (n4) at (2,1.5){ };
\node[] (n3) at (1,1){4};
\node[label={[xshift=-.6cm,yshift=-.7cm]\scalebox{0.8}{$(-1)^{s_1}\phi_3$}}] (n5) at (1,0){3};
\node[white,fill=black!70!white, draw=black!70!white,label=right:\scalebox{0.8}{$Z^{s_1}$}] (n6) at (2,0.5){5};
\node[white,fill=black!70!white, draw=black!70!white,label=right:\scalebox{0.8}{$Z^{s_1}$}] (n7) at (2,-0.5){7};
\draw[](n5) -- (n6);
\draw[](n5) -- (n7);
\begin{scope}[decoration={markings,mark=at position 0.6 with {\arrow{>}}}]
\draw[postaction={decorate}](n5) -- (n7);
\end{scope}
\node[draw=none] (ng) at (1.5,-1.2){$T_3$};
\end{tikzpicture}
\begin{tikzpicture}
 [font=\footnotesize,scale=.75,auto,every node/.style={circle,inner
  sep=.5pt,minimum size=5pt,draw=black,text width=7.0pt,align=center}]
\node[draw=none] (n1) at (0,0){};
\node[draw=none] (n2) at (0,1){};
\node[draw=none] (n5) at (1,0){};
\node[label={[xshift=-.6cm,yshift=-.7cm]\scalebox{0.8}{$(-1)^{s_2}\phi_4$}}] (n3) at (1,1){4};
\node[white,fill=black!70!white,draw=black!70!white,label=right:\scalebox{0.8}{$Z^{s_2\oplus s_3}$}] (n4) at (2,1.5){6};
\node[label=right:\scalebox{0.8}{$Z^{s_1\oplus s_2}$}] (n6) at (2,0.5){5};
\node[label=right:\scalebox{0.8}{$X^{s_3}Z^{s_1}$}] (n7) at (2,-0.5){7};
\draw[](n3) -- (n4);
\draw[](n3) -- (n6);
\begin{scope}[decoration={markings,mark=at position 0.6 with {\arrow{>}}}]
\draw[postaction={decorate}](n3) -- (n4);
\end{scope}
\node[draw=none] (ng) at (1.5,-1.2){$T_4$};
\end{tikzpicture}
\begin{tikzpicture}
 [font=\footnotesize,scale=.75,auto,every node/.style={circle,inner
  sep=.5pt,minimum size=5pt,draw=black,text width=7.0pt,align=center}]
  \node[draw=none] (n3) at (1,1){};
  \node[ label=right:\scalebox{0.8}{$X^{s_4}Z^{s_2\oplus s_3}$}   ] (n4) at (2,1.5){6};
\node[ label=right:\scalebox{0.8}{$Z^{s_1\oplus s_2}$}] (n6) at (2,0.5){5};
\node[ label=right:\scalebox{0.8}{$X^{s_3}Z^{s_1\oplus s_4}$}] (n7) at (2,-0.5){7};
\draw[](n4) to[out=-60,in=60] (n7);
\node[draw=none] (ng) at (2,-1.2){output};
\end{tikzpicture}
}
        \caption{
            Running \Cref{alg:lazy} with an open graph $(\mathcal G^>,I,O)$. The arrows indicate the flow of $\mathcal G$,
            the node number indicates total order $>$, $T_i$
            indicates the time step when measuring $i$, $s_i$ signifies
            the measurement outcome of measuring $i$, the grey nodes are fresh qubits
            initialized with state $\ket+$ before measuring $i$, and
            the Pauli correction is shown on the corresponding node.
            The highest number of physical qubits requirements is 4, occurring at time-steps $T_3$ and $T_4$.\label{fig:exa_lazy}}
    \end{figure}
\end{example}

The allocation of fresh physical qubits, which corresponds to grey nodes in
\Cref{exa:lazy}, occurs in \Cref{alg:lazy} in lines
\ref{ln:lazy_assign_i}--\ref{ln:lazy_assign_e}; these qubits are then
initialized with state $\ket +$. We denote such a set of nodes as 
\begin{equation}
    A(i)\coloneqq
    \neigc{i}\setminus(I\cup_{j<i}\neigc{j})\label{eq:ai},
\end{equation} 
which is a closed neighborhood, excluding the ones that have been assigned
before. We assume that the input nodes $i\in I$ are assigned with the desired
quantum input $\rho^{in}$ before the scheme starts.  As it is obvious that
$A(i)\subseteq V$, the lazy 1WQC scheme does not construct the whole graph
state at once.\footnote{There are certainly some cases where lazy 1WQC constructs
the whole graph, \eg graphs that are fully connected and star graphs.}

In the following, we establish the correctness of the lazy 1WQC.  That is, we
show that it results in the same computation as the standard 1WQC scheme. Then
we derive bounds on the number of physical qubits needed.

First, note that we can write the resulting pattern of \Cref{alg:lazy} by
consecutively placing the initialization, entanglement, Pauli-correction, and
measurement commands:
 \begin{equation}
   \mathcal P_{lazy}=
   \bigotimes_{j\in O}
   (X_j^{\invf{j}}Z_j^{z(j)})
   E_{O}\,
   \OP_{i\in O^c} M_i^{\phi_i}
   X_i^{\invf{i}}Z_i^{z(i)} 
   E^>_{i\neig{i}}N_{A(i)}^0,
   \label{eq:pattern_lazy}
 \end{equation}
 where $z(i)\coloneqq\bigoplus_{k<i, i\in\neig{f(k)}} s_k$ and  $\invf{i}=0$ for all
 $i\in I$. Note that the specification of Pauli operators before measurement
 follows~\Cref{eq:correction}.

 \Cref{thm:lazy_correctness} formally states the correctness of the lazy 1WQC,
 with the help of~\Cref{lem:mutually_disjoint,lem:partition,lem:correction}.
The proof of each lemma is available in \Cref{app:proofs}.

\begin{restatable}{lemma}{lemmutuallydisjoint}\label{lem:mutually_disjoint}
    \cite{gustiani}Suppose the open graph state $(\mathcal G^>,I,O)$ has flow $(f,\succ)$ and a proper total order $>$, 
    then $A(i)$ contains at least $f(i)$ for all $i\in O^c$ and $A(i)\cup A(j)=\varnothing$ for 
    all $i\neq j$.
\end{restatable}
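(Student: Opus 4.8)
The plan is to verify the two assertions separately: first that each $A(i)$ contains $f(i)$ for $i\in O^c$, and second that the family $\{A(i)\}$ is pairwise disjoint (which is what the lemma's name indicates, and what the displayed equation means with $\cap$ in place of the misprinted $\cup$). The disjointness is essentially built into the definition \eqref{eq:ai}, so the real work lies in the membership claim, which I would reduce to three set-membership checks fed by the flow axioms \textbf{(F0)}--\textbf{(F2)} and by the consistency of the total order with the partial order, namely $a\succ b\Rightarrow a>b$ (the same convention under which the protocol replaces $k\prec i$ by $k<i$ in the definition of $z(i)$).

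For disjointness, fix $i\neq j$ and assume without loss of generality $i<j$. By \eqref{eq:ai}, $A(j)=\neigc{j}\setminus\bigl(I\cup\bigcup_{k<j}\neigc{k}\bigr)$ explicitly removes $\neigc{i}$, since $i<j$. As $A(i)\subseteq\neigc{i}$, it follows immediately that $A(i)\cap A(j)=\varnothing$. No flow structure is needed here.

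For the membership $f(i)\in A(i)$, I would check the three conditions defining $A(i)$ in \eqref{eq:ai}. First, $f(i)\in\neigc{i}$: by \textbf{(F0)} the pair $(i,f(i))$ is an edge, so $f(i)\in\neig{i}\subseteq\neigc{i}$. Second, $f(i)\notin I$: this is immediate because $f$ maps $O^c$ into $I^c$. Third, $f(i)\notin\neigc{j}$ for every $j<i$: this is the substantive step, and I would argue by contradiction, splitting $\neigc{j}=\{j\}\cup\neig{j}$. If $f(i)=j$, then \textbf{(F1)} gives $f(i)\succ i$, hence $f(i)>i$, contradicting $j<i$. If instead $f(i)\in\neig{j}$ with $j\neq i$, then $j\in\neig{f(i)}\setminus\{i\}$, so \textbf{(F2)} applied to the node $i$ yields $j\succ i$, hence $j>i$, again contradicting $j<i$. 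Combining the three checks gives $f(i)\in\neigc{i}\setminus\bigl(I\cup\bigcup_{j<i}\neigc{j}\bigr)=A(i)$.

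The main obstacle is isolated in the third check: one must rule out that $f(i)$ has already been ``consumed'' by the closed neighborhood of some earlier-ordered node. The only non-bookkeeping ingredient is the translation of the adjacency $j\in\neig{f(i)}$ into the order relation $j\succ i$ via the flow axiom \textbf{(F2)}, together with the consistency of $>$ with $\succ$; everything else is direct from the definitions. I would therefore state the consistency convention explicitly at the outset, so that the two order-reversal arguments in the third check read transparently.
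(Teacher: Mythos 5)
Your proposal is correct and follows essentially the same route as the paper's proof: disjointness falls out of the definition of $A(i)$ by the nesting of the excluded sets, and the membership $f(i)\in A(i)$ is established via \textbf{(F0)} for $f(i)\in\neigc{i}$, the codomain of $f$ for $f(i)\notin I$, and \textbf{(F1)}/\textbf{(F2)} (plus consistency of $>$ with $\succ$) to rule out $f(i)\in\neigc{j}$ for $j<i$. Your explicit split of $\neigc{j}$ into $\{j\}$ and $\neig{j}$, and your note that the stated $\cup$ must be read as $\cap$, merely make the paper's terser argument more transparent.
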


\begin{restatable}{lemma}{lempartition}\label{lem:partition}
    \cite{gustiani}Suppose the open graph state $(\mathcal G^>,I,O)$ has flow $(f,\succ)$ and 
    a proper total order $>$, then $\cup_{{i\in V}}A(i)=I^c$.
\end{restatable}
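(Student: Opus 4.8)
The plan is to prove the set equality $\cup_{i\in V}A(i)=I^c$ by establishing the two inclusions separately, recalling that $A(i)=\neigc{i}\setminus(I\cup\bigcup_{j<i}\neigc{j})$ from \Cref{eq:ai}. The inclusion $\cup_{i\in V}A(i)\subseteq I^c$ is immediate: since every $A(i)$ has $I$ explicitly removed, each $A(i)\subseteq\neigc{i}\setminus I\subseteq V\setminus I=I^c$, and a union of subsets of $I^c$ stays inside $I^c$.

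For the reverse inclusion $I^c\subseteq\cup_{i\in V}A(i)$, I would fix an arbitrary $v\in I^c$ and exhibit a single index whose block contains it. The key observation is that $v$ always lies in its own closed neighborhood, so the set $\{j\in V:v\in\neigc{j}\}$ is nonempty; since $V$ is finite and $>$ is a total order, this set has a $>$-minimal element, which I call $i_0$. I claim $v\in A(i_0)$: indeed $v\in\neigc{i_0}$ by the choice of $i_0$; $v\notin I$ because $v\in I^c$; and $v\notin\neigc{j}$ for every $j<i_0$ by minimality, hence $v\notin\bigcup_{j<i_0}\neigc{j}$. Combining these three facts yields $v\in\neigc{i_0}\setminus(I\cup\bigcup_{j<i_0}\neigc{j})=A(i_0)$, so $v\in\cup_{i\in V}A(i)$, completing the inclusion.

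The argument is essentially a \emph{first-occurrence} bookkeeping and presents no real obstacle; the only point requiring care is that the minimal index $i_0$ must be taken over all nodes whose closed neighborhood contains $v$ (including $v$ itself), not merely over nodes preceding $v$, since $v$ itself need not be the earliest such node. Notably, this proof uses only the definition of $A$ and the fact that $>$ totally orders the finite set $V$; the flow $(f,\succ)$ and the compatibility of $>$ with $\succ$ are not actually needed for the union statement, serving only to fix the standing setting of the lazy scheme. Together with the pairwise disjointness of \Cref{lem:mutually_disjoint}, this shows that $\{A(i)\}_{i\in V}$ is a genuine partition of $I^c$, which is exactly what guarantees that each non-input qubit of \Cref{alg:lazy} is freshly initialized once and only once.
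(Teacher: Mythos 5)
Your proof is correct, but it takes a genuinely different route from the paper's. The paper argues by contradiction: writing $S=\cup_{i\in V}A(i)$, it rules out $S\supsetneq I^c$ by noting $A(i)$ explicitly excludes $I$ (your forward inclusion), and rules out $S\subsetneq I^c$ by invoking \Cref{lem:mutually_disjoint} to place each $k\in I^c$ in the image of $f$ inside $A(\invf{k})$, with a separate sub-case sweeping up output nodes not reached by $f$. Your reverse inclusion instead runs a direct first-occurrence argument: for $v\in I^c$ the set $\{j\in V: v\in\neigc{j}\}$ is nonempty (it contains $v$), its $>$-minimal element $i_0$ satisfies $v\in A(i_0)$ by construction, and you correctly flag the one subtle point --- that $i_0$ ranges over all nodes whose closed neighborhood contains $v$, not just $v$'s predecessors. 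This buys two things the paper's version does not make apparent: it avoids the $O^c$ versus $O$ case split entirely, and it exposes that the covering statement is purely combinatorial, needing only that $>$ totally orders the finite set $V$ and none of the flow hypotheses; the flow and \Cref{lem:mutually_disjoint} are only needed for the companion facts that each $A(i)$ with $i\in O^c$ is nonempty and that the blocks are pairwise disjoint. The paper's route, by contrast, keeps the lemma tied to the flow narrative (each time step allocates at least the qubit $f(i)$), at the cost of a surjectivity gap it must patch with the output-node sub-case. Your closing remark that disjointness plus covering yields a genuine partition of $I^c$ matches the paper's intended use of the lemma in \Cref{thm:lazy_correctness}.
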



\begin{restatable}{lemma}{leme}\label{lem:e}
    \cite{gustiani}Suppose an open graph state $(\mathcal G^>,I,O)$ has flow $(f,\succ)$ and a proper total order $>$, 
    then $\prod_{i\in V} E^>_{i\neig{i}} = E_{\mathcal G}$,
where $E^>_{i\neig{i}}\coloneqq \prod_{k\in \neig{i},k>i}E_{ik}$.
\end{restatable}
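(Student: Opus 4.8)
The plan is to show that the two sides are products of exactly the same entangling commands, each appearing with the same multiplicity, and then to invoke the fact recorded just after \Cref{eq:e_notation} that all the $E_{ij}$ mutually commute, so that the order in which the factors are written is immaterial. I note at the outset that the flow $(f,\succ)$ is not actually needed for this identity; the only feature of the hypotheses I will use is that $>$ is a \emph{total} order on $V$, so that any two distinct vertices are comparable.

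First I would read the left-hand side as a single product of entangling commands,
\[
\prod_{i\in V} E^>_{i\neig{i}} \;=\; \prod_{i\in V}\ \prod_{k\in\neig{i},\,k>i} E_{ik},
\]
ranging over all ordered pairs $(i,k)$ with $k\in\neig{i}$ and $k>i$. Because adjacency is symmetric, such pairs are in bijection with the edges of $\mathcal G$: to an edge $\{a,b\}$ one assigns the unique pair $(i,k)$ with $i$ the $>$-smaller and $k$ the $>$-larger of $a,b$, which is well defined precisely because $>$ is total.

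The core of the argument is to verify that each edge contributes exactly one factor. I would fix an edge $\{a,b\}\in E$ and assume without loss of generality $a<b$. The factor $E_{ab}$ occurs in $E^>_{a\neig{a}}$, since $b\in\neig{a}$ and $b>a$; it does \emph{not} occur in $E^>_{b\neig{b}}$, since $a\in\neig{b}$ but $a<b$ violates the condition $k>i$; and it cannot occur in $E^>_{i\neig{i}}$ for any $i\notin\{a,b\}$, because $E_{ik}=E_{ab}$ would force $\{i,k\}=\{a,b\}$ and hence $i\in\{a,b\}$. Thus every edge supplies precisely one factor, and conversely every factor on the left is $E_{ij}$ for a genuine edge $\{i,j\}$. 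Commuting the factors into the canonical order then yields $\prod_{i\in V}E^>_{i\neig{i}}=\prod_{(i,j)\in\mathcal G}E_{ij}=E_{\mathcal G}$.

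I do not expect a genuine obstacle: the statement is essentially a bookkeeping identity. The one point demanding care is that it is the total order $>$, and not merely the partial order $\succ$, that guarantees each edge has a well-defined smaller endpoint; this is exactly what makes the ``count once, from the smaller vertex'' enumeration both exhaustive and non-redundant. Had $a$ and $b$ been $\succ$-incomparable, the argument would break, so the hypothesis that $>$ is a proper total order is used in an essential way precisely here.
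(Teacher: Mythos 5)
Your proposal is correct and follows essentially the same route as the paper's proof: both identify the left-hand side with the set of pairs $(i,k)$, $k\in\neig{i}$, $k>i$, and observe that restricting to the $>$-smaller endpoint counts each edge of $\mathcal G$ exactly once (the paper phrases this via the handshaking lemma, you via an explicit bijection), with commutativity of the $E_{ij}$ handling the reordering. Your added remarks --- that the flow is not actually used and that totality of $>$ is the essential hypothesis --- are accurate.
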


\Cref{lem:mutually_disjoint} shows that in every time-step, one must assign at
least one fresh qubit. Followed by \Cref{lem:partition} which shows that every
non-input qubit is assigned once. \Cref{lem:e} proves that we recover the whole
graph. Note that, these lemmas (\Cref{lem:mutually_disjoint,lem:partition})
consider the general case --- quantum input and output --- in which the input nodes are initialized
beforehand. 
Finally, we prove the correctness
of lazy 1WQC in the following theorem.

\begin{theorem} \label{thm:lazy_correctness}\cite{gustiani}
    The lazy 1WQC scheme and the 1WQC scheme implement the same map, they produce the same output for the same input.
\end{theorem}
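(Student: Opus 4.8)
The plan is to prove equality by identifying the operator $\mathcal P_{lazy}$ of \Cref{eq:pattern_lazy}, which \Cref{alg:lazy} realizes, with the pattern $\mathfrak P_2$ of \Cref{eq:p2}; since \Cref{lem:correction} already shows $\mathfrak P_2=\mathfrak P_1$, the standard 1WQC pattern of \Cref{thm:pattern}, this suffices. The entire argument is a commutation (reordering) of the preparation, entangling, correction, and measurement commands of the just-in-time run into the grouped order of the standard pattern, with each move licensed by the three structural lemmas just established.

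First I would collect the preparation commands. By \Cref{lem:mutually_disjoint,lem:partition} the sets $\{A(i)\}_{i\in V}$ are pairwise disjoint with $\cup_{i\in V}A(i)=I^c$, so every non-input qubit is initialized exactly once and $\prod_{i\in V}N_{A(i)}^0=N_{I^c}^0$. Moreover \Cref{eq:ai} excludes from $A(i)$ every qubit lying in $\neigc{j}$ for some $j<i$; hence no command executed before time-step $i$ acts on a qubit of $A(i)$, and each block $N_{A(i)}^0$ commutes freely to the far right of the ordered product, assembling into a single $N_{I^c}^0$ applied first.

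Next I would collect the entangling commands. Every $E_{jk}$ is diagonal and the $E_{jk}$ mutually commute, while \Cref{lem:e} gives $\prod_{i\in V}E^>_{i\neig{i}}=E_{\mathcal G}$. Since $E_{jk}$ commutes with any measurement or correction acting outside $\{j,k\}$, I would slide every $E^>_{i\neig{i}}$ leftward past the later measurement and correction blocks and fuse them into $E_{\mathcal G}$, placed immediately after $N_{I^c}^0$. The legality of these moves is exactly runnability \textbf{(R0)}--\textbf{(R2)}: the entanglement $E^>_{i\neig{i}}$ is applied before $i$ is measured, and by step $i$ each neighbour $k\in\neig{i}$ with $k>i$ has already been prepared, because \Cref{lem:mutually_disjoint} puts $f(i)$ into $A(i)$ and the flow conditions \textbf{(F0)}--\textbf{(F2)} guarantee the remaining neighbours are available; thus no entangler ever meets an unprepared or already-measured qubit.

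After these two reorderings the pattern becomes
\[
\bigotimes_{j\in O}(X_j^{s_{\invf{j}}}Z_j^{z(j)})\,\POP_{i\in O^c}(M_i^{\phi_i}X_i^{s_{\invf{i}}}Z_i^{z(i)})\,E_{\mathcal G}N_{I^c}^0,
\]
which is precisely $\mathfrak P_2$ of \Cref{eq:p2}. Here one uses that the total order $>$ chosen for the lazy run is consistent with $\succ$, so the $\OP$ and $\POP$ products agree, and that the exponents coincide term by term: condition \textbf{(F2)} forces $k\prec i$ whenever $i\in\neig{f(k)}$, so the total-order sum defining $z(i)$ in \Cref{eq:pattern_lazy} equals the partial-order sum in \Cref{lem:correction}. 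Invoking \Cref{lem:correction} and then \Cref{thm:pattern} then identifies $\mathfrak P_2$ with the isometry $\bigotimes_{i\in O^c}\bra{+_{\phi_i}}E_{\mathcal G}N_{I^c}^0$, closing the argument. I expect the main obstacle to lie in the entanglement-collection step: one must check that each commutation leaves the adaptive correction exponents untouched, in particular when an $E_{ik}$ is passed across an $X$-correction on qubit $i$ or $k$. This is exactly where the ``before-measurement'' form of the corrections in $\mathfrak P_2$ (as opposed to the ``after-measurement'' form in $\mathfrak P_1$) is indispensable, since it keeps entanglers and correctors on mutually compatible qubits throughout the reordering.
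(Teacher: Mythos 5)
Your proposal is correct and follows essentially the same route as the paper's proof: it rests on the same four ingredients (\Cref{lem:mutually_disjoint,lem:partition,lem:e,lem:correction}), the same causality check that no entangler meets an unprepared or already-measured qubit, and the same observation that the commutations are harmless because each $E^>_{i\neig{i}}$ acts only on qubits disjoint from the measurement and correction blocks of earlier nodes. The only cosmetic difference is direction --- you reorder $\mathcal P_{lazy}$ into $\mathfrak P_2$ whereas the paper distributes $\mathfrak P_{1wqc}$ into $\mathcal P_{lazy}$ --- which proves the same equality.
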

\begin{proof}
    The following proof is similar to the one in~\cite{gustiani}. Let $\{(\mathcal G^>,I,O),f,\phi,\rho^{in}\}\coloneqq\mathscr I$ be the input of the lazy
    scheme (\Cref{alg:lazy}), where $>$ is consistent with the flow
    $(f,\succ)$, \ie the flow of the open graph $(\mathcal G,I,O)$. Since $>$
    is consistent with $\succ$, $\mathscr I$ is a valid input of the 1WQC scheme. 

    One can prove the map equality by comparing the patterns, \eg 
    we will show that $\mathfrak P_{1wqc}$ in~\Cref{eq:pattern} can be reduced to
    pattern $\mathfrak P_{lazy}$ in \Cref{eq:pattern_lazy}. Using previous results, 
    we have 
  \begin{align}\label{eq:3lines_boqco}
      \mathfrak P_{1wqc} 
     &=
    \POP_{i\in O^c}(X_{f(i)}^{s_i}
    \prod_{k\in\neig{f(i)}\setminus\{i\}}Z_i^{s_i} M_i^{\phi_i})E_{\mathcal G}N_{I^c}^0\\
    &\stackrel{\text{Lem.}~\ref{lem:correction}}{=}
    \prod_{j\in O}
    X_j^{s_{\invf{j}}}
    Z_j^{z(j)}
    \POP_{i\in O^c} ( 
    M_i^{\phi_i}
    X_i^{s_{\invf{i}}}
    Z_i^{z(i)}
    )E_{\mathcal G}N_{I^c}^0 
    \\
    &\stackrel{\text{Lems.}~\ref{lem:partition},~\ref{lem:e}}{=}
    \prod_{j\in O}
    X_j^{s_{\invf{j}}}
    Z_j^{z(j)}
    \OP_{i\in O^c} ( 
    M_i^{\phi_i}
    X_i^{s_{\invf{i}}}
    Z_i^{z(i)})
    E_O\,
    \OP_{k\in O^c} E_{k\neig{k}}^>
    \OP_{l\in O^c}N_{A(l)}^0 
    \\
    &=
    \bigotimes_{j\in O}
    X_j^{s_{\invf{j}}}
    Z_j^{z(j)}
    \OP_{i\in O^c} ( 
    M_i^{\phi_i}
    X_i^{s_{\invf{i}}}
    Z_i^{z(i)})
    E_O\,
    \OP_{k\in O^c} E_{k\neig{k}}^>
    \OP_{l\in O^c}N_{A(l)}^0.
\end{align}
Note that in the third equality, the partial ordering $\succ$ is replaced with
the total ordering $>$; this is valid since $>$ is consistent with $\succ$.  

Now we need to commute the entangling and preparation operators such that they
are distributed according to the lazy scheme. First, consider any two nodes $i$ and
$k$, where $i,k\in O^c$ and $i < k$. The preparation and entangling operators are commuting,
{\em i.e.},
\begin{equation}\label{eq:commute}
E^>_{k\neig{k}}E^>_{i\neig{i}}N_{A(k)}^0N_{A(i)}^0=
E^>_{k\neig{k}}N_{A(k)}^0E^>_{i\neig{i}}N_{A(i)}^0.
\end{equation}
However, we need to check if the condition of causality holds: there is no entanglement
operation involving qubits that are already measured or not yet
created.  Denote the set of edges $e(i)\coloneqq \{(i,k) \mid k\in\neig{i},
k>i\}$, \ie edges that correspond to entangling operations
$E^>_{i\neig{i}}$. By definition, set $e(i)$ does not contain any node
that has already been measured, namely any $k<i$. The nodes that correspond to edges $e(i)$ are
\begin{equation}\label{eq:n}
\{k\in\neigc{i}\mid k > i\}\eqqcolon n(i). 
\end{equation}
By definition, $A(i)$ contains all nodes in $\neigc{i}$ minus the ones that have already been created 
$I\cup_{k<i} \neigc{k}$, thus $\forall x\in e(i), x\in\{(i,j)\mid j\in \cup_{j\leq i}A(j) \}$,
which means every qubit connected by an edge in $e(i)$ is already initialized.  
Thus, there is no entanglement involving a qubit that has not yet been created. 
Therefore, \Cref{eq:commute} is causal.

Considering the measurement operator and the Pauli corrections, we need to 
commute the entangling operation through them, namely 
\begin{equation}
M_i^\phi X_i Z_i  E^>_{k\neig{k}}N_{A(k)}^0E^>_{i\neig{i}}N_{A(i)}^0=
E^>_{k\neig{k}}N_{A(k)}^0M_i^\phi X_i Z_i  E^>_{i\neig{i}}N_{A(i)}^0,
\end{equation}
which is true if and only if $i\not\in n(k)$. By definition of $n(k)$ (see \Cref{eq:n}),
$i < k$, thus, $i\not\in n(k)$. Thus, we can distribute the entangling and preparation 
operators in \Cref{eq:3lines_boqco} with respect to the ordering $>$ and obtain
\begin{equation}
    \bigotimes_{j\in O}
    (X_j^{s_{\invf{j}}}
    Z_j^{z(j)})
    E_O
    \OP_{i\in O^c} 
    M_i^{\phi_i}
    X_i^{s_{\invf{i}}}
    Z_i^{z(i)}
    E_{i\neig{i}}^>
    N_{A(i)}^0=\mathfrak P_{lazy}.
\end{equation}
\end{proof}

Since the minimal number of physical qubits is the pivot in the lazy 1WQC, it
is natural to ask for a bound on the number of physical qubits for an arbitrary 1WQC
computation.  We provide \Cref{conj:nqubit} to immediately suggest an answer.
The intuition behind \Cref{conj:nqubit} stems from a property of a graph with
flow; namely, the number of nodes per layer cannot shrink.  It is due to
non-colliding correction nodes: two distinct nodes $i,j$ can not have the same
$X$-correction node, $f(i)\neq f(j)$; otherwise, it violates the flow criteria
(\Cref{eqn:flow}).

\begin{conjecture}\label{conj:nqubit}
The number of physical qubits required to run lazy 1WQC in \Cref{alg:lazy},
regardless the input and output type --- whether classical or quantum ---
is bounded by $\abs{O} + 1$.   
\end{conjecture}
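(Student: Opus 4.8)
The plan is to bound the peak number of \emph{live} qubits, meaning qubits that have been initialized but not yet measured, since this peak is exactly the physical-qubit demand of \Cref{alg:lazy}. It suffices to treat the fully quantum input/output case: a classical input is encoded as ordinary $\ket{+_{c_i\pi}}$ states on $I$, while classical outputs only trigger extra measurements, which can only lower the count. Within the loop the count can rise only right after the fresh set $A(i)$ is allocated and before node $i$ is measured; write $L(i)$ for the live set at that instant. By \Cref{lem:mutually_disjoint,lem:partition} each qubit is created exactly once, at the first step $j\le i$ with $k\in\neigc{j}$ (inputs at the start), so the created qubits are $I\cup\bigcup_{j\le i}\neigc{j}$; a non-output qubit is measured at its own step, so the unmeasured qubits are $O\cup\{k:k\ge i\}$ in the total order $>$. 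Thus $L(i)$ is the intersection of these two sets, and I would reduce the claim to showing $\abs{L(i)}\le\abs O+1$ for every $i$, noting that the initial load $\abs I$ already satisfies $\abs I\le\abs O$ because flow forces $f\colon O^c\to I^c$ to be injective, whence $\abs{O^c}\le\abs{I^c}$.

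First I would record the global structure imposed by flow. Conditions (F0)--(F2) make $f$ injective: if $f(i)=f(j)=v$ with $i\neq j$, then $i,j\in\neig v$, and (F2) applied to $j$ and then to $i$ forces $i\succ j$ and $j\succ i$, a contradiction. Hence the successor map organizes $V$ into vertex-disjoint maximal chains $v_0\prec f(v_0)\prec f^2(v_0)\prec\cdots$, each terminating at the unique chain node on which $f$ is undefined, namely an output node; every output node tops exactly one chain, so there are precisely $\abs O$ chains and they partition $V$. I would also observe that no input lies in the image of $f$ (since $f$ maps into $I^c$), so input nodes occur only as chain-bottoms.

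The heart of the argument is a charging claim: along any chain at most one node is live, with the single exception of the chain through $i$, which contributes at most the two nodes $i$ and $f(i)$. Fix a chain $v_0<\cdots<v_m$ with $v_{t+1}=f(v_t)$ and $v_m\in O$. Since outputs top their chains, every $v_t$ with $t<m$ is non-output and is measured once $v_t<i$; so the only candidates to be live are $i$ itself (if on the chain), the first node $v_{r+1}$ strictly above the cut $\{k\le i\}$ (created because its predecessor $v_r\le i$), and the top $v_m$. The crux is to rule out every created node $v_s$ with $s\ge r+2$: such $v_s$ lies above $i$, so it could only have been created early through an off-chain edge $(u,v_s)$ with $u\le i$, or by being an input. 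The input case is impossible for $s\ge1$ since inputs are chain-bottoms; the edge case is killed by (F2) applied to $v_{s-1}$, whose successor is $v_s$, which forces $u\succ v_{s-1}$, hence $u>v_{s-1}\ge v_{r+1}>i$, contradicting $u\le i$. Therefore a chain avoiding $i$ has at most one live node ($v_{r+1}$, or its top output node if the chain lies below the cut), while the chain through $i$ adds at most $i$ and $f(i)$. Summing over the $\abs O$ chains gives $\abs{L(i)}\le 2+(\abs O-1)=\abs O+1$, tight as in \Cref{exa:lazy}.

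The step I expect to be the main obstacle is exactly this charging claim, and specifically the exhaustive case analysis that each chain other than the one carrying $i$ contributes a single live qubit. The boundary situations must each be checked against the definitions of $A(i)$ and of ``unmeasured'': chains lying entirely above or entirely below the cut, the current node being itself an output, and nodes in $I\cap O$, which form singleton chains and stay live throughout but still contribute only one qubit. It is in these cases that one must be most careful to invoke (F2) at the correct node of the chain; once the claim is secured the counting is immediate.
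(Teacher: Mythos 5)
The paper does not prove this statement at all: it is deliberately left as \Cref{conj:nqubit}, accompanied only by the one-line intuition that flow forces $f(i)\neq f(j)$ (\Cref{pro:acausal}) so that ``the number of nodes per layer cannot shrink.'' There is therefore no in-paper proof to compare against; what you have written is a genuine attempt to settle the conjecture, and as far as I can check it succeeds. Your chain decomposition is exactly the rigorous form of the paper's intuition: injectivity of $f$ together with (F1) and the absence of cycles makes the functional graph of $f$ a disjoint union of increasing paths, each topped by a distinct output node, so there are exactly $\abs{O}$ chains partitioning $V$. The new content is the charging claim, and its key step is sound: a node $v_s$ with $s\ge r+2$ can only be created via an edge $(u,v_s)$ with $u\le i$, and since $u\in\neig{f(v_{s-1})}$ with either $u=v_{s-1}>i$ or, by (F2) applied at $v_{s-1}$, $u\succ v_{s-1}\ge v_{r+1}>i$, no such $u$ exists. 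I verified the accounting against \Cref{exa:lazy}, where the three chains $1\to3\to7$, $2\to4\to6$, $\{5\}$ give $2+1+1=4=\abs{O}+1$ at $T_3$ and $T_4$, matching the figure, so the bound is tight. Two small points to make explicit in a polished write-up: (i) when the chain lies entirely above the cut ($r=-1$), the single candidate is $v_0$, which may be an input or may be created through an off-chain edge, but is still only one node; (ii) when $i\in O$ its chain contributes only one live node, so the bound even improves to $\abs{O}$ at such steps. Neither affects correctness. Since the paper records this as an open conjecture, your argument is a strict strengthening of what the paper offers and could be promoted to a theorem with proof.
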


\subsection{The BOQCo protocol}

Executing a 1WQC computation within the lazy scheme reduces the number of
physical qubits vastly, bounded to $\abs{O} + 1$ per
\Cref{conj:nqubit}. Here we integrate the lazy 1WQC paradigm into BOQC,
producing a protocol that we call BOQCo (BOQC-optimized). BOQCo allows the
server to prepare the graph state as needed, employing the minimal number of
physical qubits while maintaining the blindness of the multi-party
scheme.\footnote{At this point, we claim that BOQC and BOQCo are
unconditionally blind; this statement is proven and exclusively discussed in
\Cref{sec:security}.}

The BOQCo protocol is shown in \Cref{pro:boqco}; in AC language, we address it
as $\pi_{boqco}=\{\conv{A},\conv{B},\conv{O}\}$.  The scheme employs a strategy identical to BOQC to provide blindness; it is apparent from the
introduced randomness: $r,t,\alpha,\beta$. The distinguishing feature of BOQCo
lies in the distribution of the computation, which follows the lazy 1WQC. 

\begin{protocol}[!h]
    \caption{BOQCo ($\pi_{boqco}=\{\conv{A},\conv{B},\conv{O}\}$)}
  \label{pro:boqco}
  \begin{algorithmic}[1]

      \Statex{\hspace{-2em}\bfseries Alice's input: $\{(\mathcal G, I, O),f,\phi,\rho^{in}_{\mathcal A}\}$}\Comment{$\tilde I=I$ and $\tilde O=O$}
  \Statex{\hspace{-2em}\bfseries Oscar's input: $\{\psi\}$}
  \Statex{\hspace{-2em}\bfseries Alice's output for an honest Bob:
  $\rho^{out}_{\mathcal A}=\mathcal E(\rho^{in}_{\mathcal A})$}

  \Statex{\itshape Assumptions and conventions: 
          \begin{enumerate}[(I)]
              \item Alice ($\mathcal A$) and Oscar ($\mathcal O$) have
                  performed pre-protocol steps in \Cref{def:prestep_boqc}; 
                  Bob knows $\{(\mathcal G,\tilde I,\tilde O),\vera,\vero,\succ,>,b\}$. 
                  Here, we set $\tilde I=I$ and $\tilde O=O$. Recall $\tilde O \cap \vero=\varnothing$
                  (quantum outputs are held by Alice) and $\Omega=\{\frac{\pi k}{2^{b-1}}\}_{0\leq k<2^b}$.

      \item $s_{\invf{i}}=0,\forall i\in I$ and $t_i=0,\forall i\in I^c$. 
          
      \item $\invf{i}\equiv f^{-1}(i)$,
          $z(i)\coloneqq\bigoplus_{k\prec i, i\in \neig{f(k)}}s_k$, and  $t(i)\coloneqq\bigoplus_{k\in I, i\in N_{\mathcal G}(k)}t_k$.
          \end{enumerate}}

\Statex{\hspace{-2em}\bfseries \circled{0} Pre-preparation}
  \State{Alice and Oscar receive keys $r,t$ via a secure key channel,
  where $r_i\in\{0,1\},~i\in O^c$ and $t_j\in\{0,1\},~j\in I$.}

\Statex{\hspace{-2em}\bfseries\centering \circled{1} BOQC by parts}
\For{$i \in V\setminus O$ with ordering $>$}

\For{$k\in A(i)\cup I $}\Comment{\Cref{eq:ai}, \Cref{sec:1wqc}}
        \If{$k\in I$}\Comment{input qubit}
\State{Alice applies $Z_k(\alpha_k)X^{t_k}_k$ to 
$\tr_{I\setminus k}[\rho^{in}_{\mathcal A}]$ and sends it to Bob, $\alpha_k\in\Omega$ is chosen at random.}

\State{Alice updates angles:
    \vspace{-1.5em}
    \begin{align*}
        \phi_k &=(-1)^{t_k}\phi_k \\
        \phi_j &=\phi_j+t_k \pi,~\forall j\in N_{\mathcal G}(k)\cap\vera.
\end{align*}
    \vspace{-1.5em}
}
            
    \State{Oscar updates angles: 
    $\psi_j=\psi_j+t_k \pi,~\forall j\in N_{\mathcal G}(k)\cap\vero$.}

    \ElsIf{$k\in O$} \Comment{output qubit}
            \State{Bob prepares $\ket{+}_k$.}

            \Else \Comment{auxiliary qubit}
            \If{$k\in \vera$}
                    \State{Alice prepares $\ket{+_{\alpha_k}}_k$, sends
                    it to Bob, $\alpha_k\in\Omega$ is chosen at random.}
    \algstore{boqco}
\end{algorithmic}
\end{protocol}
\begin{protocol}
  \begin{algorithmic}[1]
    \algrestore{boqco}

            \ElsIf{$k\in\vero$}

                    \State{Oscar prepares $\ket{+_{\beta_k}}_k$ and send it to Bob, $\beta_k\in\Omega$ is chosen at random.}
            \EndIf
        \EndIf
    \EndFor

    \State{Bob applies entangling operations $E^>_{i N_{\mathcal G}(i)}$.}
    \Comment{\Cref{eq:e_notation}, \Cref{sec:1wqc}}

    \If{$i\in O^c$}

    \If{$i\in \vera$}
        \State{Alice computes $\phi'_i=(-1)^{s_{\invf{i}}}\phi_i + z(i)\pi  $.}
        \State{Alice computes $\delta_i\coloneqq \phi'_i + \pi r_i + \alpha_i$ and sends Bob $\delta_i$.}
    \State{Bob measures $i$ in $\ket{\pm_{\delta_i}}$ basis, sends Alice and Oscar 
          the outcome $\tilde s_i$.}
    \State{Alice and Oscar set $s_i=\tilde s_i\oplus r_i$.}

    \ElsIf{$i\in\vero$}
        \State{Oscar computes $\psi'_i=(-1)^{s_{\invf{i}}}\psi_i + z(i)\pi$.}
        \State{Oscar computes $\delta_i\coloneqq \psi'_i + \pi r_i + \beta_i$ and sends Bob $\delta_i$.}
    \EndIf
    \Else\Comment{output qubit transmissions and corrections} 
    \State{Bob sends Alice qubit $i$.}
    \State{Alice corrects qubit $i$ by applying $X^{s_{\invf{i}}+t_i}_i  Z_i^{z(i)+t(i)}$.}
\EndIf
\EndFor

  \end{algorithmic}
\end{protocol}

In the BOQCo, we divide the process into three main steps: \circled{0}
pre-preparation that is identical to BOQC, \circled{1} computation part by
part, which is BOQC (excluding pre-preparation) done one part at a time, and
\circled{2} output transmission and correction that is also identical to BOQC.
The input and output of the BOQCo protocol is identical to the BOQC, \ie  
it receives input from Alice as $\{(\mathcal
G,I,O),f,\phi,\rho^{in}_{\mathcal A}\}$ and input from Oscar as $\psi$.

We define the correctness of a BOQCo computation as if it is run in the 1WQC
scheme. Formally, we state the correctness in \Cref{thm:correctness_boqco},
where the proof is provided in \Cref{app:proofs}.

\begin{restatable}{theorem}{thmcorrectnessboqco}\label{thm:correctness_boqco}\cite{gustiani}
    The BOQCo protocol $\pi_{boqco}$ defined in \Cref{pro:boqco} delegates a
    computation with the isometry defined in \cref{thm:pattern}
    for the same computation, without requiring Alice
    and Oscar to communicate their computation to each other.
\end{restatable}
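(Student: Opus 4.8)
The plan is to show that BOQCo, run with an honest Bob, enacts exactly the pattern of the lazy 1WQC scheme for the same computation, and then to chain together the two correctness results already in hand. Concretely, BOQCo is nothing but \Cref{pro:boqc} in which the global preparation and the global entangling step have been broken apart and interleaved into the measurement loop according to the total order $>$ and the just-in-time set $A(i)$ of \Cref{eq:ai}. So I would first argue that the blinding randomness cancels precisely as in BOQC, reducing the honest execution to a deterministic pattern, and then argue that the lazy rescheduling of preparations and entanglers leaves the realized map unchanged.

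For the first part I would track the net measurement enacted on each node. When $i\in\vera$ is measured, Bob receives $\delta_i=\phi_i'+\pi r_i+\alpha_i$ with $\phi_i'=(-1)^{s_{\invf{i}}}\phi_i+z(i)\pi$, while qubit $i$ was prepared (lazily) in $\ket{+_{\alpha_i}}$ --- or, for $i\in I$, by applying $Z_i(\alpha_i)X_i^{t_i}$ to the input together with the accompanying angle updates. Measuring $\ket{+_{\alpha_i}}$ in the $\ket{\pm_{\delta_i}}$ basis is equivalent to measuring in the $\ket{\pm_{\phi_i'+\pi r_i}}$ basis, and the residual $\pi r_i$ merely flips the reported bit, which is undone by the reassignment $s_i=\tilde s_i\oplus r_i$. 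The quantum one-time pad carried by $t$ is absorbed by the same angle updates used in BOQC, and the identical case $i\in\vero$ uses $\beta_i$ in place of $\alpha_i$. Crucially, none of this bookkeeping depends on \emph{when} qubit $i$ was prepared: each random angle is drawn once at its lazy preparation and consumed once at its measurement, so the argument underlying \Cref{thm:correctness_boqc} applies verbatim. After this cancellation the effective pattern is built from $M_i^{\phi_i}$, the corrections $X_i^{s_{\invf{i}}}Z_i^{z(i)}$, the lazy entanglers $E^{>}_{i\neig{i}}$, and the lazy preparations $N^0_{A(i)}$ --- that is, exactly $\mathfrak P_{lazy}$ of \Cref{eq:pattern_lazy}.

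It then remains to certify that this lazy scheduling is both well defined and map-preserving, which is precisely the content of \Cref{thm:lazy_correctness}. Its proof already establishes, via \Cref{lem:mutually_disjoint,lem:partition,lem:e}, that every non-input qubit is prepared exactly once, that the interleaved entanglers compose to the full $E_{\mathcal G}$, and that the preparation and entangling commands may be commuted into their just-in-time positions without ever touching a qubit that is already measured or not yet created (the causality check around \Cref{eq:commute}). Consequently $\mathfrak P_{lazy}$ realizes the same isometry as the standard 1WQC pattern, which by \Cref{thm:pattern} is the claimed map $\mathcal E$. Chaining the two equalities gives the theorem, and since the angle updates and the randomness cancellation never require Alice to learn $\psi$ or Oscar to learn $\phi$, the clients indeed run the joint computation without communicating their computation to each other.

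I expect the main obstacle to be the interplay between the lazy schedule and the interleaved classical corrections: one must confirm that at the moment Bob measures $i$, every outcome $s_k$ with $k\prec i$ that feeds into $z(i)$ and $s_{\invf{i}}$ is already available, and simultaneously that qubit $i$ has been created while none of its flow-successors have yet been measured. Both follow from the consistency of $>$ with $\succ$ together with the flow conditions (F1),(F2) of \Cref{eqn:flow}, exactly as encoded in \Cref{lem:mutually_disjoint,lem:partition}. The only genuinely new verification beyond BOQC is that the just-in-time set $A(i)$ always supplies qubit $i$ before it is needed; once that is granted, there is essentially no new calculation, only the orchestration of the established lemmas.
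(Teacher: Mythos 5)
Your proposal is correct and follows essentially the same route as the paper: reduce the honest BOQCo execution to the lazy pattern $\mathfrak P_{lazy}$ by cancelling the randomness $\alpha,\beta,r,t$ exactly as in the BOQC correctness proof, then invoke \Cref{thm:lazy_correctness} (via \Cref{lem:mutually_disjoint,lem:partition,lem:e}) to identify $\mathfrak P_{lazy}$ with the standard 1WQC pattern and hence the isometry of \Cref{thm:pattern}. The only cosmetic difference is the order of the two steps --- the paper first strips the randomness and equates the deterministic patterns before reintroducing $\gamma,r$ and deferring to the proof of \Cref{thm:correctness_boqc} --- and your explicit remark that the cancellation is insensitive to when each qubit is prepared is precisely the observation that makes this work.
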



In terms of quantum power between clients and servers, BOQCo has requirements identical
to those of BOQC as discussed in \Cref{sec:quantum_power}.  This is
because BOQC and BOQCo differ only in the ordering among qubit transmissions,
entanglements, and measurements.

\section{Security analysis}\label{sec:security}
This section elaborates on the security of BOQC and BOQCo using the AC
framework.  We promise composable \emph{blindness} for our protocols. For that,
we need to achieve two statements: correctness and security (blindness).  We
separately discuss each statement: \Cref{sec:correctness} for correctness and
\Cref{sec:malicious_bob} for blindness.  In the end, we investigate the
consequences of our security definition; the issue of leaked information in the ideal
resource puts limits on the permitted graph states of oracles.

\subsection{Correctness}\label{sec:correctness}

We use \Cref{def:correctness} to state the correctness --- also know as completeness --- of BOQC and BOQCo
protocols.  Here, we prove that both protocols are perfectly correct
($\varepsilon=0$) and realize an ideal resource $\sboqc$, denoted as $\euscr
R\xrightarrow{\pi_{boqc},0}\sboqc$ and $\euscr
R\xrightarrow{\pi_{boqco},0}\sboqc$, where $\euscr R$ is the real-world
resource connected to the protocols and $\sboqc$ is defined in
\Cref{fig:sboqc}.  In both protocols, resource $\euscr R$ comprises a secure
key channel, quantum channels, and insecure classical channels.

\begin{figure}[htbp]
  \centering
\begin{tikzpicture}
    \node[draw, rectangle, minimum width = 3.4 cm, minimum height = 2.6 cm] (box) at (2,0) 
  {
  \hspace{1em}$\rho^{out}_{\mathcal A}=
       \mathcal E(\rho^{in}_{\mathcal A})$};
       \node[below] at (box.south) {\scaleeq[1.3]{\sboqc}};
\begin{scope}[auto,every node/.style={draw=none,
  inner sep=1em,align=center}, ->-/.style={decoration={markings,mark=at position #1 with {\arrow[scale=1.5,>=stealth]{>}}},postaction={decorate}}]
  \draw[->-=.5] ([yshift=1.2cm,xshift=-.5cm]box.west) -- node[left]{$\rho^{in}_{\mathcal A}$}([yshift=1.2cm,xshift=0.2cm]box.west);
  \draw[->-=.6] ([yshift=0.2cm,xshift=.2cm]box.west) --node[left]{$\rho_{\mathcal A}^{out}$}([yshift=0.2cm,xshift=-.5cm]box.west);
\end{scope}

\begin{scope}[auto,every node/.style={draw=none, inner sep=1em,align=center},
  ->-/.style={decoration={markings,mark=at position #1 with {\arrow[scale=0.9,>=stealth]{>}}},postaction={decorate}}]
  \draw[->-=.5,double] ([yshift=0.7cm,xshift=-.5cm]box.west)--node[left]{$(\mathcal G,I,O),f,\phi$}([yshift=0.7cm,xshift=0.2cm]box.west);
  \draw[->-=.5,double] ([yshift=-.7cm,xshift=-.5cm]box.west)--node[left]{$\psi$}([yshift=-.7cm,xshift=0.2cm]box.west);
  \draw[->-=.6,double] ([yshift=-.2cm,xshift=1cm]box.north)--node[above]{$\ell^{\mathcal{AO}}$}([yshift=.5cm,xshift=1cm]box.north);
\end{scope}

\draw[-,dashed] ([yshift=-0.2cm,xshift=-3.5cm]box.west)--node[left]{}([yshift=-0.2cm,xshift=0.2cm]box.west);
\node[draw=none] at ([yshift=0.7cm,xshift=-3.5cm]box.west){Alice};
\node[draw=none] at ([yshift=-0.7cm,xshift=-3.5cm]box.west){Oscar};
\node[draw=none] at ([yshift=.8cm, xshift=-.3cm]box.north){Bob};
\end{tikzpicture}

\caption{\cite{gustiani}The ideal BOQC resource $\sboqc$ in the absence of an
    adversary. The left side is the interface of clients, and the top
    side is the server interface.  Single- and double- line arrows indicate
    quantum and classical information, respectively.  The resource receives
    inputs from Alice: an open graph $(\mathcal G,I,O)$ with flow $f$, a quantum input
    $\rho^{in}_{\mathcal A}$, and a set of measurement angles $\phi_i\in\Omega$,
    where $\Omega=\{\frac{k\pi}{2^{b-1}}\}_{0\leq k<2^b}$ for an integer $b$.
    Also, it receives input $\psi_i\in\Omega$ from Oscar, and it does not take
    any inputs from Bob.
    Alice receives the final
    output $\rho_{\mathcal A}^{out}=\mathcal E(\rho^{in}_{\mathcal A})$, where
    $\mathcal E$ is the resulting superoperator of the algorithm, \ie $\mathcal
    E$ is the isometry described in \Cref{thm:pattern}. 
    Some classical information
    $\ell^{\mathcal{AO}}=\{(\mathcal G, \tilde I, \tilde O), f, \vera, \vero,\succ,>,b\}$ leaks on
Bob's interface, which corresponds to
public information in \Cref{def:prestep_boqc}, and is necessary to set up the protocol.}

  \label{fig:sboqc}
\end{figure}

The ideal resource $\sboqc$ describes the BOQC system in the ideal world when
Bob is honest. Resource $\sboqc$ has an identical description of inputs and
outputs with the BOQC protocol in \Cref{pro:boqc}. Resource $\sboqc$ also
describes the BOQCo system in the ideal world. It also has an identical
configuration of inputs and outputs with \Cref{pro:boqco}.  The leaked
information $\ell^{\mathcal{AO}}$ is not apparent in the protocols
(\Cref{pro:boqc} and \Cref{pro:boqco}); however this leakage is revealed in the proofs of the correctness
theorems: \Cref{thm:boqc_correctness} for the BOQC and
\Cref{thm:boqco_correctness} for the BOQCo.


\begin{theorem}
    \label{thm:boqc_correctness}\cite{gustiani}
    The BOQC protocol $\pi_{boqc}=(\conv{A},\conv{O},\conv{B})$ defined in 
    \Cref{pro:boqc} is perfectly correct and emulates the ideal resource $\sboqc$
    defined in \Cref{fig:sboqc}.
\end{theorem}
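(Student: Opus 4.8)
The plan is to unfold \Cref{def:correctness}: since correctness presumes no adversary, I must show that the fully honest real system $\conv{A}\conv{O}\euscr R\conv{B}$ is perfectly indistinguishable from the ideal resource $\sboqc$, i.e.\ that $d(\conv{A}\conv{O}\euscr R\conv{B},\sboqc)=0$. Concretely, two things must coincide across all interfaces: the quantum state delivered at Alice's output interface must equal $\mathcal E(\rho^{in}_{\mathcal A})$, and the classical data surfacing at Bob's interface must equal the leak $\ell^{\mathcal{AO}}$ of \Cref{fig:sboqc}. The key observation is that the protocol's internal randomness $r,t,\alpha,\beta$ never reaches the outside interfaces, so once I argue that the corrected output is \emph{deterministic} and correct, perfect indistinguishability follows.

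First I would establish output correctness by tracing an honest run of \Cref{pro:boqc} and showing that the three randomisation layers cancel. The local masking angles $\alpha_i,\beta_i$ enter both the transmitted states $\ket{+_{\alpha_i}},\ket{+_{\beta_i}}$ and the announced angles $\delta_i=\phi_i'+\pi r_i+\alpha_i$ (resp.\ $\psi_i'+\pi r_i+\beta_i$), so measuring $\ket{+_{\alpha_i}}$ in basis $\ket{\pm_{\delta_i}}$ is identical to measuring $\ket{+}$ in basis $\phi_i'+\pi r_i$, and the offsets drop out. The pad bit $r_i$ appears as the shift $\pi r_i$, which by \Cref{eq:correction} is a $Z_i$ that merely flips Bob's reported outcome $\tilde s_i$; this flip is undone when the clients set $s_i=\tilde s_i\oplus r_i$, so the effective signals equal those of the unmasked pattern with angles $\phi_i'$. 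Finally the quantum one-time pad $X_i^{t_i}$ on the inputs, with its induced angle updates on $\phi_i$ and the $\vera$- and $\vero$-neighbours (lines~\ref{ln:boqc_start_I}--\ref{ln:boqc_end_I}), propagates through $E_{\mathcal G}$ as a Pauli frame removed exactly by the output correction $P=\bigotimes_{i\in O}X_i^{s_{\invf i}+t_i}Z_i^{z(i)+t(i)}$. After stripping the randomness the residual pattern is precisely $\mathfrak P_1$ of \Cref{lem:correction}, which by \Cref{thm:pattern} realises $\mathcal E=\bigotimes_{i\in O^c}\bra{+_{\phi_i}}_iE_{\mathcal G}N^0_{I^c}$; hence $\rho^{out}_{\mathcal A}=\mathcal E(\rho^{in}_{\mathcal A})$. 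This is exactly the content of \Cref{thm:correctness_boqc}, so I would invoke it directly rather than repeat the Pauli bookkeeping.

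It then remains to pin down Bob's interface. In an honest run the only classical data Bob legitimately receives during setup is the public tuple of pre-protocol step \textbf{(B4)}; the honest converter $\conv B$ exposes at its outside interface exactly $\ell^{\mathcal{AO}}=\{(\mathcal G,\tilde I,\tilde O),f,\vera,\vero,\succ,>,b\}$. Note that $f$ is \emph{not} explicitly transmitted in \textbf{(B4)}, so the correctness proof is where this extra leakage is made explicit: since $f$ is the flow used to build the pattern and is determined by the public open-graph data, it is consistent to attribute it to $\sboqc$. As both systems agree on every interface and the output is deterministic and identical, the distinguishing advantage vanishes and $\varepsilon=0$. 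The main obstacle is the coherent cancellation of the three randomisation layers — in particular showing that the quantum one-time pad $t$ and its propagated angle updates recombine, through the flow-induced correction structure, into the single operator $P$; this is precisely where \Cref{lem:correction}, relating the ``correct-after'' pattern $\mathfrak P_1$ to the ``correct-before'' pattern $\mathfrak P_2$, does the heavy lifting, after which identifying $\ell^{\mathcal{AO}}$ and confirming the interface sets coincide is routine.
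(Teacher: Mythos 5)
Your proposal is correct and follows essentially the same route as the paper: both reduce output correctness to \Cref{thm:correctness_boqc} (which itself carries the burden of showing the randomness layers $\alpha,\beta,r,t$ cancel), check that the honest inputs coincide, and verify that Bob's interface exposes exactly $\ell^{\mathcal{AO}}$, whence the distinguishing advantage vanishes. Your side remark that the flow $f$ appears in $\ell^{\mathcal{AO}}$ without being explicitly listed in step \textbf{(B4)} is a fair observation, but it does not alter the argument.
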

\begin{proof}
The following proof is similar to the one in~\cite{gustiani}. Protocol
$\pi_{boqc}$ is correct if it satisfies \Cref{def:correctness}:
$d(\conv{A}\conv{O}\euscr R\conv{B},\sboqc)=0$, \ie resources
$\conv{A}\conv{O}\euscr R\conv{B}$ and $\sboqc$ must be perfectly
indistinguishable, where $d$ is a pseudo-metric with properties discussed in
\Cref{sec:ac}.  Which means, we show that the distinguishing
advantage (defined in \Cref{eq:distinguishing_advantage}) is zero. For that, we
show that both resources have the same --- or statistically the same --- inputs
and outputs.  

First, we show that $\pi_{boqc}$ and $\sboqc$ have an identical description of inputs.
As shown in \Cref{pro:boqc}, the protocol receives inputs $\{(\mathcal G, I,
O),f,\phi,\rho_{\mathcal A}^{in}\}$ from Alice, $\psi$ from Oscar, and no honest
input from Bob. These inputs are identical to the inputs of $\sboqc$.

Second, $\sboqc$ sends an output $\rho^{out}_{\mathcal A}=\mathcal
E(\rho_{\mathcal A}^{in})$, where $\mathcal E$ is the superoperator with an
isometry given in \Cref{thm:pattern}; using \Cref{thm:correctness_boqc}, BOQC
also implements that isometry.  

Finally we show that BOQC leaks the same information as $\sboqc$, which is
$\ell^{\mathcal{AO}}$. Bob receives information $\{(\mathcal G, \tilde I, \tilde O), \vera,
\vero,\succ,>,b\}$ in the protocol, which is public information obtained
from the pre-protocol steps defined in \Cref{def:prestep_boqc} before the
protocol starts. The public information is identical to leak
$\ell^\mathcal{AO}$.  Bob is not curious in this setting, thus there is no
additional information obtained beyond $\ell^\mathcal{AO}$. 
\end{proof}
We note here that Bob
does not erase the received information, as it is required to run the protocol
and to provide the bill for his clients.\footnote{Bob is not curious, but he
needs to record some information for his clients to pay. For example, the dense
$\Omega$ may cost more than the sparse $\Omega$.}

The resource $\euscr S$ models a general case in which Alice expects quantum
input and quantum output, \ie $\tilde I=I$ and $\tilde O=O$. If Alice needs
only classical outputs, Bob will measure all output nodes $O$ and her output
density matrix $\rho_{\mathcal A}^{out}$ has diagonal form in the security
model $\euscr S$.  The same applies to the classical input, \eg for a bit
string $c=c_n\dots,c_2c_1$, we set $\rho^{in}_{\mathcal
A}=\bigotimes_{i=1}^{n}\ketbra{c_i}$ in the security model.  Note that, per
\Cref{def:prestep_boqc}, Bob knows beforehand the input-output type, which is
captured in the leaked information $\tilde I$ and $\tilde O$. For instance, Bob
knows the input is entirely classical if $\tilde I=\varnothing$. Such
information is necessary for Bob to prepare his channel.

As the counterpart of \Cref{thm:boqc_correctness}, we prove the correctness of
BOQCo protocol in \Cref{thm:boqco_correctness}:

\begin{theorem}
    \label{thm:boqco_correctness}\cite{gustiani}
    The BOQCo protocol $\pi_{boqco}=(\conv{A},\conv{O},\conv{B})$, defined in 
    \Cref{pro:boqco} is perfectly correct, and emulates ideal resource $\sboqc$
    defined in \Cref{fig:sboqc}.
\end{theorem}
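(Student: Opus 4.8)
The plan is to reuse verbatim the three-part argument from the proof of \Cref{thm:boqc_correctness}, exploiting the fact that BOQCo differs from BOQC only in the \emph{scheduling} of qubit transmissions, entanglements, and measurements --- never in the content of any message exchanged with Bob. By \Cref{def:correctness} I must show $d(\conv{A}\conv{O}\euscr R\conv{B},\sboqc)=0$, i.e.\ that the real system (the honest converters appended to the real resource with an honest Bob) is perfectly indistinguishable from the ideal resource $\sboqc$ of \Cref{fig:sboqc}. It therefore suffices to verify that the two systems present identical inputs, identical outputs, and identical leakage at every interface, so that the distinguishing advantage of \Cref{eq:distinguishing_advantage} vanishes.

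First I would check the inputs: as noted immediately after \Cref{pro:boqco}, BOQCo accepts exactly the same inputs as BOQC, namely $\{(\mathcal G, I, O),f,\phi,\rho^{in}_{\mathcal A}\}$ from Alice, $\psi$ from Oscar, and no honest input from Bob, which coincide with the inputs of $\sboqc$. Next, for the output, $\sboqc$ returns $\rho^{out}_{\mathcal A}=\mathcal E(\rho^{in}_{\mathcal A})$ with $\mathcal E$ the isometry of \Cref{thm:pattern}; by \Cref{thm:correctness_boqco} the BOQCo protocol realizes precisely this isometry, so Alice's output agrees in both worlds. Finally, for the leakage, I would observe that Bob receives the public information $\ell^{\mathcal{AO}}=\{(\mathcal G, \tilde I, \tilde O), f, \vera, \vero,\succ,>,b\}$ fixed during the pre-protocol steps of \Cref{def:prestep_boqc}, together with the encrypted angles $\delta_i$ and the prepared qubits. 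Since the one-time-pad randomness $r,t,\alpha,\beta$ is drawn and applied exactly as in BOQC, the marginal distribution of every classical and quantum message Bob observes is identical to the BOQC case, and an honest (non-curious) Bob extracts nothing beyond $\ell^{\mathcal{AO}}$.

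The one point deserving care --- and the only place the optimized scheduling enters --- is confirming that the lazy, just-in-time preparation leaks no structural information absent from $\ell^{\mathcal{AO}}$. This is the step I expect to be the main obstacle, though it is a mild one: the schedule of fresh-qubit allocations $A(i)$ is a deterministic function of $\mathcal G$ and the total order $>$, both already contained in $\ell^{\mathcal{AO}}$ (recall that step \textbf{(B3)} makes $>$ mandatory and public precisely for the optimized protocols). Hence Bob could reconstruct the entire timing pattern himself from the leaked data, so the part-by-part delivery of qubits reveals nothing new. With inputs, outputs, and leakage all matching those of $\sboqc$, the distinguishing advantage is zero and the protocol is perfectly correct, giving $\euscr R\xrightarrow{\pi_{boqco},0}\sboqc$.
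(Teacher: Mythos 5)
Your proposal is correct and follows essentially the same route as the paper: the paper reduces BOQCo to BOQC by noting that only the ordering of transmissions, entanglements, and measurements differs, invokes \Cref{thm:correctness_boqco} for equality of outputs, and concludes no leak beyond $\ell^{\mathcal{AO}}$, exactly the three checks you perform. Your added observation that the lazy schedule $A(i)$ is a deterministic function of $\mathcal G$ and $>$ (both public) is a slightly more explicit justification of the no-extra-leak step than the paper gives, but it is the same argument.
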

\begin{proof}
    The following proof is similar to the one in~\cite{gustiani}.
    Using \Cref{def:correctness}, correctness is
    achieved when $d(\conv{A}\conv{O}\euscr R\conv{B},\sboqc)=0$. We prove this
    condition by reducing BOQCo to BOQC.

    First, \Cref{pro:boqco} (BOQCo) has the same inputs as \Cref{pro:boqc}
    (BOQC). Applying \Cref{thm:correctness_boqco}, BOQCo also results in the same
    computation as BOQC, thus, the same output.  Secondly, BOQCo and BOQC differ only in the ordering
    among transmissions, entanglements, and measurements; thus, there is no
    additional leak introduced beyond $\ell^{\mathcal{AO}}$ (\Cref{fig:sboqc}). 
    In terms of correctness, BOQCo is
    reducible to BOQC. Finally, since BOQC is perfectly correct within the
    composable definitions, BOQCo is also perfectly correct within the composable definitions.
\end{proof}

\subsection{Blindness: when Bob is malicious}\label{sec:malicious_bob}
\begin{figure}[!h]
  \centering
\begin{tikzpicture}
    \node[draw, rectangle, minimum width = 3.4 cm, minimum height = 2.6 cm] (box) at (2,0) 
  {
  \hspace{1em}$\rho^{out}_{\mathcal A}=
       \tilde{\mathcal E}(\tilde\rho^{in}_{\mathcal{AB}})$};
       \node[below] at (box.south) {\scaleeq[1.3]{\sboqcp}};
\begin{scope}[auto,every node/.style={draw=none,
  inner sep=1em,align=center}, ->-/.style={decoration={markings,mark=at position #1 with {\arrow[scale=1.5,>=stealth]{>}}},postaction={decorate}}]
  \draw[->-=.5] ([yshift=1.2cm,xshift=-.5cm]box.west) -- node[left]{$\rho_{\mathcal A}^{in}$}([yshift=1.2cm,xshift=0.2cm]box.west);
  \draw[->-=.6] ([yshift=.5cm,xshift=-1cm]box.north)--node[above]{$\squared{B}$}([yshift=-.2cm,xshift=-1cm]box.north);
  \draw[->-=.6] ([yshift=0.2cm,xshift=.2cm]box.west) --node[left]{$\rho_{\mathcal A}^{out}$}([yshift=0.2cm,xshift=-.5cm]box.west);
\end{scope}

\begin{scope}[auto,every node/.style={draw=none, inner sep=1em,align=center},
  ->-/.style={decoration={markings,mark=at position #1 with {\arrow[scale=0.9,>=stealth]{>}}},postaction={decorate}}]
  \draw[->-=.5,double] ([yshift=0.7cm,xshift=-.5cm]box.west)--node[left]{$(\mathcal G,I,O),f,\phi$}([yshift=0.7cm,xshift=0.2cm]box.west);
  \draw[->-=.5,double] ([yshift=-0.7cm,xshift=-.5cm]box.west)--node[left]{$\psi$}([yshift=-0.7cm,xshift=0.2cm]box.west);
  \draw[->-=.6,double] ([yshift=-.2cm,xshift=1cm]box.north)--node[above]{$\ell^{\mathcal{AO}}$}([yshift=.5cm,xshift=1cm]box.north);
\end{scope}

\draw[-,dashed] ([yshift=-0.2cm,xshift=-3.5cm]box.west)--node[left]{}([yshift=-0.2cm,xshift=0.2cm]box.west);
\node[draw=none] at ([yshift=0.7cm,xshift=-3.5cm]box.west){Alice};
\node[draw=none] at ([yshift=-0.7cm,xshift=-3.5cm]box.west){Oscar};
\node[draw=none] at ([yshift=1.1cm]box.north){Bob};

\end{tikzpicture}

\caption{\cite{gustiani} Ideal resource $\sboqcp$, the ideal-world resource
    when Bob is malicious.  The input and output setting on the clients'
    interfaces is identical to $\sboqc$ --- \Cref{fig:sboqc}. Resource
    $\sboqcp$ does not take any honest input from Bob, but he may entangle
    Alice's input with his state ($\tilde\rho^{in}_{\mathcal{AB}}$) and apply a
    superoperator of his choice ($\mathcal E$), obtaining $\tilde{\mathcal
    E}(\tilde\rho^{in}_{\mathcal{AB}})$.  Variable $\squared{B}$ captures 
    all Bob's dishonest inputs, which determines the final output $\rho_{\mathcal A}^{out}$.
    Input $\squared{B}$ is used to send the extension $\rho^{in}_{\mathcal
    B}$ to $\sboqcp$, and to determine map $\tilde{\mathcal E}$.  } \label{fig:sboqcp}
\end{figure}

It remains to provide the security --- also known as soundness --- statement for
BOQC and BOQCo protocols to achieve a composable secure definition. The
security that is aimed for is \emph{perfect blindness}, meaning the adversary (Bob) can
learn nothing about the computation or the measurement outcomes.  The same
principles used for proving correctness apply also to prove blindness. We set the
security model in the ideal world that captures the desired blindness, and then
prove that our protocols that live in the real world are indistinguishable to
the ideal-world model. However, while the correctness captures the system when
everyone is honest, the blindness captures the situation when in the presence of
an adversary, \ie when Bob cheats.

In the presence of an adversary, our protocols realize the ideal resource
$\sboqcp$ that is defined in \Cref{fig:sboqcp}.  Resource $\sboqcp$ models the
ideal system in the ideal world when Bob is malicious.  On the clients' side,
$\sboqcp$ has the same input and output configuration as $\sboqc$; however,
in resource $\sboqcp$, Bob provides dishonest inputs as he wishes, which is 
captured in $\squared{B}$. Nevertheless, both resources $\sboqc$ and $\sboqcp$
leak the same information $\ell^{\mathcal{AO}}$.

Given that $\euscr R$ is the real-world resource used in our protocols, we need
to achieve statements $\euscr R\xrightarrow{\pi_{boqc},0}\sboqcp$ and $\euscr
R\xrightarrow{\pi_{boqco},0}\sboqcp$ where $\varepsilon=0$ signifies perfect
blindness.  To prove that, we must satisfy \Cref{def:security}, \ie there
exists a simulator $\simb$, such that $d(\conv{A}\conv{O}\euscr
R,\sboqcp\simb)=0$. Recall that a simulator $\simb$ is needed to make $\sboqcp$
and $\conv{A}\conv{O}\euscr R$ become comparable, \ie $\conv{A}\conv{O}\euscr
R$ has more inputs and outputs than $\sboqcp$. See the proof of
\Cref{thm:boqc_blindness} for explicit details.  \Cref{thm:boqc_blindness}
provides the security statement of the BOQC protocol, whose relaxation
$\sboqcp\simb$ is defined in~\Cref{pro:simulator} and in \Cref{app:protocols}.

\begin{theorem}\label{thm:boqc_blindness}\cite{gustiani}
    The BOQC protocol with dishonest Bob $\pi_{boqc}=\{\conv{A},\conv{O}\}$,
    defined in \Cref{pro:boqc}, is perfectly blind and realizes the ideal resource
    $\sboqcp$ defined in \Cref{fig:sboqcp}.
\end{theorem}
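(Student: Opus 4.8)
The plan is to verify \Cref{def:security} directly, by exhibiting an explicit simulator $\simb$ attached to the $\mathcal B$-interface of $\sboqcp$ and proving $d(\conv{A}\conv{O}\euscr R,\sboqcp\simb)=0$. Since the distinguisher's only access at the $\mathcal B$-interface is whatever $\simb$ forwards, it suffices to show the real system $\conv{A}\conv{O}\euscr R$ and the relaxation $\sboqcp\simb$ are perfectly indistinguishable from the joint viewpoint of the $\mathcal A$-, $\mathcal O$- and $\mathcal B$-interfaces. First I would construct $\simb$. The simulator reads the leak $\ell^{\mathcal{AO}}$ from $\sboqcp$, obtaining $\{(\mathcal G,\tilde I,\tilde O),f,\vera,\vero,\succ,>,b\}$ --- everything needed to run the bookkeeping of \Cref{pro:boqc} except the secret angles $\phi,\psi$ and the input $\rho^{in}_{\mathcal A}$. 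Mimicking steps \circled{1}--\circled{3}, for each $i\in V\setminus O$ in the order $\succ$ the simulator (i) sends Bob a maximally mixed qubit, realized as one half of a fresh maximally entangled pair whose reference half it keeps (for an input node this retained half will later be used to \emph{teleport in} Alice's genuine input); (ii) draws $\delta_i\in\Omega$ uniformly at random and sends it to Bob; and (iii) records Bob's reported outcome $\tilde s_i$. For output nodes it collects the registers Bob returns. Finally $\simb$ assembles, from the retained reference halves and the returned output registers, a single completely positive map together with an extension register, and feeds this package into $\sboqcp$ as the dishonest input $\squared{B}$, so that $\sboqcp$ emits $\tilde{\mathcal E}(\tilde\rho^{in}_{\mathcal{AB}})$ on Alice's interface.

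The core step is to show that, in the real execution, everything Bob sees is statistically independent of $\phi,\psi$ and of the true outcomes $\{s_i\}$, so that $\simb$'s uniform choices reproduce it exactly. For each $i$ the announced angle is $\delta_i=\phi'_i+\pi r_i+\alpha_i$ (and analogously with $\psi'_i,\beta_i$ for Oscar's nodes) while the transmitted qubit is $\ket{+_{\alpha_i}}$. Because $\alpha_i$ is drawn uniformly from $\Omega$ independently of all else, $\delta_i$ is marginally uniform on $\Omega$ irrespective of $\phi'_i$; and conditioned on any value of $\delta_i$, summing over the hidden bit $r_i$ leaves Bob with $\tfrac12(\ketbra*{+_{\delta_i-\phi'_i}}+\ketbra*{+_{\delta_i-\phi'_i-\pi}})=\tfrac12\mathds{1}$, since $\ket{+_\gamma}$ and $\ket{+_{\gamma+\pi}}$ are orthogonal. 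Hence each $(\text{qubit},\delta_i)$ pair Bob receives is a maximally mixed qubit tensored with an independent uniform angle, precisely as produced by $\simb$. That this persists jointly over the whole run, despite the flow-dependence $\phi'_i=(-1)^{s_{\invf{i}}}\phi_i+z(i)\pi$ on earlier outcomes, follows because the fresh pad $(\alpha_i,r_i)$ at node $i$ is independent of everything preceding; the same observation shows $\tilde s_i$ leaks nothing about the true $s_i=\tilde s_i\oplus r_i$, as $r_i$ is an unused one-time pad. For quantum input one additionally needs $Z_i(\alpha_i)X_i^{t_i}$ to perfectly hide $\rho^{in}_{\mathcal A}$: averaging $X^{t_i}$ over $t_i\in\{0,1\}$ and $Z(\alpha_i)$ over $\alpha_i\in\Omega$ first dephases and then bit-flips any state to $\tfrac12\mathds{1}$, which is \emph{exact} here because $\sum_{k=0}^{2^b-1}e^{i\pi k/2^{b-1}}=0$ makes the $\Omega$-average a perfect dephasing channel. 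This is what justifies $\simb$'s use of entangled halves in place of the encrypted input.

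It then remains to match the output delivered on Alice's interface. In the real world Bob applies an arbitrary (possibly memory-keeping, entangling) CP map to the encrypted registers and returns the output qubits, which Alice decrypts via $P=\bigotimes_{i\in O}X_i^{s_{\invf{i}}+t_i}Z_i^{z(i)+t(i)}$; by \Cref{thm:correctness_boqc} the honest part of this reproduces the isometry of \Cref{thm:pattern}, while Bob's deviation together with the random decryption composes into some superoperator acting on Alice's input entangled with Bob's side register. Because $\sboqcp$ grants Bob exactly this freedom --- an arbitrary $\tilde{\mathcal E}$ acting on $\tilde\rho^{in}_{\mathcal{AB}}$ --- the simulator can set $\squared{B}$ to be precisely that composite map, using its retained reference halves to supply Alice's true input through $\sboqcp$. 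Both worlds then produce the same $\rho^{out}_{\mathcal A}$, so the distinguishing advantage of \Cref{eq:distinguishing_advantage} is zero and blindness is perfect.

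The hard part will be this last step: arguing that an arbitrary deviation by Bob, sandwiched between the random encryption of the inputs and Alice's random Pauli decryption, collapses into a single map of the realizable form $\tilde{\mathcal E}(\tilde\rho^{in}_{\mathcal{AB}})$, and that $\simb$ can build this map \emph{online} from only $\ell^{\mathcal{AO}}$ and its retained registers without ever learning $\phi,\psi$ or the outcomes. The perfect hiding established above --- exact dephasing from the arithmetic structure of $\Omega$ together with the orthogonality of $\ket{+_\gamma}$ and $\ket{+_{\gamma+\pi}}$ --- is exactly what guarantees that the pad bookkeeping never leaks and that the composite map is independent of the secret angles, while the entanglement-based (``EPR-halves'') simulation is the standard device that makes the construction go through for quantum input and output. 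For the purely classical-input/output case ($\tilde I=\tilde O=\varnothing$) the argument specialises to the familiar UBQC blindness proof and is considerably simpler, since no input pad or returned register needs to be simulated.
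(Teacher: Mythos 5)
Your proposal is correct and follows essentially the same route as the paper: an EPR-halves simulator in the style of Dunjko \emph{et al.}\ that forwards uniform $\delta_i$ and maximally mixed qubits, the quantum one-time-pad/uniform-$\Omega$ argument for why Bob's view is independent of $\phi,\psi$ and of $s_i=\tilde s_i\oplus r_i$, and an appeal to \Cref{thm:correctness_boqc} plus the freedom granted by $\squared{B}$ to match Alice's output. The only presentational differences are that the paper first collapses Alice and Oscar into a single client $\conv{AO}$ via the secure key channel, and that it matches Bob's received states \emph{exactly} per value of the randomness through the remote-state-preparation identity $Z^{r_i}\ket*{+_{\delta_i-\phi'_i}}=\ket*{+_{\alpha_i}}$ rather than leading with the averaging argument, which by itself only fixes Bob's marginal; your EPR construction supplies the missing joint-state matching in the same way.
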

\begin{proof}
    The following proof is similar to the one in~\cite{gustiani}.
    Applying \Cref{def:security}, $\euscr R\xrightarrow{\pi_{boqc},0}\sboqcp$ if
    there exists a simulator $\simb$ such that $d(\conv{A}\conv{O}\euscr
    R,\sboqcp\simb)=0$; thus, we must find a relaxation $\sboqcp\simb$ that is
    perfectly indistinguishable from $\conv{A}\conv{O}\euscr R$.  Suppose
    the relaxation $\sboqcp\simb$ is defined in \Cref{pro:simulator}, then we proceed to prove
    that it is indistinguishable from $\conv{A}\conv{O}\euscr R$.

    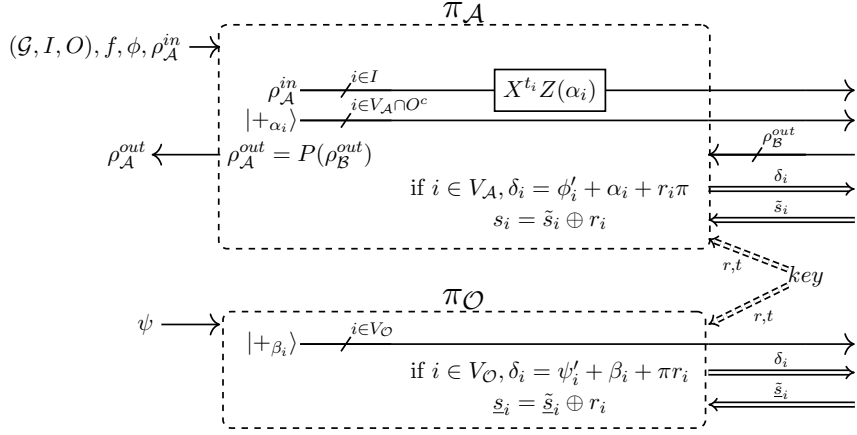
\begin{figure}[!h]
        \centering
        \scalebox{.8}{%
\begin{quantikz}[row sep=0.4em, inner sep=0em, column sep=0.8em]
    {(\mathcal G,I,O),f,\phi,\rho^{in}_{\mathcal A}~}\arrow[rightarrow,r,thick]&[2mm]\hphantom{}\gategroup[wires=7,steps=6,style={dashed,rounded corners, inner xsep=-0pt, inner ysep=4pt}]{\scaleeq[1.5]{\conv{A}}}&[-5mm]&&&&&[1cm]&[2mm] \\
&&& \lstick{$\rho^{in}_{\mathcal A}$}&\qw\qwbundle {i\in I} &\gate{X^{t_i} Z(\alpha_i)}&\qw&\qw\arrow[r,thick]&\\
&&& \lstick{$\ket{+_{\alpha_i}}$}&\qw\qwbundle {i\in \vera\cap O^c} &\qw&\qw&\qw\arrow[r,thick]&\\[2mm]
\phantom{rando}\rho^{out}_{\mathcal A}\:&\arrow[rightarrow,l,thick]&&\rho^{out}_{\mathcal A}= P(\rho^{out}_{\mathcal B})&&&&\arrow[l,thick]\qw\qwbundle {\rho^{out}_{\mathcal B}}&\qw\\ 
&&&&&\text{if}~i\in\vera,\delta_i=\phi_i'+\alpha_i+r_i\pi &\arrow[Rightarrow,rr,thick]{}{\delta_i}&&&&\\
&&&&&s_i=\tilde s_i\oplus r_i&&&\arrow[Rightarrow,ll,thick,swap]{}{\tilde s_i}&&\\
&&&&&&&&&&\\[3mm] 
&&&&&&&{key}\arrow[Rightarrow,ul,dashed,thick]{}{r,t}\arrow[Rightarrow,dl,dashed,thick]{r,t}{r,t}&&&\\[3mm] 
\phantom{randomst}{\psi}~\arrow[r,thick]&\hphantom{}\gategroup[wires=5,steps=6,style={dashed,rounded corners, inner xsep=-2pt, inner ysep=1pt}]{\scaleeq[1.5]{\conv{O}}}&&&&&&\\
    &&& \lstick{$\ket{+_{\beta_i}}$}&\qw\qwbundle {i\in \vero} & \qw &\qw&\qw\arrow[r,thick]&\\[.3em]
&&&&&\text{if}~i\in\vero,\delta_i=\psi'_i+\beta_i+\pi r_i&\arrow[Rightarrow,rr,thick]{}{\delta_i}&&&&\\
&&&&&\underline s_i=\underline{\tilde s}_i\oplus r_i&&&\arrow[Rightarrow,ll,thick,swap]{}{\underline{\tilde s}_i}&&\\ 
&&&&&&&&&&
\end{quantikz}
    }
    \caption[Resource $\conv{A}\conv{O}\res$]{ Pictorial representation of
        $\conv{A}\conv{O}\res$.  Each variable corresponds to the one in
        \Cref{pro:boqc}.  We distinguish measurement outcomes on Alice's interface
        ($\tilde s_i$) and on Oscar's interface ($\underline{\tilde s}_i$), because Bob is allowed to be
        dishonest, thus $\tilde s_i$ and $\underline{\tilde s}_i$ may be different. The left-hand
        side denotes the outside interface of the converters. The right-hand
        side indicates the inside interface connected to resources: secure key
        channel (dashed double arrow), classical channel (single arrow), and
        quantum channel (double arrow). The strike-through line indicates
        multiple qubits.} 
        \label{fig:convs1}
    \end{figure}

    To simplify the problem, we first reduce the protocol to a one-client
    protocol as follows. Consider a pictorial representation of $\pi_{boqc}$ (in the
    absence of Bob) in \Cref{fig:convs1}; it clearly shows that the common
    information between Alice and Oscar is the keys $s,t$, shared via a secure
    key channel.  Since it is known that a secure key channel guarantees secrecy and
    authenticity, Alice and Oscar obtain their keys without leaking any
    information to Bob. Therefore, we may think
    that Alice and Oscar have already shared the keys before the protocol
    starts. Thus, protocol $\conv{A}\conv{O}$ is reducible to a one-client
    protocol $\conv{AO}$, shown in \Cref{fig:convs2}, where the alternating
    part between Alice and Oscar is captured within functions
    $\theta(i)$, $s(i)$, and $\delta(i)$.

    Relaxation $\sboqcp\simb$ defined in \Cref{pro:simulator} is pictured in
    \Cref{fig:simulator}.  Thus, we now can prove the statement
    $d(\conv{A}\conv{O}\euscr R,\sboqcp\simb)=0$ by showing that
    \Cref{fig:convs2} and \Cref{fig:simulator} are indistinguishable.

    Notice that \Cref{fig:convs1,fig:convs2} clearly show identical inputs and outputs,
    indicated by the same configuration of arrows. 
    Thus, it now remains for us to prove that the arrows with
    a circled letter are (statistically) the same.

    Remark that we focus only on blindness without verifiability. Since
    verification is not involved, Alice does not care whether her computation
    is correct or not.  That is, some information related to the computation can be
    arbitrary, such as $\rho_{\mathcal B}^{out}$, $\tilde s_i$,
    $\tilde{\underline s}_i$, and $\delta_i$.

    Consider the information sent by Bob: $\circledl{c}~\rho^{out}_{\mathcal
    B}$, $\circledl{e}~\tilde s_i$, and $\circledl{f}~\underline{\tilde s}_i$. First, in both
    figures, $\rho^{out}_{\mathcal B}$ is an arbitrary state chosen by Bob (simulator).
    Second, in \Cref{fig:convs2}, $\tilde s_i$ and $\underline{\tilde s}_i$ signify measurement
    outcomes seen by Bob, which is random information independent of the actual
    measurement outcomes: $s_i=\tilde s_i\oplus r_i$ and $\underline s_i=\underline{\tilde s}_i\oplus r_i$.  The same
    is true in \Cref{fig:simulator}; $\tilde s_i,\tilde{\underline s}_i$ are arbitrary information
    inputted to Bob's interface in $\sboqcp$.

    We now analyze the information received by Bob: $\circledl{a}$,
    $\circledl{b}$, and $\circledl{d}~\delta_i$.  First, in both figures,
    $\delta_i$ are uniformly distributed random angles, independent of the
    actual measurement angles, $\phi'_i,\psi'_i$.  Second, consider the
    information at $\circledl{a}$, namely at input node $i\in I$.  In
    \Cref{fig:convs2}, $\circledl{a}$ is an encrypted input state
    $X^{t_i}Z_i(\alpha_i)(\rho_i)$, where $\rho_i\coloneqq \tr_{I\setminus
    i}[\rho_{\mathcal A}^{in}]$.  In \Cref{fig:simulator}, $\circledl{a}$ is an
    uncorrected teleported state
    $X^{t_i}_iZ_i^{r_i}Z_i(\delta_i-\phi'_i)(\rho_i)=
    X^{t_i}_iZ_i(\delta_i-\phi'_i-\pi r_i)(\rho_i)=
    X^{t_i}_iZ_i(\alpha_i)(\rho_i)$, which is identical to $\circledl{a}$ in
    \Cref{fig:convs2}. Third, for $\circledl{b}$, consider $i\in \vera$. In
    \Cref{fig:convs2}, $\circledl{b}$ is $\ket{+_{\alpha_i}}$. In
    \Cref{fig:simulator}, $\circledl{b}$ is a remote state
    preparation\cite{bennett2001remote} of $Z^{r_i}\ket*{+_{\delta_i-\phi'_i}}=
    \ket*{+_{\delta_i-\phi'_i-\pi r_i}}= \ket*{+_{\alpha_i}}$. This is also
    clearly true for $i\in\vero$.

    \begin{figure}[!h]
        \centering
        \scalebox{.8}{%
\begin{quantikz}[row sep=0.4em, inner sep=0em, column sep=0.8em]
    {(\mathcal G,I,O),f,\phi,\rho^{in}_{\mathcal A}~}\arrow[rightarrow,r,thick]&[2mm]\hphantom{}\gategroup[wires=7,steps=6,style={dashed,rounded corners, inner xsep=0pt, inner ysep=4pt}]{\scaleeq[1.5]{\conv{AO}}}&[-5mm]&&[3.6mm]&&&[2cm]&[2mm] \\
    &&& \lstick{$\rho^{in}_{\mathcal A}$}&\qw\qwbundle {i\in I} &\gate{X^{t_i} Z(\theta(i))}&\qw&\qw\arrow[r,thick]&\circledl{a}\\
    &&& \lstick{$\ket{+_{\theta(i)}}$}&\qw\qwbundle{i\in I^c\cap O^c} &\qw&\qw&\qw\arrow[r,thick]& \circledl{b}\\[2mm]
    \phantom{randomte}\rho^{out}_{\mathcal A}&\arrow[rightarrow,l,thick]&&\rho^{out}_{\mathcal A}=P(\rho^{out}_{\mathcal B})&&&&\arrow[l,thick]\qw\qwbundle {\rho^{out}_{\mathcal B}}&\qw\circledl{c}\\ 
    &&&&&  \delta_i=\delta(i) &\arrow[Rightarrow,rr,thick]{}{\delta_i}&&\circledl{d}\\ 
    &&&&& s_i=s(i)\oplus r_i &&&\arrow[Rightarrow,ll,thick,swap]{}{\tilde s_i}\circledl{e}\\
    \phantom{randomte}{\psi~}\arrow[rightarrow,r,thick]&\hphantom{}&&&&&&&\arrow[Rightarrow,ll,thick,swap]{}{\underline{\tilde s}_i}\circled{f} 
\end{quantikz}
    }
    \caption[Resource $\conv{AO}\res$]{Pictorial representation of $\conv{AO}\res$. 
    See \Cref{fig:convs1} for notation; additionally,\\ 
    $\theta(i)=
        \begin{cases}
            \alpha_i,\text{if}~i\in\vera \\
            \beta_i,\text{if}~i\in\vero, 
       \end{cases}$
    $s(i)=
        \begin{cases}
            \tilde s_i,\text{if}~i\in\vera \\
        \underline{\tilde s}_i,\text{if}~i\in\vero, 
       \end{cases}$ and~ 
    $\delta(i)=\begin{cases}
        \phi'_i + \alpha_i + \pi r_i,\text{if}~i\in\vera \\
        \psi'_i + \beta_i  + \pi r_i,\text{if}~i\in\vero. 
    \end{cases}$  
    }
    \label{fig:convs2}
    \end{figure}
    \begin{figure}[!h]
        \centering
        \scalebox{.8}{%
\begin{quantikz}[row sep=0.3em, inner sep=0em, column sep=0.8em]
    \hphantom{X}\gategroup[wires=7,steps=13, style={dashed,rounded corners,inner xsep=0pt,inner ysep=4pt}]{\scaleeq[1.5]{\simb}}&\lstick{$\ket0$}&\qw\qwbundle{i\in I}&\gate{H}&\ctrl{1}&\qw&\qw&\qw&\qw&\qw&\qw&\qw&\qw&\qw\arrow[rightarrow,r,thick]&\circledl{a}\\
&\lstick{$\ket1$}&\qw\qwbundle{i\in I}&\qw&\targ{}&\qw& &\lstick{$\ket 0$}&\qw\qwbundle{i\in I^c\cap O^c}&\gate{H}&\ctrl{1}&\qw&\qw&\qw\arrow[rightarrow,r,thick]&\circledl{b}\\
&&&&&\arrow[dash,u,thick]&&\lstick{$\ket 1$}&\qw\qwbundle{i\in I^c\cap O^c}&\qw&\targ{1}&\qw&&&&\\
&&&&&\arrow[dash,u,thick]&&&&&&\arrow[dash,u,thick]&&\qw\qwbundle{}&\arrow[rightarrow,ll,thick,swap]{}{\rho^{out}_{\mathcal B}}\circledl{c}\\
&&&&&\arrow[dash,u,thick]&&&\text{\small pick}~\delta_i\in\Omega&&&\arrow[dash,u,thick]&\arrow[Rightarrow,rr,thick]{}{\delta_i}&&\circledl{d}\\
&&&&&\arrow[dash,u,thick]&&&\text{\small at random}&&&\arrow[dash,u,thick]&&&\arrow[Rightarrow,ll,thick,swap]{}{\tilde s_i}\circledl{e}\\
&&&&&\arrow[dash,u,thick]&&&&&&\arrow[dash,u,thick]&&&\arrow[Rightarrow,ll,thick,swap]{}{\tilde{\underline s}_i}\circledl{f}\\
&&&&&\arrow[rightarrow,d,thick]\arrow[dash,u,thick]&&\arrow[rightarrow,d,thick]{}{\scaleeq[1.5]{\rho^{out}_{\mathcal B}}}&\arrow[Rightarrow,d,thick]{}{\scaleeq[1.5]{\delta_i,\tilde s_i,\tilde{\underline s}_i}}&&&\arrow[rightarrow,d,thick]\arrow[dash,u,thick]&&&\\[1cm]
&&&&&\arrow[dash,u,thick]&&&&\arrow[Rightarrow,u,thick,swap]{}{\scaleeq[1.5]{\ell^{\footnotesize\mathcal AO}}}  &&\arrow[dash,u,thick]&&&  \\[-1mm]
&&&(\mathcal G,I,O),f,\phi,\rho^{in}_{\mathcal A}\arrow[rightarrow,r,thick]&\hphantom{}\gategroup[wires=3,steps=12,style={dashed,rounded corners,inner xsep=-4pt, inner ysep=4mm}, label style={label position=below,anchor=north,yshift=-0.25cm}]{\scaleeq[1.5]{\sboqcp}}&\arrow[dash,u,thick]&\qw\qwbundle{}&\targ{}&\qw&\meter{$t_i$}&&\arrow[dash,u,thick]&&&&  \\
&&&\phantom{random}{\rho^{out}_{\mathcal A}}~&\arrow[rightarrow,l,thick]&\rho^{in}_{\mathcal A}&\qw\qwbundle{}&\ctrl{-1}&\gate{HZ(\tilde\theta(i))}&\meter{$r_i$}&&\arrow[dash,u,thick]&&&&  \\
&&&\phantom{random}\psi~\arrow[rightarrow,r,thick]&&&&&\rho^{out}_{\mathcal A}= P(\rho^{out}_{\mathcal B})&&&\arrow[dash,u,thick]&\qw\qwbundle{}&\gate{HZ(\tilde\theta(i))}&\meter{$r_i$}& 
\end{quantikz}
    }

    \caption[Resource $\sboqcp\simb$]{Pictorial representation of $\sboqcp\simb$
        (\Cref{pro:simulator}).  The variables correspond to those
        defined in \Cref{pro:simulator}. See \Cref{fig:convs1} for the notation;
        additionally,
    $\tilde\theta(i)=
    \begin{cases}
          \delta_i-\phi'_i, & \text{if}~i\in\vera \\
          \delta_i-\psi'_i, & \text{if}~i\in\vero. 
    \end{cases}$
    }
    \label{fig:simulator}
    \end{figure}

   Finally, since both figures have the same inputs and outputs, it remains to
   show that the leak is given by $\ell^{\mathcal{AO}}$. The leak of
   information $\ell^{\mathcal{AO}}$ is inputted to the simulator from Bob's
   interface. The simulator does not use the leak to create any useful
   information.  Now we need to prove that the simulator does not learn
   anything beyond $\ell^{\mathcal{AO}}$. For that, the information received by
   the simulator must be independent of the computation. First, $\delta_i$ is
   an arbitrary angle, thus independent of the computation. Second,
   $\circledl{a}$ and $\circledl{b}$ are completely mixed states because of the
   randomness $\alpha_i,\beta_i$; thus, the simulator cannot guess $\alpha_i$
   or $\beta_i$ with complete certainty.  A curious Bob might correctly guess
   other information such as the flow $f$, however it gives him no advantage. Therefore, there is no more leak than $\ell^{\mathcal{AO}}$ throughout
   the protocol. This concludes the proof.

   \end{proof}

   Note that our strategy for constructing the simulator $\simb$ in
   \Cref{pro:simulator} comes to us from the work on composable security of
   delegated quantum computation by Dunjko \etal in
   Ref.~\cite{dunjko2014composable}. However, its usefulness in our case is
   only guaranteed by \Cref{thm:correctness_boqc}.

   Here we consider the general case when Alice's input and output are quantum,
   \ie $\tilde I=I$ and $\tilde O=O$. The security is maintained as long as
  the input-output configuration admits the security model $\sboqcp$.  Thus, the
   security is maintained for $\tilde I\subset I$ and $\tilde O\subset O$
   for the following reasons: 

   First, letting $\tilde I\subset I$, we denote Alice's quantum input
   $\rho^{in}\in\mathcal H_{\tilde I}$ and quantum input of the security model
   $\rho^{in}_{\mathcal A}\in\mathcal H_{I}$. Since $\mathcal
   H_{I}\subset\mathcal H_{\tilde I}$, she has another classical input in the
   form of a bit string $c$ with length $\abs*{I\setminus\tilde I}$. As one may
   always encode classical information into qubits as we choose, we can set
   $\rho^{in}_{\mathcal A}=\rho^{in}\bigotimes_{i\in I\setminus
   I}\ketbra{c_i}$, where $c_i\in\{0,1\}$. Therefore, $\rho^{in}_{\mathcal A}$
   is now entirely quantum as modeled by $\sboqcp$. Finally, the same applies
   for $\tilde O\subset O$ in which classical output can be represented as a
   diagonal density matrix.


    Now it remains to prove the blindness of BOQCo. The proof is rather
    straightforward because BOQC and BOQCo differ only in the ordering among
    qubit transmissions, entanglements, and measurements.
    \Cref{thm:boqco_secure} provides the security statement of BOQCo, whose
    relaxation $\sboqcp\simb$ is provided in \Cref{pro:simulatoro}, in
    \Cref{app:protocols}.

\begin{theorem}\label{thm:boqco_secure}
    The BOQCo protocol with dishonest Bob $\pi_{boqco}'=(\conv{A},\conv{O})$,
    defined in \Cref{pro:boqco} is perfectly blind, and realizes ideal resource
    $\sboqcp$ defined in \Cref{fig:sboqcp}.

\end{theorem}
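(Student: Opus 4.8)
The plan is to apply \Cref{def:security} in exactly the manner of the proof of \Cref{thm:boqc_blindness}: I must exhibit a simulator $\simb$ with $d(\conv{A}\conv{O}\res,\sboqcp\simb)=0$, i.e.\ so that the real protocol $\conv{A}\conv{O}\res$ running on the BOQCo schedule is perfectly indistinguishable from the relaxation $\sboqcp\simb$ of \Cref{pro:simulatoro}. As in the BOQC case, I would first collapse the two honest clients into a single effective converter: the only information they share is the secure key $(r,t)$, which leaks nothing to Bob, so Alice and Oscar may be treated as one client $\conv{AO}$ whose outgoing messages are the encrypted input states $X^{t_i}Z_i(\alpha_i)\rho_i$ (for $i\in I$), the prepared states $\ket{+_{\alpha_i}}$ and $\ket{+_{\beta_i}}$ (for $i\in O^c$), and the blinded angles $\delta_i=\phi_i'+\alpha_i+\pi r_i$ (resp.\ with $\psi_i',\beta_i$) --- the very quantities appearing in \Cref{fig:convs2}.

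The key observation, already flagged in the surrounding text, is that BOQCo and BOQC differ \emph{only} in the order in which qubit transmissions, entanglements, and measurements are interleaved; the randomness $r,t,\alpha,\beta$ is generated identically and assigned per qubit. Consequently the simulator $\simb$ of \Cref{pro:simulatoro} is obtained from the BOQC simulator of \Cref{pro:simulator} by re-sequencing its remote-state-preparation and measurement-query blocks to follow the lazy total order $>$, without altering any individual block. Every per-qubit argument from the proof of \Cref{thm:boqc_blindness} then transfers verbatim: each $\delta_i$ is a uniformly random angle in $\Omega$, hence independent of $\phi_i',\psi_i'$; each transmitted state ($\circledl{a}$ and $\circledl{b}$ in \Cref{fig:simulator}) is maximally mixed from Bob's viewpoint --- via the quantum one-time pad $X^{t_i}Z_i(\alpha_i)$ on inputs and the random angle $\alpha_i,\beta_i$ on auxiliary qubits --- so it is indistinguishable from the uncorrected teleported / remotely prepared state that $\simb$ emits; and the outcomes $\tilde s_i,\underline{\tilde s}_i$ together with the returned state $\rho^{out}_{\mathcal B}$ are arbitrary Bob-chosen data in both worlds. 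Because these are all \emph{marginal}, single-qubit statements, they are insensitive to the temporal order in which the qubits are processed.

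What remains --- and where I would concentrate the care --- is to argue that the reordering produces no new leakage, i.e.\ that Bob's \emph{joint} view is unchanged. Here I would use that the messages $\{(\text{state}_i,\delta_i)\}_i$ are functions of the mutually independent random variables $\{(\alpha_i,\beta_i,r_i,t_i)\}_i$, so their joint distribution is a product and is therefore invariant under any permutation of the emission schedule; the lazy interleaving dictated by $>$ is merely one such permutation, and it is a legitimate causal schedule by the commutation/causality analysis underlying \Cref{thm:lazy_correctness} and the correctness of BOQCo (\Cref{thm:correctness_boqco}). Finally, the public leak is still exactly $\ell^{\mathcal{AO}}$: as already noted for correctness in \Cref{thm:boqco_correctness}, changing only the ordering reveals nothing beyond the $\{(\mathcal G,\tilde I,\tilde O),f,\vera,\vero,\succ,>,b\}$ that Bob already holds. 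Combining these observations yields $d(\conv{A}\conv{O}\res,\sboqcp\simb)=0$, which is perfect blindness. The main obstacle is thus not any new computation but the bookkeeping needed to verify that the re-sequenced simulator of \Cref{pro:simulatoro} faithfully reproduces the BOQCo message order while preserving the per-qubit independence that the BOQC proof exploited.
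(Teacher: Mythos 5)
Your proposal is correct and follows essentially the same route as the paper: reuse the BOQC simulator of \Cref{pro:simulator} re-sequenced to the lazy total order $>$ (yielding \Cref{pro:simulatoro}), observe that BOQCo differs from BOQC only in the publicly pre-agreed ordering of transmissions, entanglements, and measurements, and conclude indistinguishability by invoking the argument of \Cref{thm:boqc_blindness}. Your added remark that Bob's joint view is a product over the independent per-qubit randomness $(\alpha_i,\beta_i,r_i,t_i)$ and hence invariant under the re-scheduling is a slightly more explicit justification of the step the paper states only informally ("the difference in the process order will not reveal any information because the ordering is determined before the protocol starts"), but it does not change the structure of the argument.
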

\begin{proof}
    By \Cref{def:security}, statement $\euscr R\xrightarrow{\pi_{boqco},0}\sboqcp$ is achieved
    if there exists a simulator $\simb$ such that $d(\conv{A}\conv{O}\euscr R,\sboqcp\simb)=0$.  Such a relaxation
    can be straightforwardly derived from \Cref{pro:simulator} by rearranging
    the process order, viz., qubit transmissions, entanglements, and
    measurements according to the lazy computation in \Cref{alg:lazy} (the
    relaxation is shown explicitly in \Cref{pro:simulatoro}, \Cref{app:protocols}). 

    Since BOQCo and BOQC differ only by the process order, they have the same
    configuration of inputs and outputs.  Therefore, $\conv{A}\conv{O}\euscr R$ and
    $\sboqcp\simb$ have pictorial representations as \Cref{fig:convs2} and
    \Cref{fig:simulator}, respectively, where the figures are proven to be
    indistinguishable in \Cref{thm:boqc_blindness}. The difference in the
    process order will not reveal any information about the computation because
    the ordering is determined before the protocol starts, arranged in
    pre-protocol steps (\Cref{def:prestep_boqc}).  

    Since we obtain an indistinguishable relaxation to $\conv{A}\conv{O}\euscr
    R$ that has the same leak as BOQC, this concludes the proof.
\end{proof}

\subsection{The BOQC-compatible graph states}

The BOQC and the BOQCo protocols provide simple cooperation between Alice and
Oscar to delegate their blind computations, allowing the malicious Bob to learn
no more that public information $\ell^{\mathcal{AO}}\coloneqq\{(\mathcal G,
\tilde I, \tilde O), \vera, \vero,\succ,>,b\}$.  The leaking information
$\ell^{\mathcal{AO}}$ is insufficient to infer the computation, but there is a
catch:

Consider an example of a real-life situation in which Oscar's company is
well-known for storing massive confidential databases. Thus, Bob might reasonably make an
priori assumption that they are running a quantum search algorithm.  If the
oracle's graph varies according to the request being made, Bob might infer some
information about Alice's request, such as, a simple graph marks a zeros state in
the Grover algorithm. Thus, Oscar's oracle graphs must remain the same for all
requests; he is only allowed to vary only the measurement angles for different queries. We call such graphs, \ie
a class of graphs that runs a set of different requests, ``BOQC-compatible''
graphs. 

A BOQC-compatible graph is a standard oracle graph, determined by Oscar, for a
class of requests. One can use a universal graph, \eg a fixed-size brickwork
state as introduced in Ref.~\cite{broadbent2009universal}; however, the number of
qubits increases rapidly with the circuit depth of the gate model representation.
Another strategy is to optimize the graphs, constrained to the set of requests
as done in Ref.~\cite{gustiani2019three}, for 2- and 3-qubits exact Grover
algorithms. The resulting graph is significantly more compact than the
brickwork graph.

\section*{Conclusion}
Here, we have introduced BOQC, a secure client-server oracle computation scheme with three
parties, that is an extension of the UBQC.  
Our scheme provides a means for a
client, without a quantum computer and without the capability of performing the oracle evaluation, to
securely cooperate with an oracle client to delegate the oracular quantum algorithm to a malicious quantum server.
We formally prove that BOQC is blind within composable definitions using the
Abstract Cryptography framework. 
A drawback of our security definition is that it
requires BOQC-compatible graphs as graph states.  
We extend BOQC to BOQCo
(BOQC-optimized), which adapts the protocol for efficient use on a solid-state quantum network, which has the attributes that the
server's qubits possess permanence and can be rapidly re-initialized. 
We provide explicit BOQC and BOQCo protocols that deal with both quantum and classical input and output. 
The BOQCo
scheme allows the server to employ a minimal number of physical qubits; in separate work we show how the scheme enables the implementation of Grover and Simon algorithms in a realistic NV-center network. 
While our protocols promise blindness, we cannot tell whether the
server is being malicious; thus, integrating verifiability into our protocol 
would be a desirable next step in this work.


\section*{Acknowledgement}
We gratefully acknowledge discussions on Abstract Cryptography with
C.~Portmann; we also thank him for proofreading the security proofs in this
manuscript. C.~G. thanks E.~Sovetkin for proofreading some lemmas and for
his idea to employ measurement calculus in reasoning correctness. C.~G.
acknowledges support from a research collaboration with Huawei Technologies.

\printbibliography

\begin{appendix}

\section{Proof of lemmas and theorems}
\label{app:proofs}

All proofs in this section are similar to the ones in Ref.~\cite{gustiani}.

The following proposition is required to proof \Cref{lem:correction}.

\begin{proposition}\label{pro:acausal}\cite{gustiani}
    Suppose $i$ and $j$ are two distinct vertices in an open graph $(\mathcal G,I,O)$
    with flow $(f,\succ)$, then $f(i)\neq f(j)$.
\end{proposition}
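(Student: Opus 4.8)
The plan is to show that the flow map $f:O^c\to I^c$ is injective, which is exactly the claim $f(i)\neq f(j)$ for distinct $i,j$. I would argue by contradiction, and the engine of the argument will be the antisymmetry of the partial order $\succ$ together with a double application of the third flow axiom \textbf{(F2)}.

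First I would suppose, for contradiction, that $f(i)=f(j)=:v$ for two distinct nodes $i,j\in O^c$. By the edge condition \textbf{(F0)}, both $(i,v)$ and $(j,v)$ belong to $E$, so $i$ and $j$ are both neighbours of $v$, i.e. $i,j\in\neig{v}$. This is the only structural fact I need to extract before invoking \textbf{(F2)}.

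The key step is then to apply \textbf{(F2)} twice, once for each of $i$ and $j$. Applying \textbf{(F2)} to $i$ (recalling $f(i)=v$): every $k\in\neig{v}\setminus\{i\}$ satisfies $k\succ i$; since $j\in\neig{v}$ and $j\neq i$, this yields $j\succ i$. Symmetrically, applying \textbf{(F2)} to $j$ (recalling $f(j)=v$): every $k\in\neig{v}\setminus\{j\}$ satisfies $k\succ j$; since $i\in\neig{v}$ and $i\neq j$, this yields $i\succ j$. Having both $j\succ i$ and $i\succ j$ with $i\neq j$ contradicts the antisymmetry of $\succ$, so the assumption $f(i)=f(j)$ is untenable and $f(i)\neq f(j)$.

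I do not anticipate any real obstacle here: the statement is precisely the injectivity of $f$, and the proof is a short contradiction argument. The only thing to be careful about is to apply \textbf{(F2)} symmetrically to both $i$ and $j$ rather than just once — the contradiction arises only once both orderings $j\succ i$ and $i\succ j$ are in hand. Note that condition \textbf{(F1)} ($v\succ i$ and $v\succ j$) is not strictly needed for the contradiction, though it confirms consistency of the setup.
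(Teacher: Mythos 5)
Your proof is correct and follows essentially the same route as the paper's: assume $f(i)=f(j)$, use \textbf{(F0)} to place both $i$ and $j$ in the neighbourhood of the common image, and apply \textbf{(F2)} twice to derive both $j\succ i$ and $i\succ j$, contradicting antisymmetry. Your side remark that \textbf{(F1)} is not strictly needed is accurate (the paper invokes it only to display the chain $k\succ j\succ i$, not to obtain the contradiction).
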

\begin{proof}
    Let us assume the contrary, there exists three distinct nodes
    $i,j,k\in \mathcal G$ such that $f(i)=f(j)=k$. 
    Given that flow $f$ induces a partial ordering $\succ$; by the definition of
    flow, every criterion in it must be satisfied, \ie 
    \begin{align*}
        \text{\bfseries (F0)}&~(k,i)\in E(\mathcal G)~\text{and}~(k,j)\in E(\mathcal G) \\
        \text{\bfseries (F1)}&~k \succ i~\text{and}~k\succ j\\
        \text{\bfseries (F2)}&~\forall v\in \neig{k}\setminus\{i\}, v\succ i~\text{and}~
                                \forall w\in \neig{k}\setminus\{j\}, w\succ k.
    \end{align*}
    The first criterion of \textbf{(F2)} entails $j\in \neig{k}\setminus \{i\}$
    since $j\in \neig{k}$ by criterion \textbf{(F0)}, which implies $j\succ i$;
    hence, \textbf{(F1)} imposes an ordering $k\succ j\succ i$.
    However, the second criterion of \textbf{(F2)} also implies that $i\in \neig{k}\setminus \{j\}$,
    because $i\in \neig{k}$ --- \textbf{(F0)}; thus, $i\succ j$, which 
    leads to a contradiction.
\end{proof}

\lemcorrection*
\begin{proof}[Proof of \Cref{lem:correction}]
Observe that both patterns are ordered in the following manner: (1) graph state
preparation (qubit initialization and entangling operations), (2) adaptive
measurements that follow the partial order $\succ$, and (3) applied Pauli
corrections interspersed among the measurements. It is evident from inspection
that the measurement operators $M_i^{\phi_i}$ occur in the same order in
patterns $\mathfrak P_1$ and $\mathfrak P_2$. Thus, it only remains to prove that
both patterns implement identical Pauli corrections. 

Consider the $X$-corrections. After measuring $i$ with output $s_i$, the
$X$-correction propagates to qubit $f(i)$, that is, $X_{f(i)}^{s_i}$ will appear
before measuring $f(i)$, as seen in $\mathfrak P_1$. Equivalently,\footnote{Note that
$f$ is injective, which is shown in \Cref{pro:acausal}.}
qubit $f(i)$
receives $X$-correction from measurement outcome of qubit $i$, where $i$ is the
preimage of $f(i)$. Thus, before measuring $f(i)$, the correction from
measuring $i$ must be present, namely $X_{f(i)}^{s_i}$. Let $f(i)\eqqcolon j$,
thus $X_{f(i)}^{s_i}=X_j^{s_{\invf{j}}}$, which is the corresponding operator
in $\mathfrak P_2$.

The same treatment applies to the $Z$-corrections; however, the corrections
propagate to a subset of nodes instead of a single node. From pattern $\mathfrak P_1$,
measuring $i$ produces $Z$-corrections on nodes
$\{\neig{f(i)}\setminus\{i\}\}\eqqcolon K$, which are the neighbors of
$f(i)$. From the perspective of $k\in K$, qubit $k$ receives corrections from all
measurement outcomes of $i$, where the set of neighbors of $f(i)$ contains
$k$.  From the flow condition, we know that $i\prec k$, meaning that $i$ has already
been measured. Thus, the $Z$-correction of qubit $k$ must include all
measurement outcomes from qubits $\{i\prec k \mid k\in\neig{f(i)}\}$. That is
what we see in the superscript of $Z$ in pattern $\mathfrak P_2$.   

Since every node is corrected by one Pauli operation, the correction of the
output can be represented as an operator acting on the output qubits.
Note that the output qubits  lie in the last layer of the graph: for all $k\not\in
O$, for all $i\in O$, we have $k\prec i$; thus, the corrections can appear at
the very end of the pattern.
\end{proof}

\thmcorrectnessboqc*
\begin{proof}[Proof of \Cref{thm:correctness_boqc}]

  Given are Alice's input $\{(\mathcal G,I,O),\phi\}$ together with a flow $f$,
  and Oscar's input $\psi$. We can write the total computation as $\{(\mathcal
  G,I,O),\theta\}$, where $\theta=\phi\cup\psi$ contains measurement angles for
  all nodes $i\in O^c$.  We need to show that  \Cref{pro:boqc}  results in
  pattern 
   $\POP_{i\in O^c}
   (X^{s_i}_{f(i)}\prod_{k\in N_{\mathcal G}(f(i)) \setminus \{i\}}
   Z_k^{s_i}M_i^{\theta_i})E_{\mathcal G} N_{I^c}^0\eqqcolon\mathcal P_{\euscr S}$.  

  First, we consider Alice and Oscar as one client called super-Alice.
  Omitting the randomness $r_i=\alpha_i=\beta_i=0$,\footnote{In this condition
  the computation is within the 1WQC scheme.} running a computation
  with     \Cref{pro:boqc} results in the pattern 
  \[\mathcal P_{\euscr R}=
  \bigotimes_{j\in O} 
    X_j^{s_{\invf{j}}}
    Z_j^{z(j)}\,
  \POP_{i\in O^c}
  M_i^{\theta'_i}
    X_i^{s_{\invf{i}}}
    Z_i^{z(i)}
  E_{\mathcal G}N_{I^c}^0
  \stackrel{\text{Lem.~\ref{lem:correction}}}{=}
  \mathcal P_{\euscr S},\]
  where $z(i)\coloneqq {\sum_{k\prec i,i\in\neig{f(k)}}s_k}$.
  Thus, with the absence of randomness, $\mathcal
  P_{\euscr R}=\mathcal P_{\euscr S}$. 

  Second, consider $\gamma$, a random variable that contains $\alpha$ and
  $\beta$, thus $\gamma=\alpha\cup\beta$. Including $\gamma$ and $r$ in the
  computation,\footnote{ This can be seen as UBQC with graph $\mathcal G$
  instead of the brickwork graph~\cite{broadbent2009universal} 
.} we obtain pattern 
\begin{equation}\label{eq:super-alice}
     \mathcal P_{\euscr R}\coloneqq
     \POP_{i\in O^c} (X^{s_i}_{f(i)}\prod_{k\in N_{\mathcal G}(f(i)) \setminus
  \{i\}} Z_k^{s_i+r_i}M_i^{\theta_i+\gamma_i+\pi r_i})E_{\mathcal G}
  N_{I^c}^\gamma.
\end{equation}

Thus, using \Cref{eq:correction} and commutation $E_{\mathcal G}N_i^{\gamma_i}
=E_{\mathcal G}Z_i(\gamma_i)N_i^0 =Z_i(\gamma_i)E_{\mathcal G} N_i^0,$ random
variable $\gamma_i$ is cancelled out.\footnote{ Within the gate representation, measuring $i$ in $\gamma$ means
$M_i^{\gamma}\rho=\tr[P_i H_i Z_i(-\gamma)\rho]$, where $P$ is measurement
projector in the computational basis; thus, $M_i^\gamma N_i^\gamma = M_i^0N_i^0$.}
The random angle $\pi r_i$ is canceled out by
flipping the measurement outcome, $s_i=s_i\oplus r_i$, which is expressed in
the superscript of $Z$-corrections. 
Thus, equality $\mathcal P_{\euscr R}=\mathcal P_{\euscr S}$ is maintained.

Third, we show that encryption of the quantum inputs
$X_i^{t_i}Z_i(\alpha_i)(\rho^{in})$ is perfectly decrypted during the process.
Random variable $\alpha_i$ is cancelled out in the same manner as removing
$\gamma$ above.  The random variable $t_i$ is canceled out by flipping the sign
of the measurement angle \footnote{This can be considered as a measurement that is $X$-dependent.  }
$M_i^\theta
X_i\rho^{in}=M_i^{-\theta}\rho^{in}$. Thus, we conclude that $\rho^{in}$ is
intact.

Finally, we divide super-Alice into Alice and Oscar. Calculating corrected angles
is separately done, since they know the real measurement outcomes
$s_i=\tilde s_i\oplus r_i$. Canceling out the $z$-rotations $R^z_i(\alpha_i)$ or
$R^z_i(\beta_i)$ is also done separately, see
lines~\ref{ln:a_delta} and~\ref{ln:o_delta} of the Protocol. Thus, Alice and Oscar keep their
measurement angles and the random variables to themselves. 
\end{proof}

\lemmutuallydisjoint*
\begin{proof}[Proof of \Cref{lem:mutually_disjoint}]
    We need to prove that $f(i)\in A(i)$ and $A(i)\cap A(k)=\varnothing, k\neq
    i$, for all $i,k\in O^c$.

    Proof of the first part: Using the existence of flow $f:O^c\rightarrow
    I^c$, thus $f(i)\eqqcolon j$ exists (where $j\in I^c$), and \Cref{eqn:flow}
    applies. It follows that $j\in \neig{i}$, and also $j\in\neigc{i}$, where
    $j\succ i$. But $j\not\in \cup_{k< i}\neigc{k}$, because the flow condition
    $k\in\neig{j}\setminus\{i\}, k\succ i$ (it is also true that
    $k>i$);\footnote{The flow must not have a neighbour in the past, \ie a
    neighbour that has already measured.} then clearly $j\not\in I$. Thus,
    $A(i)$ contains at least $f(i)$.

    Proof of the second part: 
    Denote $a_i\coloneqq \neigc{i}$,
    $b_i\coloneqq I\cup_{j<i}\neigc{j}$, thus $A(i)\equiv a_i\setminus b_i$.
    Given that every node has an ordering, consider the case $i<k$:
  \[
      A(i)\cap A(k) \equiv (a_i\setminus b_i)\cap (a_k\setminus b_k)
    = [( a_i\setminus b_i)\setminus b_k]\cap a_k
    \stackrel{b_i\subset b_k}{=}  (a_i\setminus b_k)\cap a_k
    \stackrel{a_i\subset b_k}{=}\varnothing. 
  \]
It is clear that $b_i\subset b_k$ and $a_i\subset b_k$, since $i<k$. 
\end{proof}

\lempartition*
\begin{proof}[Proof of \Cref{lem:partition}] 
    Let $\mathcal G=(V,E)$.
    Recall that $\cup_{i\in V}A(i)=I^c\iff \cup_{i\in V}A(i)\subseteq I^c$ and
    $\cup_{i\in V}A(i)\supseteq I^c$; we assume the contrary, that is, let
    $\cup_{i\in V}A(i)=S$, thus $S\neq I^c$, \ie either $S\subsetneq I^c$ or
    $S\supsetneq I^c$ is true.

    Consider the first case, where $S\subsetneq I^c$.  Here, there exists $k\in
    I^c$, such that $k\not\in A(i)$ for all $i\in I^c$. {We split this first case proof into two
    parts: for~$i\in O^c$ and for~$i\in O$.}

    {For $i\in O^c$}, by using \Cref{lem:mutually_disjoint}, $A(i)$ contains at
    least $f(i)$. Note that $f:O^c\rightarrow I^c$; thus, $\invf{k}\eqqcolon j$
    exists, where $j\in O^c$. Therefore, $k\in A(j)$. Moreover, from
    \Cref{lem:mutually_disjoint} we know that $S\neq\varnothing$. 

    {For $i\in O$, we have $A(i)\subset O$ because if $A(i)$ contains
        an element in $O^c$, it is covered in the case $i\in O^c$ above. As
        output nodes also have a total ordering, $\cup_{i\in O} A(i)$ covers all
        output nodes that are disjoint to $O^c$ nodes.}

    We arrive at a contradiction, so the assumption is incorrect, and we
    conclude that $I_{c} \setminus S = \varnothing$ or $I_{c}\subseteq S$.

    Consider the second case, where $S\supsetneq I^c$. Here, there exists a $k\in
    I^c$, such that $A(k)$ contains $m$ where $m\not \in I^c$. By definition,
    $A(k)$ contains at most its closed neighborhood excluding the inputs, \viz
    $\neig{k}\setminus I$. Thus, $A(k)$ must be in the graph, since $\mathcal
    G[\neig{k}]$ is a subgraph of $\mathcal G$; also, any element of $I$ cannot
    be in $A(k)$. We again arrive at a contradiction, and must conclude that
    $S \setminus I_{c} = \varnothing$ or $S \subseteq I_{c}$.

    Since $S \subseteq I^{c}$ and $I^{c} \subseteq S$, it follows that $S=I^c$.
\end{proof}

\leme*
\begin{proof}[Proof of \Cref{lem:e}]
    Note that we are employing the notation of \Cref{eq:e_notation}.  Denote
    $(i,j)$ as an edge in $E(\mathcal G)$ with end nodes $i$ and $j$. If we represent
    $\prod_{i\in V} E^>_{i\neig{i}}$ as a collection of edges, we obtain
    $\bigcup_{i\in V} \{(i,k)\mid k\in\neig{i}, k>i \}\eqqcolon S$.  On the
    other hand, the Handshaking lemma \cite{biggs1986graph} implies that,
    $\biguplus_{i\in V} \{(i,k)\mid k\in\neig{i}\}$ will result in a multiset
    that contains double copies of edges --- $\forall a\in E(\mathcal G)$ the multiplicity of $a$ is 2,
    where $\uplus$ signifies the union of multisets
    \cite{syropoulos2000mathematics}. Thus restricting to edges $(i,k)$ where
    $i<k$ eliminates the double counting; therefore $S=E(\mathcal G)$.
\end{proof}

\thmcorrectnessboqco*
\begin{proof}[Proof of \Cref{thm:correctness_boqco}]
    We use the same reasoning as in the proof of \Cref{thm:correctness_boqc}
    by showing that the resulting pattern from
    \Cref{pro:boqco} can be reduced to the pattern in
    \Cref{thm:pattern}.  Note that, in this context, converters
    $\conv{A},\conv{O}$, and $\conv{B}$ correspond to
    procedures of the BOQCo protocol defined in \Cref{pro:boqco}; real resource $\euscr R$ comprises the same
    elements as the BOQC \viz a secure key, two-way
    classical channels, and two-way quantum channels.

    Consider Alice and Oscar as one party, which we call super-Alice, who has all
    information about the angles $\phi,\psi$ and the random variables $\alpha,
    \beta, r, t$. We denote $\gamma\coloneqq\alpha\cup\beta$ and $\theta\coloneqq\phi\cup\psi$. 

    First, we omit randomness, so that $r_i=t_i=\alpha_i=\beta_i=0$ for
    all $i$; the resulting BOQCo pattern ($\mathfrak P_{boqco}$) can be written as 
    \[\mathfrak P_{boqco}\coloneqq 
    \bigotimes_{j\in O}
    X_j^{s_{\invf{j}}}
    Z_j^{z(j)}
    E_O
    \OP_{i\in O^c}  
    M_i^{\theta_i}
    X_i^{s_{\invf{i}}}
    Z_i^{z(i)}
    E_{i\neig{i}}^>
    N_{A(i)}^0
    =\mathfrak P_{lazy}
    \stackrel{\text{Theo.} \ref{thm:lazy_correctness}}{=}
    \mathfrak P_{1wqc},
\]
    where $z(i)\coloneqq \bigoplus_{k<i, i\in\neig{f(k)}}s_k$,
    $\invf{i}\equiv{f^{-1}(i)}$,
    $\mathfrak P_{lazy}$ is the resulting pattern of lazy 1WQC (\Cref{eq:pattern_lazy}),
    and $\mathfrak P_{1wqc}$ is the resulting pattern of 1WQC (\Cref{eq:pattern}), where 
\[\mathfrak P_{1wqc}\coloneqq \POP_{i\in O^c}(X_{f(i)}^{s_i}
\prod_{k\in\neig{f(i)}\setminus\{i\}}Z_i^{s_i} M_i^{\theta_i})E_{\mathcal G}N_{I^c}^0.\]
    Thus, $\mathfrak P_{boqco}=\mathfrak P_{1wqc}$ holds in the absence of randomness.
    Note that total ordering $>$ is consistent with partial ordering $\succ$ (partial ordering
    induced by the flow); thus we can interchangeably use both.
    Recall that operator $\bigotimes$ allows for concurrency, whereas $\prod$, $\POP$, and $\OP$ indicate serial operations.

    Second, we introduce random variables $\gamma$ and $r$ into the protocol; now, pattern
    $\mathfrak P_{boqco}$ becomes
    \begin{equation}
        \mathfrak P_{boqco}= 
        \POP_{i\in O^c}
        (X^{s_i}_{f(i)}\prod_{k\in N_{\mathcal G}(f(i)) \setminus
        \{i\}} Z_k^{s_i+r_i}M_i^{\theta_i+\gamma_i+\pi r_i})E_{\mathcal G}
        N_{I^c}^\gamma,
    \end{equation}
    which is identical to \Cref{eq:super-alice}. Thus, from this point, the proof proceeds identically  
    to that of \Cref{thm:correctness_boqc} from \Cref{eq:super-alice}. 
\end{proof}

\section{Protocols and relaxations}\label{app:protocols}
\begin{protocol}[h]\caption{BOQC with classical input-output}
    \label{pro:boqcc}
  \begin{algorithmic}[1]
      \Statex{\hspace{-2em}\bfseries Alice's input: $\{(\mathcal G, I, O),f,\phi,c=c_1c_2,\dots,c_n\}$}\Comment{$\tilde I=\tilde O=\varnothing,\rho^{in}_{\mathcal A}=\prod_{i=1}^{n}\ketbra{c_i}$}
  \Statex{\hspace{-2em}\bfseries Oscar's input: $\{\psi\}$}
  \Statex{\hspace{-2em}\bfseries Alice's output for an honest Bob:
  $\rho^{out}_{\mathcal A}=\mathcal E(\rho^{in}_{\mathcal A})$}\Comment{$\rho^{out}_{\mathcal A}$ is a diagonal matrix}

  \Statex{\itshape Assumptions and conventions: 
          \begin{enumerate}[(I)]
              \item Alice ($\mathcal A$) and Oscar ($\mathcal O$) have
                  performed pre-protocol steps in \Cref{def:prestep_boqc}; 
                  Bob knows $\{(\mathcal G,\tilde I,\tilde O),\vera,\vero,\succ,>,b\}$. 
                  Here, we set $\tilde I=\tilde O=\varnothing$, and recall 
                  $\Omega=\{\frac{\pi k}{2^{b-1}}\}_{0\leq k<2^b}$.

      \item $s_{\invf{i}}=0,\forall i\in I$.
          
      \item $\invf{i}\equiv f^{-1}(i)$ and 
          $z(i)\coloneqq\bigoplus_{k\prec i, i\in \neig{f(k)}}s_k$. 
          \end{enumerate}}


  \Statex{\hspace{-2em}\bfseries \circled{0} Pre-preparation}

  \State{Alice and Oscar receive a key $r$ via a secure key channel,
  where $r_i\in\{0,1\}$, for $i\in O^c$.}

    \Statex{\hspace{-2em}\bfseries\centering \circled{1} State preparation}

    \For{$i \in V$ which follows partial ordering $\succ$}
    \If{$i\in \vera $} 
    \If{$i \in I$}
    \State{Alice prepares $\ket{+_{\alpha_i+c_i\pi}}_i$  and sends it to Bob; $\alpha_i\in\Omega$ is chosen at random.}
  \Else
      \State\parbox[t]{0.89\linewidth}{Alice prepares
        $\ket{+_{\alpha_i}}_i$ and sends it to Bob, where $\alpha_i\in\Omega$ is
      chosen at random}
      \EndIf
    \ElsIf { $i \in \vero$}
      \State{Oscar prepares $\ket{+_{\beta_i}}_i$ and sends it to Bob; $\beta_i\in\Omega$ is chosen at random.}
    \EndIf
    \EndFor
    \Statex{\hspace{-2em}\bfseries\centering \circled{2} Graph state formation}
    \State{Bob applies entangling operator $E_{\mathcal G}$ defined in \Cref{eq:e_notation}.}

    \Statex{\hspace{-2em}\bfseries \circled{3} Classical interaction and measurement}

    \For{$i \in V$ which follows partial ordering $\succ$}\Comment{\eg it follows $>$}
    \If{$i\in \vera$} 

    \State{Alice computes $\phi'_i=(-1)^{s_{\invf{i}}}\phi_i+z(i)\pi$.}
    \State{Alice computes $\delta_i\coloneqq \phi'_i + \pi r_i + \alpha_i$, and sends Bob $\delta_i$.} 

    \ElsIf { $i \in \vero$}
     \State{Oscar computes $\psi'_i=(-1)^{s_{\invf{i}}}\psi_i+ z(i)\pi$.}
    \State{Oscar computes $\delta_i\coloneqq \psi'_i + \pi r_i + \beta_i$, and sends Bob $\delta_i$.}

    \EndIf

    \State{Bob measures qubit $i$ in basis $\ket{\pm_{\delta_i}}$, then sends Alice and Oscar the outcome $\tilde s_i$.}

    \State{Alice and Oscar set $s_i=\tilde s_i\oplus r_i$.}
    \EndFor
 \end{algorithmic}
\end{protocol}
\begin{protocol}[h]
    \caption{BOQCo with classical input-output}
  \label{pro:boqcocc}
  \begin{algorithmic}[1]

      \Statex{\hspace{-2em}\bfseries Alice's input: $\{(\mathcal G, I, O),f,\phi,c=c_1c_2,\dots,c_n\}$}\Comment{$\tilde I=\tilde O=\varnothing,\rho^{in}_{\mathcal A}=\prod_{i=1}^{n}\ketbra{c_i}$}
  \Statex{\hspace{-2em}\bfseries Oscar's input: $\{\psi\}$}
  \Statex{\hspace{-2em}\bfseries Alice's output for an honest Bob:
  $\rho^{out}_{\mathcal A}=\mathcal E(\rho^{in}_{\mathcal A})$}\Comment{$\rho^{out}_{\mathcal A}$ is a diagonal matrix}

  \Statex{\itshape Assumptions and conventions: 
          \begin{enumerate}[(I)]
              \item Alice ($\mathcal A$) and Oscar ($\mathcal O$) have
                  performed pre-protocol steps in \Cref{def:prestep_boqc}; 
                  Bob knows $\{(\mathcal G,\tilde I,\tilde O),\vera,\vero,\succ,>,b\}$. 
                  Here, we set $\tilde I=\tilde O=\varnothing$, and recall 
                  $\Omega=\{\frac{\pi k}{2^{b-1}}\}_{0\leq k<2^b}$.

      \item $s_{\invf{i}}=0,\forall i\in I$.
          
      \item $\invf{i}\equiv f^{-1}(i)$ and 
          $z(i)\coloneqq\bigoplus_{k\prec i, i\in \neig{f(k)}}s_k$. 
          \end{enumerate}}

\Statex{\hspace{-2em}\bfseries \circled{0} Pre-preparation}
  \State{Alice and Oscar receive a key $r$ via a secure key channel,
  where $r_i\in\{0,1\}$, for $i\in O^c$.}

\Statex{\hspace{-2em}\bfseries\centering \circled{1} BOQC by parts}
%
\For{$i \in V\setminus O$ with ordering $>$}

\For{$k\in A(i)\cup I $}\Comment{see \Cref{eq:ai} in \Cref{sec:1wqc}}\label{ln:boqcocc_mainloop}
\If{$k\in I$}\Comment{input qubit}

\State{Alice prepares $\ket{+_{\alpha_i+c_i\pi}}_i$ and sends it to Bob;
    $\alpha_i\in\Omega$ is chosen at random.
}

\Else \Comment{auxiliary qubit}
            \If{$k\in \vera$}
                    \State{Alice prepares $\ket{+_{\alpha_k}}_k$ and sends
                    it to Bob, $\alpha_k\in\Omega$ is chosen at random.}
            \ElsIf{$k\in\vero$}
                    \State{Oscar prepares $\ket{+_{\beta_k}}_k$ and send it to Bob, $\beta_k\in\Omega$ is chosen at random.}
            \EndIf

        \EndIf
    \EndFor

    \State{Bob applies entangling operation ${E^>_{i N_{\mathcal G}(i)}} $.}
    \Comment{See \Cref{eq:e_notation} in \Cref{sec:1wqc}} 

    \If{$i\in \vera$}
        \State{Alice computes $\phi'_i=(-1)^{s_{\invf{i}}}\phi_i + z(i)\pi$.}
        \State{Alice computes $\delta_i\coloneqq \phi'_i + \pi r_i + \alpha_i$ and sends Bob $\delta_i$.}\label{ln:boqcocc_alice_delta}

    \ElsIf{$i\in\vero$}
        \State{Oscar computes $\psi'_i=(-1)^{s_{\invf{i}}}\psi_i + z(i)\pi$.}
        \State{Oscar computes $\delta_i\coloneqq \psi'_i + \pi r_i + \beta_i$ and sends Bob $\delta_i$.} 
    \EndIf

    \State{Bob measures $i$ in $\ket{\pm_{\delta_i}}$ basis, then sends Alice and Oscar 
          the outcome $\tilde s_i$.}
    \State{Alice and Oscar set $s_i=\tilde s_i\oplus r_i$.}
\EndFor
  \end{algorithmic}
\end{protocol}

\begin{protocol}[!h]
    \caption{\cite{gustiani}A relaxation $\sboqcp\simb$}
\label{pro:simulator}

\begin{algorithmic}[1]
  \Statex{\itshape Conventions: 

    \begin{enumerate}[(I)]
        \item Given public information $\{(\mathcal G,\tilde I,\tilde O),\vera,\vero,\succ,>,b\}$,
            where $\tilde I=I$, $\tilde O=O$, $\mathcal G=(V,E)$, 
            $V=\vera\cup\vero$, and $\Omega=\{\frac{\pi k}{2^{b-1}}\}_{0\leq k<2^b}$.
      \item $s_{\invf{i}}=0, \underline s_{\invf{i}}=0$ for all $i\in I$ and $t_i=0$ for all $i\in I^c$.
      \item $\invf{i}\equiv f^{-1}(i)$, $z(i)\coloneqq\bigoplus_{k<i, i\in \neig{f(k)}}s_k$,
            $\underline z(i)\coloneqq\bigoplus_{k<i, i\in \neig{f(k)}}\underline s_k$,
          and  $t(i)\coloneqq\sum_{k\in I, i\in N_{\mathcal G}(k)}t_k$.
      \item Measurements are performed in the computational basis.
    \end{enumerate}}

\Statex{\bfseries The simulator $\simb$}
\State {Prepares an EPR pair $(\ket{00}+\ket{11})/\sqrt{2}$ for every node  $i\in O^c$ and outputs its half.}

\State{Picks random angles $\{\delta_i\in\Omega \mid i\in O^c\}\eqqcolon\delta$
    and outputs it.}
\State{Receives responses $\{\tilde s_i\in\{0,1\} \mid
i\in O^c\cap\vera\}\eqqcolon \tilde s$ and
$\{\underline{\tilde s}_i\in\{0,1\} \mid i\in O^c\cap\vero\}\eqqcolon \underline{\tilde s}$.}

\State{Receives the corresponding output qubits $\rho^{out}_{\mathcal B}$ (all qubits $i\in O$).}

\State{Sends $\sboqcp$ the other halves of EPR pairs 
    ($\{$half-EPR-$i\mid i\in O^c\}\eqqcolon$EPRs), 
$\delta, \tilde s, \tilde{\underline s}$, and $\rho^{out}_{\mathcal B}$.}\label{ln:outinfo}

\Statex{}
\Statex{\bfseries The ideal BOQC resource $\sboqcp$}
\State{Receives $\{(\mathcal G,I,O),f,\rho^{in}_{\mathcal A},\phi\}$ at Alice's interface and $\psi$ at
Oscar's interface, and information from step \ref{ln:outinfo} ---  
EPRs, $\delta, \tilde s, \tilde{\underline s}$, and $\rho^{out}_{\mathcal B}$.}

\State{Applies CNOT gates between $\tr_{I\setminus i}[\rho^{in}_{\mathcal A}]$ (as control) and half-EPR-$i$, for all $i\in I$, 
    then measures the half-EPR, stores the outcome as $t_i$, and 
     updates measurement angles: 
    \[\begin{gathered}
    \forall i\in I, \phi_i=(-1)^{t_i}\phi_i, \\ 
    \forall i\in I,\forall j\in \neig{i}\cap\vera,\phi_j=\phi_j+t_i\pi\\ 
    \forall i\in I,\forall j\in \neig{i}\cap\vero,\psi_j=\psi_j+t_i\pi. 
    \end{gathered}\]
}\label{ln:sim_update_angles_ti}

\For{$i\in O^c$ which follows ordering $>$}
\If{$i\in\vera$}

\State{Computes $\phi'_i=(-1)^{s_{\invf{i}}}\phi_i + z(i)\pi$.} 
\State{Computes $\theta'_i=\delta_i-\phi'_i$.}

\ElsIf{$i\in\vero$}
\State{Computes $\psi'_i=(-1)^{\underline s_{\invf{i}}}\psi_i + \underline z(i)\pi$.}
\State{Computes $\theta'_i=\delta_i-\psi'_i$.} 
\EndIf


\If{$i\in I$}
\State{Applies $HZ_i(\theta_i)$ to an input qubit
$\tr_{I\setminus i}[\rho^{in}_{\mathcal A}]$ followed by measurement.}
\Else
\State{Applies $HZ_i(\theta_i)$ to the half-EPR-$i$ followed by measurement}
\EndIf

\State{Stores the measurement outcome as $r_i$ and 
sets $s_i=\tilde s_i \oplus r_i$ and $\underline s_i=\tilde{\underline s}_i\oplus r_i$.}
\EndFor

\State{Corrects the output $P(\rho^{out}_{\mathcal B})\eqqcolon 
\rho^{out}_{\mathcal A}$, where 
$P\equiv \bigotimes_{i\in O}X_i^{s_{\invf{i}}+t_i}
Z_{i}^{z(i)+t(i)}$,
then outputs it on Alice's interface.
}
  \end{algorithmic}
\end{protocol}
\begin{protocol}[!b]
    \caption{A relaxation ($\sboqcp\simb$) for BOQCo}
\label{pro:simulatoro}
  \begin{algorithmic}[1]

  \Statex{\itshape Conventions: 

    \begin{enumerate}[(I)]
        \item Given public information $\{(\mathcal G,\tilde I,\tilde O),\vera,\vero,\succ,>,b\}$,
            where $\tilde I=I$, $\tilde O=O$, $\mathcal G=(V,E)$, 
            $V=\vera\cup\vero$, and $\Omega=\{\frac{\pi k}{2^{b-1}}\}_{0\leq k<2^b}$.
      \item $s_{\invf{i}}=0, \underline s_{\invf{i}}=0$ for all $i\in I$ and $t_i=0$ for all $i\in I^c$.
      \item $\invf{i}\equiv f^{-1}(i)$, $z(i)\coloneqq\bigoplus_{k<i, i\in \neig{f(k)}}s_k$,
            $\underline z(i)\coloneqq\bigoplus_{k<i, i\in \neig{f(k)}}\underline s_k$,
          and  $t(i)\coloneqq\sum_{k\in I, i\in N_{\mathcal G}(k)}t_k$.
      \item Measurements are performed in the computational basis.
    \end{enumerate}}

\Statex{\bfseries Part \circled{1}}
\State{$\sboqcp$ receives $\{(\mathcal G,I,O),f,\rho^{in}_{\mathcal A},\phi\}$ at Alice's interface and $\psi$ at Oscar's interface. }
\Statex{}
\Statex{\bfseries Part \circled{2}}
\For{$i \in V\setminus O$ with ordering $>$}
\For{$k\in A(i)\cup I$}\Comment{see \Cref{eq:ai} in \Cref{sec:lazy}}


%

\Statex{\bfseries The simulator $\simb$}
\If{$k\not\in O$}

\State{Prepares an EPR pair $(\ket{00}+\ket{11})/\sqrt{2}$ and outputs its half.}
\State{Picks $\delta_k\in\Omega$ at random, outputs $\delta_k$.}

\State{Receives a responses $\tilde s_k$ and $\tilde {\underline s}_k$, where $\tilde s_k,\tilde{\underline s}_k\in\{0,1\}$.}
\State{Sends resource $\sboqcp$ the other half of EPR pair (half-EPR-$k$), $\delta_k$, $\tilde s_k$,and $\tilde{\underline s}_k$.}

\ElsIf{$k \in O$}
\State{Receives output qubit $\tr_{O\setminus k}[\rho^{out}_{\mathcal B}]$, and   
sends it to the ideal resource.}
\EndIf

\Statex{}
\Statex{\bfseries The ideal BOQCo resource $\sboqcp$}
\State{Receives \{half-EPR-$k$,$s_k,\tilde s_k,\tilde{\underline s}_k,\delta_k$ \} if $k\in O^c$, otherwise receives 
$\tr_{O\setminus k}[\rho^{out}_{\mathcal B}]$.}

\If{$k\in I$}
\State{Applies CNOT gate between $\tr_{I\setminus k}[\rho^{in}_{\mathcal A}]$ and half-EPR-$k$.}
\State{Measures half-EPR-$k$, stores the outcome as $r_k$, then 
       updates: 
    \[\begin{gathered}
        \phi_k=(-1)^{t_k}\phi_k, \\
    \forall j\in \neig{k}\cap\vera,\phi_j=\phi_j+t_k\pi\\ 
    \forall j\in \neig{k}\cap\vero,\psi_j=\psi_j+t_k\pi. 
    \end{gathered}\]}
\EndIf

\If {$k\in O^c$}
    \If{$k\in\vera$}
\State{Computes $\phi'_k=(-1)^{s_{\invf{k}}}\phi_k + z(k)\pi$.}
\State{Computes $\theta'_k=\delta_k-\phi'_k$.}
    \ElsIf{$k\in\vero$}
\State{Computes $\psi'_k=(-1)^{\underline s_{\invf{k}}}\psi_k + \underline z(k)\pi$.}
\State{Computes $\theta'_k=\delta_k-\psi'_k$} 
    \EndIf

    \State{Applies $HZ_k(\theta'_k)$ to half-EPR-$k$ followed by measurement.}
\State{Stores the measurement outcomes as $r_k$.}
\State{Sets $s_k=\tilde s_k\oplus r_k$ and $\tilde{\underline s}_k=\tilde{\underline s}_k\oplus r_k$ if $k\in\vero$.}

\Else \Comment{$k\in O$}
\State{Corrects $\tr_{O\setminus k}[\rho^{out}_{\mathcal B}]$ with $X_k^{s_{\invf{k}}+t_k} Z_{k}^{z(k)+t(k)}$ and outputs it
    on Alice's interface.}
\EndIf

\EndFor
\EndFor

\end{algorithmic}
\end{protocol}

\end{appendix}

\end{document}